\DeclareMathAlphabet{\mathpzc}{OT1}{pzc}{m}{it}
\newtheorem{theorem}{Theorem}[chapter]
\newtheorem{lemma}{Lemma}[chapter]
\newtheorem{observation}{Observation}[chapter]
\newtheorem{definition}{Definition}[chapter]
\author{Name}
\date{}
\DeclareMathAlphabet{\mathpzc}{OT1}{pzc}{m}{it}
\begin{document}
\thispagestyle{empty}
\begin{large}
\begin{center}

{\bf Upper Bounds to Genome Rearrangement Problem using Prefix Transpositions}


\vspace{1.6cm}

A Thesis \\
\vspace{.2cm}
Submitted in partial fulfilment for the Degree of\\
\vspace{.2cm}
Doctor of Philosophy under the\\
\vspace{.2cm}
Faculty of Physical Sciences\\

\vspace{1.2cm}

by\\
\vspace{.1cm}
\textbf{PRAMOD P NAIR}\\

\vspace*{1.2cm}

\begin{figure}[h]
\center\includegraphics[width=12.58 cm]{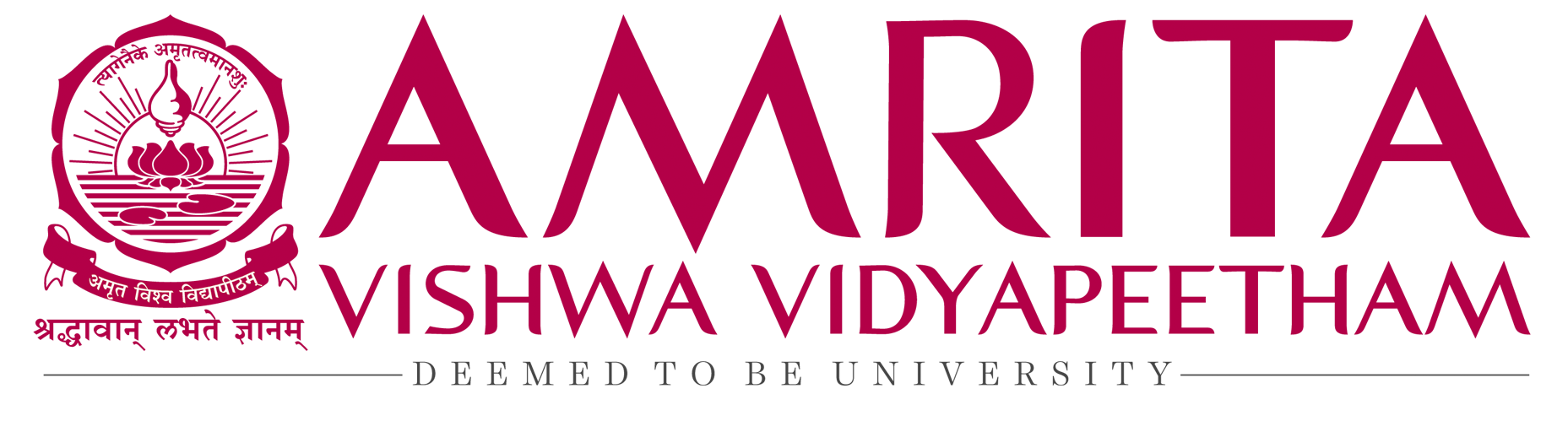}\\
\end{figure}

\vspace*{3cm}

\textbf{\small{ DEPARTMENT OF MATHEMATICS}}\\
\vspace*{.2cm}
\large {\textbf{AMRITA SCHOOL OF PHYSICAL SCIENCES}}\\
\vspace*{.2cm}
\Large{ \textbf{ AMRITA VISHWA VIDYAPEETHAM}}\\
\vspace*{.2cm}
\small{ AMRITAPURI} - 690 525 (\small{INDIA)}\\
\vspace*{.5cm}
\bf{\small{April, 2022}}
\end{center}
\end{large}

\baselineskip 1cm
\newtheorem{thm}{Theorem}[section]
\newtheorem{cor}{Corollary}[section]
\newtheorem{pro}{Proposition}[section]
\newtheorem{Lemma}{Lemma}[section]
\newtheorem{example}{Example}[section]
\newtheorem{rem}{Remark}[section]
\newtheorem{note}{Note}[section]
\newtheorem{defn}{Definition}[section]

\newpage
\thispagestyle{empty}

\begin{center}
{\large {\textbf{AMRITA SCHOOL OF PHYSICAL SCIENCES}}}\\
\vspace*{.18cm}
\Large{ \textbf{ AMRITA VISHWA VIDYAPEETHAM}}\\
\vspace*{.18cm}
\small{ AMRITAPURI} - 690 525\\

\vspace*{.6cm}

\begin{figure}[h]
\center\includegraphics*[width=12 cm]{logo.png}\\
\end{figure}

\vspace*{.6cm}

{\Large \textbf{BONAFIDE CERTIFICATE}}
\end{center}

\vspace*{.2cm}

\noindent This is to certify that the thesis entitled {\small\textbf{``Upper Bounds to Genome Rearrangement Problem using Prefix Transpositions"}} submitted by \textbf{Pramod P Nair, Register Number - AM.SC.D*MAT16223,} for the award of the \textbf{Degree of Doctor of Philosophy} under the \textbf{Faculty of Physical Sciences} is a bonafide record of the work carried out by him under my guidance and supervision at the Department of Mathematics, Amrita School of Physical Sciences, Amritapuri, Amritapuri.\\

\vspace*{2cm}

\baselineskip .5cm

\begin{center}
\textbf{Dr. Rajan Sundaravaradhan }\\
Thesis Advisor\\
Assistant Professor (Selection Grade)\\
Department of Mathematics\\
Amrita Vishwa Vidyapeetham, Amritapuri Campus\\
India\\
\end{center}

\newpage
\baselineskip 1cm
\thispagestyle{empty}

\begin{center}
\large {\textbf{AMRITA SCHOOL OF PHYSICAL SCIENCES}}\\
\vspace*{.2cm}
\Large{ \textbf{ AMRITA VISHWA VIDYAPEETHAM}}\\
\vspace*{.2cm}
\small{ AMRITAPURI} - 690525\\

\vspace*{1.3 cm}

\large {\textbf{DECLARATION}}

\end{center}

\vspace*{0.3cm}

\noindent{I, \textbf{Pramod P Nair, Register Number - AM.SC.D*MAT16223,} hereby declare that this thesis entitled \textbf{``Upper Bounds to Genome Rearrangement Problem using Prefix Transpositions",} is the record of the original work done by me under the guidance of \textbf{Dr. Rajan Sundaravaradhan,} Assistant Professor (Selection Grade), Department of Mathematics, Amrita Vishwa Vidyapeetham, Amritapuri. To the best of my knowledge, this work has not formed the basis for awarding any degree/diploma/ associateship/fellowship/or a similar award to any candidate in any University.}

\vspace*{1 cm}

\noindent{\textbf {Place:} Amritapuri \hfill \bf{Signature of the Student}}\\
\noindent{\textbf {Date:} 25.04.2022}

\vspace*{.4cm}

\begin{center}
\large {COUNTERSIGNED}\\

\vspace*{1.3cm}

\textbf{Dr. Rajan Sundaravaradhan }\\
Thesis Advisor\\
Assistant Professor (Selection Grade)\\
Department of Mathematics\\
Amrita Vishwa Vidyapeetham, Amritapuri Campus\\

\end{center}

\newpage
\baselineskip 1cm
\pagenumbering{roman}
\tableofcontents
\newpage

\section*{Acknowledgement}\addcontentsline{toc}{chapter}{Acknowledgement}
I am greatly indebted to my thesis advisor Dr. Rajan Sundaravaradhan, Assistant Professor (Selection Grade), Department of Mathematics, Amrita School of Physical Sciences, Amritapuri, for accepting me as his PhD student. His invaluable supervision, motivation, support, and tutelage during this period helped me select and work on this topic. I would like to express my wholehearted gratitude to him throughout my lifetime.

Dr. Bhadrachalam Chitturi, Associate Professor of Instruction, Department of Computer Science, University of Texas at Dallas, is an expert in my field of research. His knowledge and motivation deeply inspire me. I take this opportunity to thank him for his knowledgeable advice and meticulous scrutiny while publishing my research and writing the thesis.

I thank my doctoral committee members Dr. Ganesh Sundaram, Chairperson of Physics Department at Amritapuri Campus, Dr. Rajmohan Kombiyil, Assistant Professor at the Department of Physics, Amritapuri Campus; Dr. Georg Gutjahr, Assistant Professor at Amrita CREATE and Dr. Kurunandan Jain, Assistant Professor at Center for Cyber Security Systems and Networks, and for their insightful comments and encouragement.

I thank Dr. Narayanankutty Karuppath, Principal, School of Physical Sciences, Amritapuri, Dr. Ushakumari P V, Chairperson of Mathematics Department, and all the faculty members at the Department of Mathematics, Amrita School of Physical Sciences, Amritapuri for their support throughout this endeavour.

I would like to express my gratitude to my wife Soumya and my son Harshal. Without their unfailing love, tremendous understanding, and constant encouragement during the past few years, it would not have been possible for me to complete my study.

My parents, who imbibed moral values and social responsibilities in me, and my teachers, who instilled analytical thinking and professionalism, need a special mention. I thank them for what I am today.

Above all, I express my indebtedness to our beloved AMMA, Her Holiness Mata
Amritanandamayi Devi, Chancellor of Amrita Vishwa Vidyapeetham, for her bountiful
love and blessings, which helped me take up research at this university.


\baselineskip .7cm

\cleardoublepage
\addcontentsline{toc}{chapter}{\listfigurename}
\listoffigures

\cleardoublepage



\section*{{\huge List of Symbols}\hfill} \addcontentsline{toc}{chapter}{List of Symbols}
\vspace{1.2cm}
\begin{tabular}{|c|l|}
\hline
Symbols & Description\\
\hline
$\pi$ & permutation with $n$ symbols\\
$S_n$ & symmetric group with $n$ symbols\\
$I_n$ & identity permutation $(0,1,2,\dots, (n-2),(n-1))$\\
$R_n$ & reverse permutation $((n-1),(n-2),\dots, 2,1,0)$\\
$t_i$ & $i^\text{th}$ visited symbol in sequence length algorithm\\
$s_i$ & skipped symbol (if exists) that precedes $t_i$ in sequence length algorithm\\
$C$ & block of a permutation\\
\hline
\end{tabular}


\newpage

\chapter*{Abstract\hfill} \addcontentsline{toc}{chapter}{Abstract}

A Genome rearrangement problem studies large-scale mutations on a set of DNAs in living organisms. Various rearrangements like reversals, transpositions, translocations, fissions, fusions, and combinations and different variations have been studied extensively by computational biologists and computer scientists over the past four decades. From a mathematical point of view, a genome is represented by a permutation. The genome rearrangement problem is interpreted as a problem that transforms one permutation into another in a minimum number of moves under certain constraints depending on the chosen rearrangements. Finding the minimum number of moves is equivalent to sorting the permutation with the given rearrangement. A transposition is an operation on a permutation that moves a sublist of a permutation to a different position in the same permutation. A \emph{Prefix Transposition}, as the name suggests, is a transposition that moves a sublist which is a prefix of the permutation.

In this thesis, we study prefix transpositions on permutations and present a better upper bound for sorting permutations with prefix transpositions. A greedy algorithm called the \emph{generalised sequence length algorithm} is defined as an extension of the sequence length algorithm where suitable alternate moves are also considered. This algorithm is used to sequentially improve the upper bound to $n-\log_{3.3} n$ and $n-\log_3 n$. In the latter part of the thesis, we defined the concept of a \emph{block}. We used it along with the greedy moves of the generalised sequence length algorithm to get an upper bound of $n-\log_2 n$ to sort permutations by prefix transpositions.

\newpage
\baselineskip1cm
\pagenumbering{arabic}
\setcounter{page}{1}
\chapter{Introduction}
\section{Preliminaries}
A permutation $\pi$ on a set $S$ is an arrangement of the elements of $S$ in some order. Mathematically, a permutation is defined as a bijection on the set $S$. The collection of all permutations on a set with $n$ elements forms a group with operation composition. This group is called the symmetric group and is denoted as $S_n$. Permutations are used in many areas of pure sciences, medicine, and engineering. Due to the increased use of information transfers through the internet in the recent past, communication networks, cryptography, and network security systems use different permutations for performance evaluation and secure transfer. Permutations play a significant role in computer science in designing computer chips, data mining, and pattern analysis. Permutations also have applications in quantum physics for describing states of particles and computational biology for sequencing problems involving atoms, molecules, DNAs, genes, and proteins.

Genomes in a cell are mathematically modelled as permutations or strings. Operations like insertion, deletion, or substitution of a character in a string are called local operations. A global operation changes a substring in the string. A variety of local and global operations are defined in the literature. When all the characters in a string are distinct, we consider it a permutation. The distance between two permutations $\alpha$ and $\beta$ is defined as the number of operations required to transform $\alpha$ to $\beta$. This thesis studies a specific type of global operation called the prefix transposition and improves the upper bound to find the distance between any two permutations on $S_n$ using prefix transpositions.

\section{Motivation and Background} 
DNA sequencing is the process of finding the order of bases in the nucleotides of a cell. DNA sequencing techniques were initially developed in 1970. A well-known sequencing method was developed by Frederick Sanger \cite{sanger1977dna} in 1977. His sequencing method was used for the Human Genome Project, which determined the DNA sequence of the entire euchromatic human genome in 13 years. During this project, researchers sequenced all the 3.2 billion base pairs in the human genome. Since then, DNA sequencing has gained a lot of interest among researchers due to its impact on the study of evolution theory, medical diagnosis of genetic diseases, anthropology, and forensics. By 2000, many new DNA sequencing techniques were developed and implemented for commercial use. These are collectively called the next-generation sequencing (NGS) methods. NGSs fragment the DNA to the order of millions and sequence them at a cheaper and faster rate compared to the first-generation techniques. A variety of algorithms to analyse and interpret the enormous amount of data accumulated and sequenced by NGS has been proposed in the literature with improved computational complexities \cite{Li2006}.

A \emph{genome} is the set of all the chromosomes present in the nucleus of a cell of living organisms. Each of these chromosomes consists of DNA, which comprises two long sequences of nucleotides held together by hydrogen bonding. The nucleotides in each of the sequences are interrelated in a specific manner with four bases, namely - adenine(A), guanine(G), cytosine(C), thymine(T). The DNA replicates itself at a rate of about a thousand nucleotides per second to produce one genome from the other that is almost identical with a minimal level of inaccuracy. Though the rate of inaccuracy is only about one in $10^5$ to $10^9$ replications, it has been the root cause in the study of the theory of evolution, identification of genetic or hereditary diseases, and diagnosis of various characteristics in cultivated plants or bred animals. Dobzhansky and Sturtevant \cite{Dobzhansky1938}  published the first research article on genome rearrangement for molecular evolution in 1938. A review on the genome rearrangements in inherited diseases and cancer was published by Jian-Min Chen et al. \cite{chen2010genomic} in 2010. Kantar \cite{kantar2017genetics} has provided a summary on plant domestication due to genome sequencing. A detailed review of the technologies and applications in DNA sequencing over the past 40 years was presented by Shendure \cite{shendure2017dna} in 2017.

An alteration or change in the nucleotide sequence of a DNA or genome is called a \emph{mutation}. A new sequence that evolves by mutations at the level of nucleotides is called point
mutations. Mutations that occur at the level of chromosomes or genomes are called large-scale mutations. In a point mutation, the base of a nucleotide is either deleted, substituted by another base, or a new base is inserted. These types of mutations are detected and studied as \emph{sequence alignments} \cite{Jones2004}. Other sequences that evolve because of large-scale mutations are termed \emph{rearrangements}. A genome rearrangement usually occurs when a chromosome breaks at two or more locations (called the breakpoints), and the pieces are reassembled in a different order. In this case, we may also have deletion or insertion of a large number of chromosomes or movement of a section of the DNA to a different location. There are different types of rearrangements available in the literature, among which different variations of reversals, transpositions, translocations, fusion, fission, and their combinations are commonly encountered \cite{Li2006}. Reversals and Transpositions are rearrangements that are restricted to a single chromosome. In a reversal, a segment of the chromosome is reversed. Transpositions remove a segment of the chromosome and insert it in a different location. When a segment of one chromosome is exchanged with another chromosome, we get translocations. Fusion and fission occur when two chromosomes are joined into one or a chromosome is split into two. A pictorial representation of some common rearrangements is shown in figure \ref{types of GR}. 

\begin{figure}[h]
\centering
\includegraphics[scale=.5]{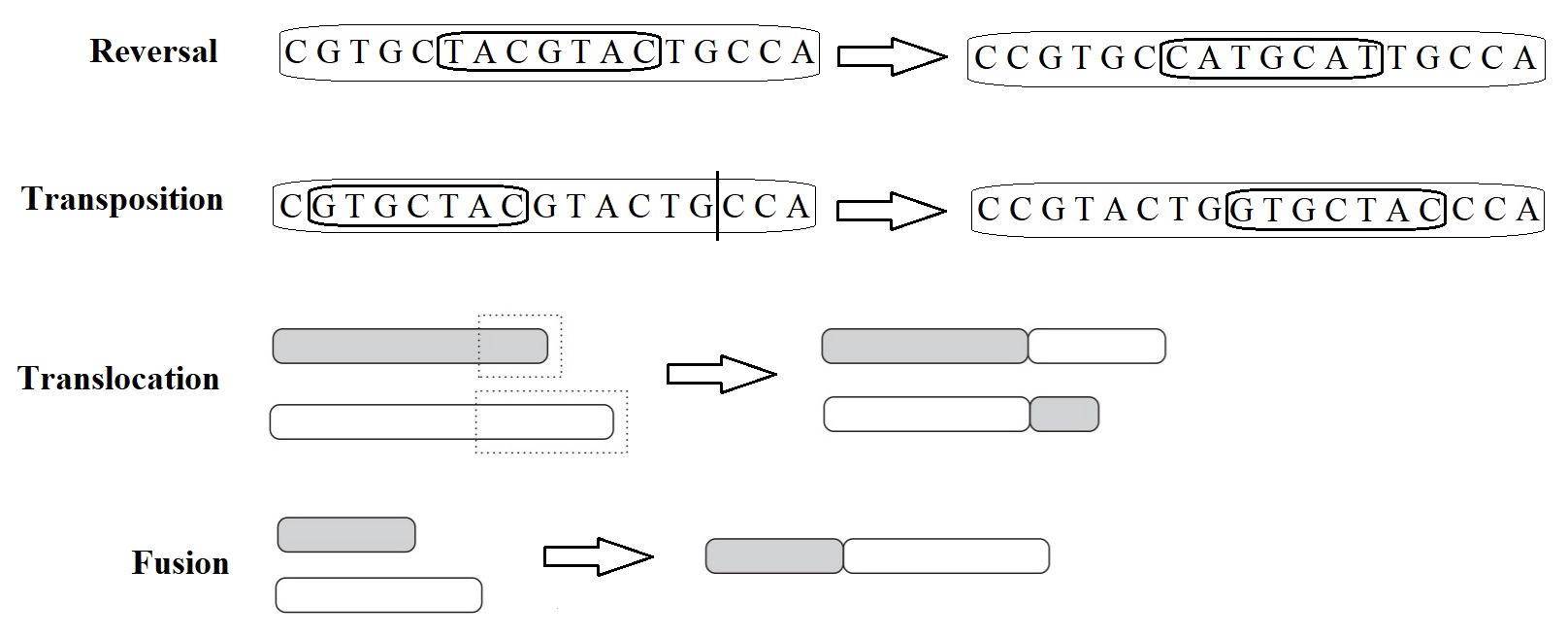}
\caption{Different types of genome rearrangements}
\label{types of GR}
\end{figure}

Detecting these rearrangements in different living organisms is termed as the \emph{genome rearrangement problem}. Nowadays, this term is used in a much broader sense and includes problems that are not directly motivated by molecular biology. For example, a block interchange that swaps two segments of a chromosome was defined and studied by Christie in his PhD thesis \cite{Christie1998} as a generalisation of transposition in 1998. It was later in 2005 that Lin \cite{lin2005efficient} showed that block-interchanges play a significant role in the evolution of Vibrio species. Of the several variants of genome rearrangement problems, the ‘prefix’ constraint, where the rearrangements are made on the prefix of a permutation, is the most common. Besides genome rearrangements, it finds practical applications in interconnection network design to reduce the value of the network's diameter and thus minimise the size of the network being generated. This method helps in reducing the maximum value for the delay in communications \cite{lakshmivarahan1993symmetry}. Prefix reversals \cite{gates1979bounds} and prefix transpositions \cite{Dias2002} have been studied extensively. Recently Labarre \cite{labarre2020sorting} studied sorting of prefix block-interchanges and provided a 2- approximation algorithm for the problem.

\section{Literature survey}
In 1982, Watterson \cite{Watterson1982} represented the position of genes in a genome by a permutation where a gene is represented as a symbol in a set and the chromosome is a permutation on the set. In his research article Watterson \cite{Watterson1982} considered the reversal distance problem for circular permutations, where the first gene is considered to be adjacent to the last gene. In this thesis, we denote a permutation $\pi$ over set $S={\{0,1,2,\dots,(n-1)\}}$ with $n$ symbols as $\pi=(\pi_1,\pi_2,\dots,\pi_n)$, where each $\pi_i \in S$. $(\pi_i,\pi_{i+1},\dots,\pi_j)$ is called a sublist of permutation $\pi$, where $1\le i\le j\le n$. The proposed genome rearrangement problem is to transform one permutation into another in a minimum number of operations under certain constraints depending on the rearrangements being considered. Transforming any permutation $\pi^{\star}\in S_n$ into any permutation $\pi^{\#}\in S_n$ is equivalent to sorting some $\pi'= ((\pi^{\#})^{-1})\pi^{\star}\in S_n$ \cite{Akers1989, lakshmivarahan1993symmetry}, i.e. transforming $\pi'$ to the identity permutation $I_n=(0,1,2,\dots,(n-1))$; here $\pi^{\star}$ is applied to the inverse of $\pi^{\#}$. Thus, in literature, the genome rearrangement problem is considered a sorting problem of permutations due to this property. The most common and widely studied genome rearrangements are reversals \cite{Watterson1982} and transpositions \cite{Bafna1998}. Another variant of these operations that allow combinations and weights for reversal and transpositions have been studied recently in \cite{Oliveira2019, alexandrino2020complexity}.

\subsection{Reversals}
A reversal operation $\alpha (i,j)$ on permutation $\pi=(\pi_1,\pi_2,\dots,\pi_n)$ is an operation that reverses the order of sublist $(\pi_i,\pi_{i+1},\ldots,\pi_j)$ in $\pi$. The reversal $\alpha (i,j)$ is denoted as

\begin{equation*}
\begin{gathered}
(\pi_1,\pi_2,\ldots,\pi_{i-1}[\pi_i,\pi_{i+1},\ldots,\pi_{j-1},\pi_j],\pi_{j+1},\ldots,\pi_n)\\
\rightarrow (\pi_1,\pi_2,\ldots,\pi_{i-1}[\pi_j,\pi_{j-1},\ldots,\pi_{i+1},\pi_i]\pi_{j+1},\ldots,\pi_n)
\end{gathered}
\end{equation*}
A prefix reversal is a particular case of reversal where $i=1$. If $\pi$ is a signed permutation, in addition to reversing the order of the sublist, the sign of each symbol in the sublist also changes. Given two permutations $\pi$ and $\pi^{\star}$ in $S_n$, the reversal distance between them is the minimum number of reversals required to transform $\pi$ to $\pi^{\star}$.
Even though finding the reversal distance of permutations was among the first type of sorting for genome arrangement problems, Kececioglu and Sankoff \cite{kececioglu1995exact}  published a significant result for sorting permutations by reversals in 1995. This paper provided the first approximation algorithm for sorting by reversals with a factor of 2 and identified some open problems related to chromosome rearrangements. Since then, reversals have been considered to be a core rearrangement in the field of computational biology. A variety of algorithms for sorting permutation by different types of reversals have been developed, of which prefix reversals \cite{chitturi200918}, signed reversals \cite{kaplan2000faster}, and some restricted reversals \cite{bender2008improved} have been widely studied in terms of their complexity and bounds. Some common reversals are shown in figure \ref{types of rev}.

\begin{figure}[h]
\centering
\includegraphics[scale=.7]{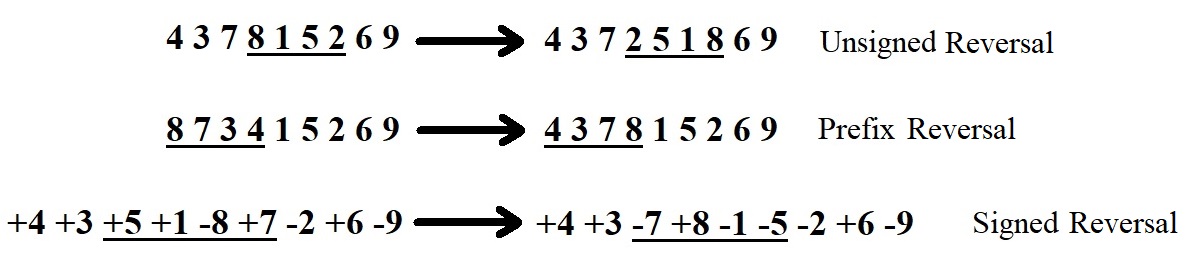}
\caption{Reversals}
\label{types of rev}
\end{figure}

If the orientation of the genes is considered, we use signed permutations to represent the genome. In this case, the reversal changes the sign of the gene while reversing the order. Signed permutations can be sorted using reversals in polynomial time \cite{hannenhalli1999transforming}, while Alberto Caprara \cite{Caprara1997} showed that the problem of sorting unsigned permutations by reversals is NP-hard. The best-known algorithm with a factor of 1.375 was proposed for this problem by Berman et al. \cite{berman20021}.

\subsection{Transpositions} 
A transposition $\alpha (i,j,k)$ is an operation on the symmetric group $S_n$ that transforms $\pi=(\pi_1,\pi_2,\dots,\pi_n)$ into another permutation by moving the sublist $(\pi_i,\pi_{i+1},\dots,\pi_{j-1})$ to the position between $\pi_{k-1}$ and $\pi_k$. We denote the transposition $\alpha (i,j,k)$ as 

\begin{equation*}
\begin{gathered}
(\pi_1,\pi_2,\dots,\pi_{i-1},[\pi_i,\dots,\pi_{j-1}],\pi_j,\dots,\pi_{k-1}* \pi_k,\dots,\pi_n) \\
\rightarrow (\pi_1,\pi_2,\dots,\pi_{i-1},\pi_j,\dots,\pi_{k-1},\pi_i,\dots,\pi_{j-1},\pi_k,\dots,\pi_n)\\
\end{gathered}
\end{equation*}
\\where the moved sublist is enclosed in parentheses and the destination position is marked with an asterisk \cite{Chitturi2015}. Given two permutations $\pi^{\#}$ and $\pi^{\star}$ in $S_n$, the transposition distance between them is the minimum number of transpositions required to transform $\pi^{\#}$ to $\pi^{\star}$. Bafna and Pevzner studied transpositions \cite{Bafna1998} and gave a $\frac{3}{2}$ approximation algorithm for sorting permutations by transpositions in 1998. They also provided a lower bound of $\lfloor{\frac{n}{2}\rfloor}+1$ and an upper bound of $\frac{3n}{4}$ for the transposition distance between two permutations. The upper bound was further improved to $\lfloor{\frac{2n-2}{3}\rfloor}$ by Eriksson et al. \cite{eriksson2001sorting} in 2001. This article also showed that the reverse order permutation $R_n=((n-1),(n-2),\dots,2,1,0)$ can be sorted in $\lceil{\frac{n+1}{2}\rceil}$. The best-known algorithm with a factor of 1.375, to sort permutations by transpositions, was proposed by Elias and Hartman \cite{Elias2006}. Transpositions have been studied extensively and several variations were explored. A prefix transposition is a special case of transposition where $i=1$ where the sublist of symbols up to $\pi_{j-1}$ is moved to the position before $\pi_k$ in $\pi$. An inverted transposition is a combination of transposition operation followed by reversal on the moved sublist. Heath and Vergara studied sorting permutations by bounded block moves \cite{heath1998sorting} whereas Feng et al. studied sorting permutations with cyclic adjacent transpositions \cite{Feng2010}. Examples of various transpositions and its generalization called block interchange is shown in figure \ref{types of trans}. The complexity class of sorting permutations by transpositions was proved to be NP-hard by giving a polynomial time reduction from SAT in 2012 \cite{Bulteau2012}.

\begin{figure}[h]
\centering
\includegraphics[scale=.7]{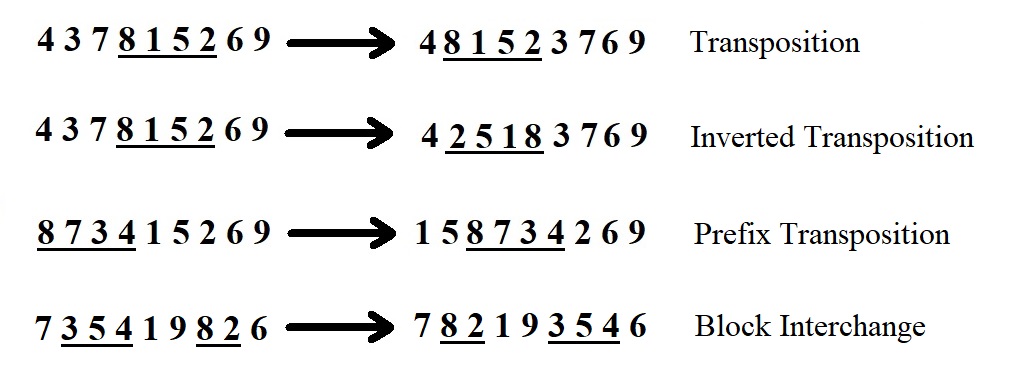}
\caption{Transpositions and Block interchange}
\label{types of trans}
\end{figure}

\section{Problem under investigation}
Transposable elements are segments of DNA that can move positions within a genome. They are present in all living organisms that are examined. The recombination of these elements in the genome results in various types of transpositions. Hence transpositions are considered to be one of the most common mutations that occur in living organisms. Prefix transpositions were discussed and studied initially in 2002 by Dias and Meidanis \cite{Dias2002} as a variation of the transposition problem. They provided lower and upper bounds of $\frac{n}{2}$ and $n-1$, respectively, using a 2-approximation algorithm for the problem. They also presented an algorithm that sorted the reverse permutation $R_n$ in $\frac{3n}{4}$ moves. It has been conjectured that a permutation with $n$ symbols can be sorted in $\frac{3n}{4}$ prefix transpositions \cite{Chitturi2008}. Over the past two decades, the upper and lower bounds to sort permutations by prefix transpositions have been improved to $n-\log_{(\frac{7}{2})}n$ \cite{Chitturi2015} and $\frac{3n}{4}$ \cite{labarre2008edit} respectively. We see that the lower bound has achieved the conjectured value, but there is a wide gap in terms of the upper bound. Hence, in this thesis, we introduce the generalised sequence length algorithm and blocks of a permutation to reduce the gap and provide a series of improvements on the upper bound.

In Chapter 2, we shall discuss prefix transpositions on permutations in detail and review the procedures that have been adopted to improve the upper bound to sort permutations by prefix transpositions from $n-1$ to $n-\log_{(\frac{7}{2})}n$. 

In Chapter 3, we provide an $n-\log_{(\frac{10}{3})}n$ upper bound to sort permutations by prefix transpositions which is a slight improvement to the previous best upper bound. Here we use the sequence length algorithm described in \cite{Chitturi2012} along with some additional prefix transpositions (called alternate moves).

In Chapter 4, we improve the upper bound further to $n-\log_{3}n$ using a technique similar to that in Chapter 3 by defining more alternate moves in different scenarios. We see that this method can be further implemented to improve the upper bound to at most $n-\log_{(1+\epsilon)}n$, but in doing so, the number of alternate moves that we need to find would be very high and thus make the proof very lengthy and complicated.

In Chapter 5, we propose the concept of a \emph{block} in a permutation and use it along with the generalised sequence length algorithm to improve the upper bound to sort a permutation with prefix transposition to $n-\log_{2}n$. The proof approach is entirely different from Chapters 3 and 4.

Conclusion and suggestions for further improvement of the upper bound are included in Chapter 6.

\chapter{Prefix Transpositions}

\section{Introduction}
A prefix transposition is an operation defined on the symmetric group $S_n$ that moves a sublist containing the first symbol of a permutation to a different position in the permutation. Hence it is a special case of transposition $\alpha (i,j,k)$ with $i=1$. We denote a prefix transposition as $\alpha (i,j), 1\le i<j \le n$, where the sublist $(\pi_1,\pi_2,\dots,\pi_{i-1})$ of the permutation $\pi=(\pi_1,\pi_2,\dots,\pi_n)$ is moved to a position between $\pi_{j-1}$ and $\pi_j$.

\begin{equation*}
\begin{gathered}
([\pi_1,\pi_2,\dots,\pi_{i-1}],\pi_i,\dots,\pi_{j-1}* \pi_j,\dots,\pi_n) \\
\rightarrow (\pi_i,\dots,\pi_{j-1},\pi_1,\pi_2,\dots,\pi_{i-1},\pi_j,\dots,\pi_n)
\end{gathered}
\end{equation*}
\\For example, $(2,5,4,3,7,1,6,0)$ in $S_8$ is transformed to $(3,7,1,2,5,4,6,0)$ using the prefix transposition $\alpha (4,7)$. The prefix transposition distance between two permutations $\pi^{\star}$ and $\pi^{\#}$ in $S_n$ is the minimum number of prefix transpositions that are needed to transform $\pi^{\star}$ into $\pi^{\#}$. An upper bound for the prefix transposition distance over all permutations in $S_n$ can be obtained by finding the upper bound to sort all permutations $\pi\in S_n$ by prefix transpositions.

Two consecutive symbols $\pi_i$ and $\pi_{i+1}$ in a permutation is said to form an \emph{adjacency} if $\pi_i+1=\pi_{i+1}\text{ (mod \emph{n})}$. By contrast, a \emph{break point} is a position $i$ in the permutation $\pi$ such that $\pi_i+1\ne \pi_{i+1}\text{ (mod \emph{n})}$. The identity permutation $I_n=(0,1,2,\dots,(n-1))$ has $n-1$ adjacencies and no breakpoints, whereas the reverse permutation $R_n=((n-1),(n-2),\dots,2,1,0)$ has $n-1$ breakpoints and no adjacencies. A transposition on a permutation can form a maximum of three adjacencies, but a prefix transposition can create only up to two adjacencies. We call a prefix transposition on a permutation as a \emph{move} in this thesis. A basic strategy to sort a permutation using prefix transpositions would be to create adjacencies in each move while reducing the number of break points. A move that does not create or destroy an adjacency is called a \emph{blank}. Moves that create one or two adjacencies each are called a \emph{single} and a \emph{double} respectively. A pictorial description of all these three types of moves is given in figure \ref{type of moves}. 

\begin{figure}[h]
\centering
\includegraphics[scale=.7]{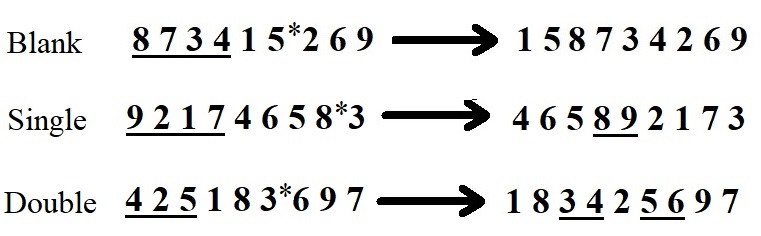}
\caption{Prefix transpositions on permutations}
\label{type of moves}
\end{figure}

For a non-identity permutation, a single is always possible, but a double may not be executable. For example, there is no prefix transposition on permutation $\pi'=(3,0,2,6,5,1,4) \in S_7$ that creates two adjacencies, whereas moving the sublist $(3,0)$ to a position before $1$ in $\pi'$ forms a single. Further, we observe that a single is not unique since moving the sublist $(3,0)$ to a position between $2$ and $6$ in $\pi'$ also forms a single, but if a double exists, it is unique.

A permutation that does not have an adjacency is said to be \emph{irreducible} or a \emph{reduced permutation}. For example, the permutation $\pi'$ defined in the previous paragraph is a reduced permutation. Any permutation $\pi^{\star}\in S_n$ that is not an irreducible can be reduced to $\pi'\in S_{n-k}$, by replacing a sublist of $k+1$ adjacent symbols in $\pi^{\star}$ by the least symbol (say $\pi_i$) in the sublist and replacing the symbols $\pi_j$ by $\pi_{j}-k$ whenever $\pi_j>\pi_i$. For example, the permutation $(4,6,1,2,3,0,5,7)\in S_8$ can be reduced to $(2,4,1,0,3,5) \in S_6$, where the sublist $(1,2,3)$ is replaced by $1$. Also, $(3,2,1,0,4) \in S_5$ can be reduced to $(3,2,1,0) \in S_4$. By Christe \cite{Christie1998}, it can be shown that sorting the permutation $\pi^{\star}\in S_n$ by prefix transpositions is equivalent to sorting the corresponding reduced permutation $\pi'\in S_{n-k}$. 

\section{Sorting Reverse Permutations}
A greedy method to sort a permutation is to create new adjacencies in each move. In 2002, Dias, Meidanis and Fortuna \cite{Dias2002} presented an algorithm to sort the reverse permutation $R_n=((n-1),(n-2),\dots,2,1,0)$ in $n-\lfloor{\frac{n}{4}\rfloor}$ prefix transpositions. They also conjectured that $n-\lfloor{\frac{n}{4}\rfloor}$ is the maximum prefix transposition distance among permutations in $S_n$. According to their algorithm, $R_8$ could be sorted in six moves. The procedure to sort $R_8$ is shown in figure \ref{r8}. Note that this algorithm did not follow a greedy approach as the first two moves of the algorithm are \emph{blanks}. The correctness of this algorithm was later proved by Vinicius Fortuna \cite{Fortuna2005} in 2005.

\begin{figure}[h]
\centering
\includegraphics[scale=.7]{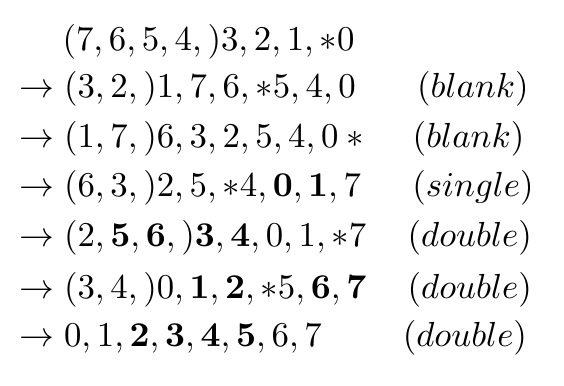}
\caption{Sorting $R_8$ in six moves}
\label{r8}
\end{figure}

Chitturi \cite{Chitturi2008} observed that the algorithm in \cite{Dias2002} to sort $R_8$ could be used for any permutation of the form $(7,\alpha,6,5,4,3,2,1,0,\beta)$, where $\alpha$ and $\beta$ are arbitrary sublists, to form seven adjacencies in six moves. This is achieved by considering the sublist $(7,\alpha)$ to be one symbol of the permutation and transforming the permutation to $(0,1,2,3,4,5,6,7,\alpha,\beta)$ in six prefix transpositions used in figure \ref{r8}. In general, we can use the algorithm by Dias et al. on a permutation of the form $R_8'=(7+i,\alpha,6+i,5+i,4+i,3+i,2+i,1+i,i,\beta)$ and make seven adjacencies in six moves.

\section{Sequence Length Algorithm}
Chitturi and Sudborough \cite{Chitturi2008} proposed the sequence length algorithm in 2008 to improve the upper bound to sort permutations using prefix transpositions. It is a greedy algorithm that performs on a reduced permutation and ensures that a single is made in each move until a double occurs. Once a double occurs, the permutation is further reduced, and the algorithm is used repeatedly until the permutation is sorted. Using this method, Chitturi et al. \cite{Chitturi2008} improved the upper bound to sort a permutation with 'n' symbols, from $n-1$ to $n-\log_8n$. All the further improvements on the upper bound to sort permutations with prefix transpositions use the sequence length algorithm.

Consider a reduced permutation $\pi=(\pi_1,\pi_2,\dots,\pi_n) \in S_n$. For any two symbols $x,y \in \pi$, the distance $dist(x,y)$ is defined as the number of positions to be traversed in the counter clockwise direction over the cyclic identity permutation $I_n$ to reach $y$ from $x$ \cite{Chitturi2012}. That is, $dist(x,y)=x-y\text{ (mod }n)$. The distance defined is clearly not symmetric as $dist(x,y)$ need not be equal to $dist(y,x)$. For example, from figure \ref{distance}, we see that for symbols in $S_8$, $dist(5,2)$ equals three whereas $dist(2,5)$ equals five.

\begin{figure}[h]
\centering
\includegraphics[scale=.7]{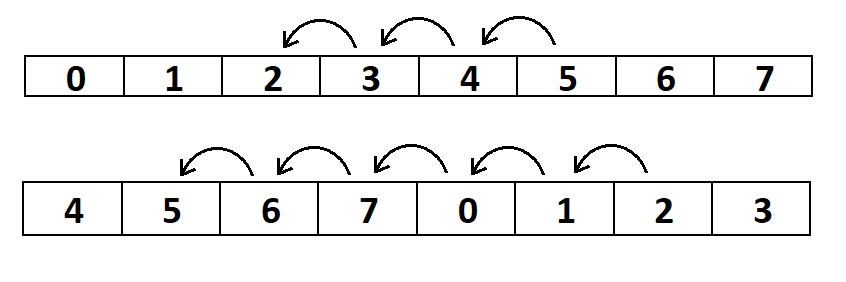}
\caption{Pictorial representation of $dist(5,2)$ and $dist(2,5)$ }
\label{distance}
\end{figure}

\subsection{Algorithm}
Consider a reduced permutation $\pi=(\pi_1,\pi_2,\dots,\pi_n) \in S_n$ and let $x$ and $y$ denote the symbols $\pi_{n-1}$ and $\pi_n$ respectively in the permutation. Note that $x \ne y-1\text{ (mod \emph{n})}$ since the permutation we consider is reduced. If $R_8'$ is a prefix of $\pi$, we use the algorithm by Dias et al. \cite{Dias2002} and stop. Otherwise, we denote the first symbol in $\pi$ as $t$ in each step and execute the following moves according to the conditions specified:\\

\emph{Case 1}: if $t=x+1\text{ (mod \emph{n})}$, the move 
\begin{equation*}
\begin{gathered}
([(x+1),\dots,(y-1)],\dots,x,*y)\rightarrow (\dots,x,(x+1),\dots,(y-1),y)
\end{gathered}
\end{equation*}
forms a double, and we stop the algorithm.

\emph{Case 2}: if $t=y+1\text{ (mod \emph{n})}$ and $x+1\text{ (mod \emph{n})}$ lies before $y-1\text{ (mod \emph{n})}$, let $s$ be the symbol in $\pi$ that immediately precedes $x+1\text{ (mod \emph{n})}$. Here we first perform the move given by
\begin{equation*}
\begin{gathered}
([(y+1),\dots,s],(x+1),\dots,(y-1),\dots,x,y*)\\
\rightarrow ((x+1),\dots,(y-1),\dots,x,y,(y+1),\dots,s)\\[3ex]
\end{gathered}
\end{equation*}
The next move follows from case 1 and forms a double in this case.

\emph{Case 3}: if $t=y+1\text{ (mod \emph{n})}$ and $x+1\text{ (mod \emph{n})}$ lies after $y-1\text{ (mod \emph{n})}$, let $s$ be the symbol in $\pi$ that immediately precedes $y-1\text{ (mod \emph{n})}$. Let $u$ be the symbol in sublist $((y+1),\dots,s)$ such that $dist(u,x+1\text{ (mod \emph{n})})=min\{dist(\pi_i,x+1\text{ (mod \emph{n})})|\pi_i \in ((y+1),\dots,s)\}$.\\
If $u \ne y+1$, let $u'$ be the symbol just before $u$ in $\pi$. Here we make a single using the move,
\begin{equation*}
\begin{gathered}
([(y+1),\dots,u'],u,\dots,s,(y-1),\dots,(x+1),\dots,x,y*)\\
\rightarrow (u,\dots,s,(y-1),\dots,(x+1),\dots,x,y,(y+1),\dots,u')\\[3ex]
\end{gathered}
\end{equation*}
If $u = y+1$, we make the move,
\begin{equation*}
\begin{gathered}
([(y+1),\dots,s],(y-1),\dots,(x+1),\dots,x,y*)\\
\rightarrow ((y-1),\dots,(x+1),\dots,x,y,(y+1),\dots,s)\\[3ex]
\end{gathered}
\end{equation*}
\emph{Case 4}: if $t \ne x+1\text{ (mod \emph{n})}$, $t \ne y+1\text{ (mod \emph{n})}$ and $t-1\text{ (mod \emph{n})}$ lies after $x+1\text{ (mod \emph{n})}$, then let $s$ be the symbol just before $x+1\text{ (mod \emph{n})}$ in $\pi$. In this case we make the move
\begin{equation*}
\begin{gathered}
([t,\dots,s],(x+1),\dots,(t-1)*,\dots,x,y)\rightarrow ((x+1),\dots,(t-1),t,\dots,s,\dots,x,y)
\end{gathered}
\end{equation*}
that creates an adjacency (single) and makes $x+1\text{ (mod \emph{n})}$ the first symbol in the transformed permutation. Note that we make this move even when the symbol $t-1=x+1\text{ (mod \emph{n})}$.

\emph{Case 5}: if $t \ne x+1\text{ (mod \emph{n})}$, $t \ne y+1\text{ (mod \emph{n})}$ and $t-1\text{ (mod \emph{n})}$ occurs before $x+1\text{ (mod \emph{n})}$, then let $s$ be the symbol just before $t-1\text{ (mod \emph{n})}$ in $\pi$. Let $u$ be the symbol in sublist $(t,\dots,s)$ such that $dist(u,x+1\text{ (mod \emph{n})})=min\{dist(\pi_i,x+1\text{ (mod \emph{n})})|\pi_i \in (t,\dots,s)\}$. Let $u'$ be the symbol just before $u$ in $\pi$. Here we make the move,
\begin{equation*}
\begin{gathered}
([t,\dots,u'],u,\dots,s,(t-1)*,\dots,(x+1),\dots,x,y)\\
\rightarrow (u,\dots,s,(t-1),t,\dots,u',\dots,(x+1),\dots,x,y)\\[3ex]
\end{gathered}
\end{equation*}
The symbols in $\pi$ that becomes the first symbol during the execution of the sequence length algorithm are called \emph{visited} symbols and they are denoted as $t_1, t_2,\dots,t_k$. The last visited symbol $t_k=x+1\text{ (mod \emph{n})}$. The symbols (except $x$ and $y$) that are not visited are called \emph{unvisited} or \emph{skipped} symbols. In 2012, Chitturi \cite{Chitturi2012} used the sequence length algorithm to relabel the symbols of $\pi \in S_n$ as 
\begin{equation*}
\begin{gathered}
(t_1,\dots,s_2,t_2,\dots,s_3,t_3,\dots,t_{i-1},\dots, s_i,t_i,\dots,x,y)\\[3ex]
\end{gathered}
\end{equation*}
We shall use this notation of $\pi$ throughout the thesis. Here $s_i$ is the skipped symbol (if it exists) that immediately precedes $t_i$. Note that the subscript $i$ of $s_i$ depends on the subscript of the next visited symbol $t_i$ and does not guarantee that $s_{i-1}$ exists. The sublist from $t_{i-1}$ to the symbol before $t_i$ in $\pi$ is called the $i^{th}$ \emph{interval} of the permutation. To illustrate the sequence length algorithm, we consider a permutation $\pi=(12, 14, 13, 8, 11, 9, 3, 6, 5, 1, 4, 7, 2, 0, 10) \in S_{15}$. Here $x=0$ and $y=10$. The visited symbols are $t_1=12, t_2=8, t_3=1$ and the regular greedy moves of the sequence length algorithm are
\begin{align*}
&([12, 14, 13], 8, 11,* 9, 3, 6, 5, 1, 4, 7, 2, 0, 10) &\\ 
\rightarrow &([8, \mathbf{11}, \mathbf{12}, 14, 13, 9, 3, 6, 5], 1, 4, 7,* 2, 0, 10) &(single)\\
\rightarrow &([1, 4, \mathbf{7}, \mathbf{8}, 11,12, 14, 13, 9], 3, 6, 5, 2, 0,* 10) &(single)\\
\rightarrow &(3, 6, 5, 2,\mathbf{ 0}, \mathbf{1}, 4, 7, 8, 11,12, 14, 13, \mathbf{9}, \mathbf{10}) &(double)
\end{align*}
Chitturi \cite{Chitturi2012} stated and proved that $\pi=(t_1,\dots,s_2,t_2,\dots,s_3,t_3,\dots,s_k,t_k,\dots,x,y)$, satisfies the following conditions:
\begin{enumerate}
\item The last visited symbol $t_k=x+1\text{ (mod \emph{n})}$.
\item $dist(t_{i+1},x)<dist(t_i,x)$, for all $i$. Further, for every skipped symbol $u$ between $t_i$ and $t_{i+1}$, $dist(u,x)>dist(t_{i+1},x)$.
\item For every $i$, the symbol $t_i-1$ lies to the right of $s_{i+1}$.
\item When permutation $\pi$ transforms into $\pi^{\star}=(t_i,\dots,s_{i+1},t_{i+1},\dots,s_k,t_k,\dots,x,y)$ by using the regular greedy moves of sequence length algorithm, the symbols to the left of $t_i$ in $\pi$ are positioned in some intervals of $\pi^{\star}$. Hence the visited and skipped symbols that are moved to a different location in the previous moves shall not become the first symbol in any of the later permutations. 
\end{enumerate}

\section{Recurrence relation and Recursive formula}
Let $\pi=(t_1,\dots,s_2,t_2,\dots,s_3,t_3,\dots,s_k,t_k,\dots,x,y)\in S_n$. Chitturi and Sudborough \cite{Chitturi2008} proved that the maximum number of moves to obtain a double in sequence length algorithm is $\frac{7(n-3)}{8}$ and the number of visited symbols in an irreducible permutation without any skipped symbol in between them is at most seven. So, if we assume that the prefix of $\pi$ is not $R_8'$ and the $i^\text {th}$ interval is the first interval to have a skipped symbol, then $i \le 8$. If we execute $R_8'$, we obtain seven adjacencies in six moves, and this is equivalent to getting a double in six moves of the sequence length algorithm. Once a double is encountered, we reduce the given permutation and continue to use the sequence length algorithm recursively until we sort the permutation.

\subsection{Recurrence relation}
Suppose that the sequence length algorithm is executed on a reduced permutation $\pi \in S_n$. Here we shall derive a recurrence relation on the number of moves required to sort a permutation with the sequence length algorithm. For $0 \le \alpha \le 1$, assume that $\alpha n$ symbols are visited and $(1-\alpha) n$ symbols are skipped, until a double is encountered. Then we have created $\alpha n+1$ adjacencies and hence resultant reduced permutation after the sequence length algorithm will contain $(n-\alpha n-1)$ symbols. The recurrence relation is given by $T(n) =T(n-\alpha n-1)+\alpha n$. Clearly this recurrence relation is bounded by the recurrence $T(n) =T(n-\alpha n)+\alpha n-1$ as a single is always possible for a permutation and this reduces the number of symbols in the reduced permutation. Solving the recurrence with $T[1]=1$, we have:\\
$T[n] =T[(1-\alpha )n]-(1-\alpha) n+n-1$\\
$T[n] =\{T[(1-\alpha )^{2}n]-(1-\alpha)^{2} n+(1-\alpha) n-1\}-(1-\alpha) n+n-1$\\
$T[n] =T[(1-\alpha )^{2}n]+n-2$\\
$T[n] =T[(1-\alpha )^{3}n]+n-3$\\
$\dots$\\
$T[n] =T[(1-\alpha )^{k}n]+n-k$\\
Now $(1-\alpha )^{k}n=1\implies n=(\frac{1}{1-\alpha})^{k} \implies k=\log_{(\frac{1}{1-\alpha})}n$\\
Hence when $k=\log_{(\frac{1}{1-\alpha})}n$, the equation $T[n] =T[(1-\alpha )^{k}n]+n-k$ becomes\\
$T[n] =T[1]+n-\log_{(\frac{1}{1-\alpha})}n$.

Thus, an upper bound to sort a permutation $\pi \in S_n$ using the sequence length algorithm is
$T(n) =n-\log_{\beta}n$ , where $\beta=(\frac{1}{1-\alpha })$

\subsection{Recursive formula}\label{Recursive formula}
This section will derive a recursive formula to find an upper bound to sort permutations by prefix transpositions. This formula uses the recurrence relation stated above. It is defined in terms of the number of symbols moved to a different position and the number of symbols skipped in the process. Thus it gives us the insight to approach better bounds. 
\begin{lemma}\label{rf}
Let $x$ be the number of visited symbols and $y$ be the number of skipped symbols that are encountered in the execution of the sequence length algorithm on a reduced permutation $\pi \in S_n$. Then $n-\log_{(\frac{x+ y}{ y})}n$ is an upper bound to sort $\pi$ with prefix transpositions
\end{lemma}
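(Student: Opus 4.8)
The plan is to recognize that this lemma is simply a reparametrization of the recurrence-relation result already established, namely that the sequence length algorithm sorts $\pi \in S_n$ in at most $T(n) = n - \log_\beta n$ moves, where $\beta = \frac{1}{1-\alpha}$ and $\alpha$ is the fraction of symbols that are visited (as opposed to skipped) before a double is encountered. The entire content of the lemma is to rewrite $\beta$ in terms of the raw counts $x$ and $y$ rather than the fraction $\alpha$, so that later chapters can reason directly about how many symbols they manage to visit versus skip.

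First I would identify $\alpha$ explicitly. Since $x$ symbols are visited and $y$ symbols are skipped, the total number of symbols processed (either visited or skipped) is $x+y$, so the fraction of visited symbols is $\alpha = \frac{x}{x+y}$, and correspondingly $1-\alpha = \frac{y}{x+y}$. Substituting into $\beta = \frac{1}{1-\alpha}$ gives
\[
\beta = \frac{1}{\,y/(x+y)\,} = \frac{x+y}{y}.
\]
Feeding this value of $\beta$ into the bound $T(n) = n - \log_\beta n$ from the recurrence relation immediately yields $T(n) = n - \log_{(x+y)/y}\, n$, which is precisely the claimed upper bound.

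The step I expect to require the most care is justifying that the $\alpha$ appearing in the recurrence relation coincides with $\frac{x}{x+y}$ consistently across the recursion. The recurrence was derived under the assumption that a constant fraction $\alpha$ of the symbols is visited in each round before a double occurs, so to invoke it cleanly I must argue that counting $x$ visited and $y$ skipped symbols is compatible with this per-round fraction — that is, that the ratio of visited to skipped symbols is exactly what governs the multiplicative factor $(1-\alpha)$ by which the permutation length shrinks at each stage, and that the single extra adjacency contributed by each double is already absorbed into the established bound (recall the derivation passed from $T(n)=T(n-\alpha n-1)+\alpha n$ to the dominating recurrence $T(n)=T(n-\alpha n)+\alpha n-1$). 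Once this identification is secured, the remainder is the routine algebraic simplification above, and the conclusion follows directly from the already-established solution of the recurrence with $T[1]=1$.
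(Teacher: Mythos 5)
Your proposal is correct and follows essentially the same route as the paper: both identify $x=\alpha n$, $y=(1-\alpha)n$, deduce $\alpha=\frac{x}{x+y}$ and hence $\frac{1}{1-\alpha}=\frac{x+y}{y}$, and then substitute into the already-derived bound $n-\log_{(\frac{1}{1-\alpha})}n$. Your extra remark about checking that the per-round visited/skipped ratio is what governs the recurrence is a reasonable caution, but the paper treats this as already settled by the recurrence derivation and does not elaborate further.
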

\begin{proof}
From the recurrence relation in the previous section, we know that an upper bound to sort $\pi \in S_n$ using the sequence length algorithm is $n-\log_{(\frac{1}{1-\alpha})}n$, where $0 \le \alpha \le 1$ and $\alpha n$ symbols are visited and $(1-\alpha) n$ symbols are skipped, until a double is encountered. Here we have $x=\alpha n$ and $y=(1-\alpha) n$.

\begin{equation*}
\begin{gathered}
\frac{x}{y}=\frac{\alpha n}{(1-\alpha) n}\Rightarrow \alpha =\frac{x}{x+ y}\Rightarrow \frac{1}{1-\alpha}=\frac{x+ y}{ y}\\[3ex]
\end{gathered}
\end{equation*}
Hence the base of the logarithm in the upper bound is $\frac{x+ y}{ y}$, which is the ratio of the total number of symbols moved by the number of symbols skipped until a double is encountered.
\end{proof}

\chapter{Improved Upper Bound to Sort Permutations using Prefix Transpositions}
\section{Introduction}
Dias and Meidanis \cite{Dias2002} in 2002 studied prefix transpositions. They provided the first results on the problem of sorting permutations with the minimum number of prefix transpositions. This problem was considered as a variation of the transposition distance problem. In their research paper \cite{Dias2002}, they provided lower and upper bounds of $\frac{n}{2}$ and $n-1$, respectively, to sort $\pi \in S_n$ with prefix transpositions, using a 2-approximation algorithm. In 2008, Chitturi and Sudborough \cite{Chitturi2008} used the sequence length algorithm to improve the upper bounds to $n-\log_{8}n$. The lower bound was improved to $\frac{3n}{4}$ by Labarre \cite{labarre2008edit}. The upper bounds were further improved by Chitturi to $n-\log_{(\frac{9}{2})}n$ \cite{Chitturi2012} and $n-\log_{(\frac{7}{2})}n$ \cite{Chitturi2015} in 2012 and 2015. In this chapter, we shall improve the upper bound to $n-\log_{(\frac{10}{3})}n$ \cite{nair2020improved} by introducing some alternate moves in addition to the regular greedy moves of the sequence length algorithm.

\section{Basic Notations and Overview}
Let $\pi=(t_1,\dots,s_2,t_2,\dots,s_3,t_3,\dots,s_k,t_k,\dots,x,y)$ be a permutation in $S_n$ which does not contain $R_8'$ as a prefix. The $(i-1)^\text {th}$ regular greedy move of the sequence length algorithm is given by

\begin{equation*}
\begin{gathered}
([t_{i-1},\dots,s_i],t_i,\dots,(t_{i-1}-1)*,\dots,x,y)\\
\rightarrow (t_i,\dots,(t_{i-1}-1),t_{i-1},\dots,s_i,\dots,x,y)\\[3ex]
\end{gathered}
\end{equation*}
Note that $t_{i-1}-1$ can be the next visited symbol $t_i$ or a skipped symbol after $t_i$. These moves are executed until $x+1$ becomes the first symbol. The last move is a double given by
\begin{equation*}
\begin{gathered}
([x+1,\dots,y-1],\dots,x,*y)\rightarrow (\dots,x,x+1,\dots,y-1,y)
\end{gathered}
\end{equation*}
\begin{definition}
A prefix transposition is called an alternate move if it is not a regular greedy move of the sequence length algorithm and creates at least one adjacency in the permutation.
\end{definition}
For example, in the permutation $\pi=(12, 14, 13, 8, 11, 9, 3, 6, 5, 1, 4, 7, 2, 0, 10)$, which is used to illustrate the sequence length algorithm in Section 2.3.1, an alternate move that creates a double is given by

\begin{equation*}
\begin{gathered}
([12, 14, 13, 8], 11,* 9, 3, 6, 5, 1, 4, 7, 2, 0, 10)\\ 
\rightarrow (\mathbf{11}, \mathbf{12}, 14, 13, \mathbf{8}, \mathbf{9}, 3, 6, 5, 1, 4, 7, 2, 0, 10)\\[3ex]
\end{gathered}
\end{equation*}

The basic principle used in the sequence length algorithm is to obtain a double preceded by singles in at most $\frac{7n}{8}$ moves and thus sorting the permutation faster. We define a \emph{generalised sequence length algorithm} which is a modified version of the sequence length algorithm. In this algorithm, we shall introduce some additional alternate moves to the sequence length algorithm by \cite{Chitturi2008}, thus enabling us to get a double faster. The generalised sequence length algorithm follows the same principle as the original algorithm, creating a single in each move until a double is obtained. Note that in the generalised algorithm, we may move more than one visited symbol in a move, and thus a visited symbol becomes a skipped symbol. We shall also find instances where a skipped symbol in the sequence length algorithm becomes a visited symbol in the generalised algorithm. As this algorithm is a generalisation of the sequence length algorithm, the recursive formula in Lemma \ref{rf} is applicable here too. All the lemmas and observations in the following chapters assumes that $\pi=(t_1,t_2,\dots,s_i,t_i,\dots,s_k,t_k,\dots,x,y)$ with visited and skipped symbols of the sequence length algorithm. The number of skipped symbols counted after alternate moves corresponds to the symbols skipped in the generalised sequence length algorithm. This thesis uses the terms "Sequence Length algorithm" and "Generalized Sequence Length algorithm" interchangeably, depending on the context. And the term \emph{skipped symbol} refers to a skipped symbol in either the sequence length algorithm or the generalised sequence length algorithm.

We shall assume that the $i^\text {th}$ interval is the first interval in $\pi$ to have a skipped symbol. Then by Chitturi \cite{Chitturi2008}, $i\le 8$. Furthermore, if $i\le 3$, we skip at least one symbol in two moves by the regular moves of the sequence length algorithm. This produces an upper bound of $n-\log_{3}n$ by the recursive formula for sequence length algorithm discussed in Section \ref{Recursive formula}.
\begin{observation}\label{obs3.1}
If $\pi=(t_1,t_2,\dots,s_i,t_i,\dots,s_k,t_k,\dots,x,y)$ has more than two skipped symbols in the $i^\text {th}$ interval, then an upper bound for sorting permutations by prefix transpositions is $n-\log_{(\frac{10}{3})}n$. 
\end{observation}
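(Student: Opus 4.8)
The plan is to push everything through the recursive formula of Lemma~\ref{rf}. If a single phase of the generalised sequence length algorithm (a run that ends in a double) visits $v$ symbols and skips $s$ symbols, that lemma gives the upper bound $n-\log_{(v+s)/s}n$. Since $t\mapsto n-\log_t n$ is increasing for $t>1$, this is at most the claimed $n-\log_{10/3}n$ precisely when the base satisfies $(v+s)/s\le 10/3$, i.e. when $3v\le 7s$. So the whole task reduces to exhibiting, for a permutation whose $i$-th interval carries at least three skipped symbols, a run of the algorithm with $3v\le 7s$.

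The hypothesis hands me the skips: three skipped symbols inside interval $i$ gives $s\ge 3$ for any run that halts by the end of that interval, so $7s\ge 21$ and it suffices to cap the visited symbols at $v\le 7$. I would first clear the small cases. If $i\le 3$ then, as already observed before the statement, the plain greedy moves skip a symbol within the first two moves and yield $n-\log_3 n$; since $3\le 10/3$ this is already at least as strong as required, so I may assume $4\le i\le 8$. Because interval $i$ is the \emph{first} interval containing a skip, the symbols $t_1,\dots,t_{i-1}$ are consecutive visited symbols with no skip between them, and Chitturi's cap of seven consecutive visited symbols forces $i-1\le 7$. Thus exactly $i-1\le 7$ symbols have been visited at the moment interval $i$ is entered.

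It then remains to end the phase right there, before a further symbol is visited. I would argue that three skipped symbols inside interval $i$ always make available an \emph{alternate move} producing a double immediately --- the same device illustrated for the sample permutation, where enlarging the moved prefix closes one adjacency at the insertion point and a second at the new front. Granting such a double, the phase terminates with $v\le i-1\le 7$ and $s\ge 3$, whence $3v\le 21\le 7s$; the base is at most $10/3$ and Lemma~\ref{rf} delivers $n-\log_{10/3}n$.

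I expect the construction of this alternate double to be the main obstacle, the counting around it being routine. Producing it means locating, among the at-least-three skipped symbols of interval $i$ and their visited neighbours, a prefix whose transposition simultaneously aligns two successor pairs. The admissible positions of those successors on the cyclic identity are constrained by the monotonicity $dist(t_{j+1},x)<dist(t_j,x)$ and by the placement rule that $t_j-1$ lies to the right of $s_{j+1}$ (conditions~2 and~3 of Chitturi's list), so I would settle existence by a short case analysis on where the relevant successors fall relative to $x+1$ and $y-1$.
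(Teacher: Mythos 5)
Your counting skeleton coincides with the paper's: since the $i$-th interval is the first one containing a skipped symbol, $i\le 8$, so the first $i-1$ regular greedy moves involve $i-1\le 7$ visited symbols together with the at-least-three skipped symbols of that interval, giving a moved-to-skipped ratio of at most $\frac{i+2}{3}\le\frac{10}{3}$, and Lemma \ref{rf} converts this into the stated bound. But you then graft onto this an extra requirement --- that the run must ``end right there'' in a double, which you propose to manufacture as an alternate move out of the three skipped symbols --- and you correctly flag that construction as the main obstacle. It is in fact a fatal one: three skipped symbols in the $i$-th interval do not guarantee that any prefix transposition creates two adjacencies. A double needs a cut point $j$ for which $\pi_{j-1}+1$ sits immediately after $\pi_1-1$ in the permutation, and nothing in conditions 2 and 3 of Chitturi's list forces such a coincidence; the paper itself records that ``a double may not be executable,'' and the entire content of Lemmas \ref{l3.1}--\ref{l3.3} is the careful selection of \emph{singles} in configurations with only one or two skips, precisely because doubles are not freely available. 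The example you invoke (the move on $(12,14,13,8,11,\dots)$) works only because of numerical coincidences in that particular permutation. So the case analysis you defer to cannot be completed.

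The repair is simply to delete that step. Lemma \ref{rf} is applied to the ratio of symbols moved to symbols skipped that the (generalised) sequence length algorithm accumulates; it does not require that a double occur at the moment you take stock. The paper's proof of Observation \ref{obs3.1} is exactly your first two paragraphs and nothing more: in $i-1$ regular greedy moves at least $i+2$ symbols are moved, at least three of them skipped, and the base $\frac{i+2}{3}$ is maximised to $\frac{10}{3}$ at $i=8$.
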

\begin{proof}
Suppose $\pi$ has at least three skipped symbols in the $i^\text {th}$ interval. Then in $i-1$ regular greedy moves of the sequence length algorithm, we move at least $i+2$ symbols among which at least three are skipped. Hence by the recursive formula (Section \ref{Recursive formula}), the upper bound is given by $n-\log_{(\frac{i+2}{3})}n$ which maximises the base of the logarithm to $\frac{10}{3}$ when $i=8$. Thus $n-\log_{(\frac{10}{3})}n$ is an upper bound.
\end{proof}

If the $i^\text {th}$ interval contains more than one skipped symbol, then the first skipped symbol shall be denoted by $c$. Further $t_{i-1}+1 \ne c$ and $s_i+1 \ne t_i$ since the permutation we consider is reduced. By the construction of sequence length algorithm both the skipped symbols $s_i$ and $c$ are greater than $t_{i-1}$ and so the visited symbol $t_{i-1}$ shall be denoted as $s_i-l$, for some $l\ge 1$. Thus permutation $\pi$ is denoted as $(t_1,t_2,\dots,(s_i-l),c,s_i,t_i,\dots,s_k,t_k,\dots,x,y)$.

The procedure that we adopt to improve the upper bound is to check for the symbol $s_i+1$ in $\pi=(t_1,t_2,\dots,(s_i-l),c,s_i,t_i,\dots,s_k,t_k,\dots,x,y)$ and define suitable alternate moves to get an upper bound less than or equal to $n-\log_{(\frac{10}{3})}n$. Here we consider the cases where $s_i+1$ lies to the right of $s_i$, $s_i+1$ is a skipped symbol in the $i^\text {th}$ interval ($s_i+1=c$), and $s_i+1$ lies to the left of $s_i-l$ (note that $s_i-l \ne s_i+1$, for then $s_i$ would be a visited symbol in $\pi$). Thus, combining these results with Observation \ref{obs3.1}, we claim an upper bound of $n-\log_{(\frac{10}{3})}n$ to sort permutations by prefix transpositions.

\section{Proposed Algorithm}
\begin{lemma}\label{l3.1}
Let $\pi=(t_1,t_2,\dots,(s_i-l),\dots,s_i,t_i,\dots,s_k,t_k,\dots,x,y)$ be a permutation in which $s_i+1$ lies to the right of $s_i$. Then the upper bound for sorting $\pi$ is $n-\log_{(\frac{4}{3})}n$.
\end{lemma}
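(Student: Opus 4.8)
The engine of the argument is the recursive formula of Lemma \ref{rf}: if in the phase leading to the first double the generalised algorithm visits $V$ symbols and skips $S$ symbols, the resulting bound is $n-\log_{(\frac{V+S}{S})}n$. To reach the base $\frac{4}{3}$ it therefore suffices to create a double while skipping at least three symbols for every symbol charged as visited. The plan is to exploit the hypothesis that $s_i+1$ sits to the right of $s_i$ in order to relocate the entire prefix up to $s_i$ in a single alternate move, so that the symbols $t_1,\dots,t_{i-1}$ and $c$, which the sequence length algorithm would have visited one at a time, are instead carried along as one block and thereby reclassified from visited to skipped.

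Concretely, I would first perform the alternate move that takes the prefix $(t_1,\dots,t_{i-1},c,s_i)$ and inserts it immediately before $s_i+1$. Since $s_i$ is the last symbol of the moved block and $s_i+1$ lies to its right, this lands $s_i$ directly in front of $s_i+1$ and creates the adjacency $(s_i,s_i+1)$, while $t_i$ is exposed as the new first symbol. I would then secure a companion adjacency so that the net effect is a double; the two natural candidates are the junction created just ahead of the relocated block (which becomes an adjacency exactly when the symbol preceding $s_i+1$ equals $t_1-1$) and an adjacency produced by the newly exposed front symbol $t_i$ with its successor. Deciding which of these is available is where the structural conditions of the sequence length algorithm enter: conditions (2) and (3) pin down where $t_{i-1}-1$ and the smaller-distance symbols lie, while the reduced-ness of $\pi$ together with $s_i+1\ne t_i$ guarantees that $s_i+1$ is at least two positions to the right, so the move is well defined.

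With the double in hand I would count the phase. The single alternate move carries the block $(t_1,\dots,t_{i-1},c,s_i)$, so the symbols $t_1,\dots,t_{i-1}$ and $c$ interior to it are reclassified as skipped while only a bounded number of symbols are charged as visited, and $i\le 8$ keeps the block length controlled. Feeding the resulting ratio $\frac{V+S}{S}\le\frac{4}{3}$ into Lemma \ref{rf} then yields $n-\log_{(\frac{4}{3})}n$, exactly as Observation \ref{obs3.1} feeds its own skip ratio into the same formula.

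The main obstacle is guaranteeing the \emph{second} adjacency uniformly. The hypothesis fixes only the position of $s_i+1$; the positions of $t_1-1$, $s_i-1$ and $c\pm 1$ are free, so the double cannot be forced by one rigid move, and I expect a case split according to where these symbols fall, possibly replacing the single move by a two-move sequence that first realises $(s_i,s_i+1)$ and then extracts a second adjacency from the exposed prefix. A secondary obstacle, needed to make the skip count legitimate, is to verify through condition (4) that the symbols already relocated never resurface as the first symbol in a later step, together with the small-$i$ and $l=1$ edge cases where the block may be too short to attain the full $3{:}1$ skip ratio.
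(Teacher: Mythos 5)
Your opening move is exactly the one the paper uses: since $s_i+1$ lies to the right of $s_i$, it lies to the right of $t_i$ (it cannot be $t_i$ in a reduced permutation, nor a later visited symbol), so the whole prefix $(t_1,\dots,(s_i-l),\dots,s_i)$ can be inserted just before $s_i+1$, creating the adjacency $(s_i,s_i+1)$ and exposing $t_i$. Where you go wrong is in then insisting that this phase must end in a \emph{double} and flagging the ``companion adjacency'' as the main unresolved obstacle. That obstacle is illusory, and since you leave it open, your proposal as written does not close the proof. Lemma \ref{rf} charges symbols, not adjacencies: the bound $n-\log_{(\frac{x+y}{y})}n$ depends only on the ratio of visited to skipped symbols accumulated over the run, and the move above is counted as an ordinary single. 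It relocates at least $i$ symbols, of which only one is charged as visited and at least $i-1$ are skipped ($t_2,\dots,t_{i-1}$ being reclassified from visited to skipped --- note $c$ and $s_i$ were already skipped, so your phrasing that $c$ is ``reclassified'' is off), giving base $\frac{i}{i-1}\le\frac{4}{3}$ for $i\ge 4$; the case $i\le 3$ is disposed of separately before Observation \ref{obs3.1}. No case split on the positions of $t_1-1$, $s_i-1$ or $c\pm 1$ is needed, and the eventual double arrives later in the phase through the regular greedy moves as usual.

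Two smaller corrections to your accounting. First, the remark that ``$i\le 8$ keeps the block length controlled'' points the wrong way: a longer prefix only improves the ratio $\frac{i}{i-1}$, so the binding constraint is $i\ge 4$, not $i\le 8$; the upper limit on $i$ plays no role in this lemma. Second, your target of ``three skips per visited symbol'' is the right arithmetic for base $\frac{4}{3}$, but it is already delivered by the single move once $i\ge 4$; there is nothing left to extract from a second adjacency.
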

\begin{proof}
Suppose that $s_i+1$ lies to the right of $s_i$. Then by sequence length algorithm $s_i+1$ is a skipped symbol and hence lies to the right of $t_i$. Then the permutation $\pi$ is in the form $(t_1,t_2,\dots,(s_i-l),\dots,s_i,t_i,\dots,(s_i+1),\dots)$. The move

\begin{equation*}
\begin{gathered}
([t_1,t_2,\dots,(s_i-l),\dots,s_i],t_i,\dots,*(s_i+1),\dots)\\
\rightarrow (t_i,\dots,t_1,t_2,\dots,(s_i-l),\dots,s_i,(s_i+1),\dots)\\[3ex]
\end{gathered}
\end{equation*}
moves at least $i$ symbols of which $i-1$ are skipped. By Lemma \ref{rf}, this produces an upper bound of $n-\log_{(\frac{i}{i-1})}n$ which maximises the base of the logarithm to $\frac{4}{3}$ when $i=4$. Thus $n-\log_{(\frac{4}{3})}n$ is an upper bound.
\end{proof}
Now we shall consider the case when $s_i+1$ lies to the left of $s_i$. Here $s_i+1$ can either be a skipped symbol in the $i^\text {th}$ interval or a visited symbol to the left of $s_i-l$. 

\begin{lemma}\label{l3.2}
Let $\pi=(t_1,t_2,\dots,(s_i-l),\dots,s_i,t_i,\dots,s_k,t_k,\dots,x,y)$ be a permutation in which $s_i+1$ is a skipped symbol in the $i^\text {th}$ interval. Then the upper bound for sorting $\pi$ is $n-\log_{(\frac{10}{3})}n$.
\end{lemma}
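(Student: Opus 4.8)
The plan is to stay inside the recursive-formula framework of Lemma~\ref{rf}: it suffices to exhibit a short sequence of moves on the first $i$ intervals whose ratio of relocated symbols to skipped symbols is at most $\tfrac{10}{3}$, with the extremal case occurring at $i=8$. I may assume $i\ge 4$ (the case $i\le 3$ already yields $n-\log_3 n$) and, by Observation~\ref{obs3.1}, that the $i^{\text{th}}$ interval contains exactly the two skipped symbols $c=s_i+1$ and $s_i$, so that
$$\pi=(t_1,\dots,t_{i-1},(s_i+1),s_i,t_i,\dots,s_k,t_k,\dots,x,y),\qquad t_{i-1}=s_i-l,\ l\ge 1.$$
A purely regular execution here would visit $t_1,\dots,t_{i-1}$ in $i-1$ moves while skipping only the two symbols $c,s_i$, giving the naive base $\tfrac{i+1}{2}$, which is $\tfrac92$ at $i=8$ and hence exceeds $\tfrac{10}{3}$. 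So the whole point is to use the reversed adjacency $(s_i+1,s_i)$ to manufacture either an extra skip or a \emph{double}, so that the count behaves like the three-skip situation of Observation~\ref{obs3.1} and the base drops to $\tfrac{i+2}{3}\le\tfrac{10}{3}$.

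First I would follow the template of Lemma~\ref{l3.1} and locate the successor of $c$, namely $s_i+2=c+1$, splitting on its position exactly as the top-level analysis split on $s_i+1$. In the favourable sub-case, where $s_i+2$ lies to the right of $s_i$, the alternate move that sends the prefix $(t_1,\dots,t_{i-1},s_i+1)$ to the position immediately before $s_i+2$ creates the adjacency $(s_i+1,s_i+2)$ while relocating the $i-1$ previously visited symbols $t_1,\dots,t_{i-1}$ as skips in a single move; moreover this becomes a double precisely when the symbol preceding $s_i+2$ equals $t_1-1$, since then $t_1$ is simultaneously joined to its predecessor. I would then account separately for the now-exposed front symbol $s_i$, and verify that even in the single (non-double) outcome the combined tally of relocated versus skipped symbols over the short sequence still yields base at most $\tfrac{i+2}{3}$.

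The hard part will be the unfavourable sub-cases — when $s_i+2$ sits inside the $i^{\text{th}}$ interval or to the left of $t_{i-1}$ — together with the intrinsic obstruction that a single prefix transposition cannot swap the reversed-adjacent pair $(s_i+1,s_i)$ into the adjacency $(s_i,s_i+1)$. In those configurations no one-shot big move both creates an adjacency and absorbs the pair, so I expect to need a second alternate move (for instance one that first brings $s_i$ forward and then exploits $t_{i-1}-1$, which by condition~3 of the sequence length algorithm lies to the right of $s_i$), while being careful by condition~4 not to re-visit symbols already settled. The decisive check is that across every positional sub-case the worst ratio is attained only at $i=8$ and equals exactly $\tfrac{10}{3}$; this saturation is what prevents a bound stronger than $n-\log_{(\frac{10}{3})}n$ here, and verifying it uniformly over all sub-cases is where most of the work lies.
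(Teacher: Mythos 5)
Your setup and first split are the paper's: you invoke Observation~\ref{obs3.1} to reduce to the case where $s_i+1$ and $s_i$ are the only skipped symbols in the $i^{\text{th}}$ interval, and you then pivot on the position of $s_i+2$, handling the case where $s_i+2$ lies to the right of $t_i$ with the same two-move sequence (move the prefix $(t_1,\dots,s_i+1)$ before $s_i+2$, then move the exposed $[s_i]$ before $s_i+1$), which gives base $\tfrac{i+1}{i-1}\le\tfrac53$. But the proposal has a genuine gap exactly where you say the hard work lies: when $s_i+2$ is a visited symbol to the left of $t_{i-1}=s_i-l$, you offer only the vague plan of ``a second alternate move'' that ``first brings $s_i$ forward,'' and that mechanism does not exist as stated --- a prefix transposition can only move a prefix, so $s_i$ cannot be brought forward until everything before it has already been relocated, at which point the naive tally is still stuck at base $\tfrac{i+1}{2}=\tfrac92$ at $i=8$. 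The paper closes this case by introducing a second pivot symbol, $s_i-l+1$ (the successor of the last visited symbol before the interval), and splitting on its position: if it lies right of $t_i$, three alternate moves relocate $i+1$ symbols with $i-2$ skips (base $\le\tfrac52$); if it is a visited symbol left of $s_i-l$, one runs the regular greedy moves until $s_i+2$ reaches the front and then two alternate moves that skip three symbols in at most $i-2$ moves (base $\tfrac{i+1}{3}\le 3$); and if $s_i-l+1=s_i$ (forcing $l=1$), the greedy moves followed by a single alternate move produce a \emph{double}, yielding the $\tfrac{3n}{4}$ count. Without identifying $s_i-l+1$ as the object to case on, none of your sub-cases in this branch actually achieves a ratio of $\tfrac{10}{3}$ or better.

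A smaller inaccuracy: you assert that the worst ratio across the sub-cases is attained at $i=8$ and ``equals exactly $\tfrac{10}{3}$.'' In the paper's proof of this lemma the sub-cases themselves give bases $\tfrac53$, $\tfrac52$, $3$, and $\tfrac{3n}{4}$; the value $\tfrac{10}{3}$ enters only through the three-skipped-symbols reduction of Observation~\ref{obs3.1}, so there is no saturation phenomenon inside the lemma to verify.
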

\begin{proof}
If the $i^\text {th}$ interval has a third skipped symbol other than $s_i$ and $s_i+1$, then by Observation \ref{obs3.1}, the upper bound is $n-\log_{(\frac{10}{3})}n$. So, we shall consider the case where $s_i$ and $s_i+1$ are the only skipped symbols in the $ i^\text {th}$ interval. Here the permutation $\pi$ equals $(t_1,t_2,\dots,(s_i-l),s_i+1,s_i,t_i,\dots,s_k,t_k,\dots,x,y)$. Now we consider the various positions of $s_i+2$ in the permutation. Clearly $s_i+2$ is either a skipped symbol that lies to the right of $t_i$ or it’s a visited symbol that lies to the left of $s_i-l$.\\

\emph{Case 1}: If $s_i+2$ lies to the right of $t_i$, the moves

\begin{equation*}
\begin{gathered}
([t_1,t_2,\dots,(s_i-l),(s_i+1)],s_i,t_i,\dots,*(s_i+2),\dots)\\
\rightarrow (s_i,t_i,\dots,t_1,t_2,\dots,(s_i-l),\dots,(s_i+1),(s_i+2),\dots)\\
([s_i],t_i,\dots,t_1,t_2,\dots,(s_i-l),\dots,*(s_i+1),(s_i+2),\dots)\\
\rightarrow (t_i,\dots,t_1,t_2,\dots,(s_i-l),\dots,s_i,(s_i+1),(s_i+2),\dots)\\[3ex]
\end{gathered}
\end{equation*}
together move $i+1$ symbols of which $i-1$ are skipped. By Lemma \ref{rf}, this produces an upper bound of $n-\log_{(\frac{i+1}{i-1})}n$ which maximises the base of logarithm to $n-\log_{(\frac{5}{3})}n$ when $i=4$.\\

\emph{Case 2}: If $s_i+2$ is a visited symbol that lies to the left of $s_i-l$. Then the permutation $\pi$ equals $(t_1,t_2,\dots,(s_i+2),\dots,(s_i-l),s_i+1,s_i,t_i,\dots,x,y)$. Consider the position of the symbol $(s_i-l+1)$ in $\pi$. If $(s_i-l+1)$ is a visited symbol, it lies just before $s_i-l$ in $\pi$. If $(s_i-l+1)$ is a skipped symbol, then it will either lie to the right of $t_i$ or in the $i^\text {th}$ interval. We shall consider each of these cases.

\emph{Case 2.1}: If $(s_i-l+1)$ lies to the right of $t_i$ then, the following 3 alternate moves

\begin{equation*}
\begin{gathered}
([t_1,\dots,(s_i+2),\dots,(s_i-l)],(s_i+1),s_i,t_i,\dots,*(s_i-l+1),\dots)\\
\rightarrow ((s_i+1),s_i,t_i,\dots,t_1,\dots,(s_i+2),\dots,(s_i-l)(s_i-l+1),\dots)\\
([(s_i+1)],s_i,t_i,\dots,*(s_i+2),\dots,(s_i-l)(s_i-l+1),\dots)\\
\rightarrow (s_i,t_i,\dots,(s_i+1),(s_i+2),\dots,(s_i-l)(s_i-l+1),\dots)\\
([s_i],t_i,\dots,*(s_i+1),(s_i+2),\dots,(s_i-l)(s_i-l+1),\dots)\\
\rightarrow (t_i,\dots,s_i,(s_i+1),(s_i+2),\dots,(s_i-l)(s_i-l+1),\dots)\\[3ex]
\end{gathered}
\end{equation*}
will move $i+1$ symbols of which $i-2$ are skipped. Thus, by Lemma \ref{rf}, the upper bound is $n-\log_{(\frac{i+1}{i-2})}n$ which maximises the base of the logarithm to $\frac{5}{2}$ when $i=4$. Thus $n-\log_{(\frac{5}{2})}n$ is an upper bound.

\emph{Case 2.2}: If $(s_i-l+1)$ lies to the left of $s_i-l$ then, we shall do the regular greedy moves of sequence length algorithm until $s_i+2$ becomes the first symbol. The next 2 moves are as follows:
\begin{equation*}
\begin{gathered}
([(s_i+2),\dots,(s_i-l+1)],(s_i-l),(s_i+1)*s_i,t_i,\dots)\\
\rightarrow((s_i-l),(s_i+1),(s_i+2),\dots,(s_i-l+1),s_i,t_i,\dots)\\
([(s_i-l),(s_i+1),\dots,s_i],t_i,\dots,(s_i-l-1)*,\dots)\\
\rightarrow (t_i,\dots,(s_i-l-1)(s_i-l),(s_i+1),\dots,s_i,\dots,)\\[3ex]
\end{gathered}
\end{equation*}
In this set of at most $i-2$ moves we move $i+1$ symbols of which at least three are skipped. Thus, by Lemma \ref{rf}, the upper bound is $n-\log_{(\frac{i+1}{3})}n$ which maximises the base of the logarithm to $3$ when $i=8$. Thus $n-\log_{3}n$ is an upper bound. Note that these moves can be executed even if $t_i=(s_i-l-1)$.

\emph{Case 2.3}: If $(s_i-l+1)$ is a skipped symbol in the $i^\text {th}$ interval, then $(s_i-l+1)=s_i \Rightarrow s_i-l=s_i-1$, since the number of skipped symbols is at most two. Then the permutation $\pi=(t_1,t_2,\dots,(s_i+2),\dots,(s_i-1),s_i+1,s_i,t_i,\dots,x,y)$. Here we do the regular greedy moves until $s_i+2$ becomes the first symbol. The next alternate move 
\begin{equation*}
\begin{gathered}
([(s_i+2),\dots,(s_i-1)](s_i+1)*s_i,t_i,\dots)\\
\rightarrow((s_i+1),(s_i+2),\dots,(s_i-1),s_i,t_i,\dots)\\[3ex]
\end{gathered}
\end{equation*}
forms a double in at most $i-2$ moves. Hence, we form eight adjacencies in at most six moves which gives an upper bound of $\frac{3n}{4}$

Hence from all the cases discussed above, the upper bound for sorting $\pi$ when $s_i+1$ is a skipped symbol in the $i^\text {th}$ interval is $n-\log_{(\frac{10}{3})}n$.
\end{proof}
\begin{lemma}\label{l3.3}
Let $\pi=(t_1,t_2,\dots,(s_i-l),\dots,s_i,t_i,\dots,s_k,t_k,\dots,x,y)$ be a permutation in which $s_i+1$ is a visited symbol that lies to the left of $s_i-l$. Then the upper bound for sorting is $n-\log_{(\frac{10}{3})}n$.
\end{lemma}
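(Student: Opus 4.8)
The plan is to follow the architecture of Lemma~\ref{l3.2}. First I would invoke Observation~\ref{obs3.1} to dispose of the case of three or more skipped symbols in the $i^\text{th}$ interval, which already delivers $n-\log_{(\frac{10}{3})}n$, so that I may assume $c$ and $s_i$ are the only skipped symbols there and write $\pi=(t_1,t_2,\dots,(s_i+1),\dots,(s_i-l),c,s_i,t_i,\dots,x,y)$, with $s_i+1$ a visited symbol lying to the left of $t_{i-1}=s_i-l$. Throughout I use that intervals $1,\dots,i-1$ carry no skipped symbols, so $t_1,\dots,t_{i-1}$ are consecutive visited symbols, and that $i\ge 4$ (the case $i\le 3$ is already settled in the overview). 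Every branch ends by applying the recursive formula of Lemma~\ref{rf}: I count the total number of relocated symbols $x+y$ and how many of them end up skipped, $y$, and read off the base $\frac{x+y}{y}$. I would then split on the position of $s_i+2$.

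The easy branch is when $s_i+2$ lies to the right of $s_i$. Here two bulk moves finish the job: schematically,
\begin{equation*}
\begin{gathered}
([t_1,\dots,(s_i+1)],\dots,s_i,t_i,\dots,*(s_i+2),\dots)\\
\rightarrow(\dots,(s_i-l),c,s_i,t_i,\dots,t_1,\dots,(s_i+1),(s_i+2),\dots)\\
([\dots,(s_i-l),c,s_i],t_i,\dots,t_1,\dots,*(s_i+1),(s_i+2),\dots)\\
\rightarrow(t_i,\dots,t_1,\dots,(s_i-l),c,s_i,(s_i+1),(s_i+2),\dots)
\end{gathered}
\end{equation*}
The first move creates $(s_i+1,s_i+2)$ and carries the block ending in $s_i$ to the front; the second creates $(s_i,s_i+1)$. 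Since the $i+1$ symbols $t_1,\dots,t_{i-1},c,s_i$ are relocated in only two moves, $i-1$ of them are reclassified as skipped, and Lemma~\ref{rf} gives base $\frac{i+1}{i-1}\le\frac53$, hence a bound no worse than $n-\log_{(\frac{10}{3})}n$.

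The hard branch is when $s_i+2$ does not lie to the right of $s_i$; then $s_i+2$ is either the interval symbol $c$ or a visited symbol to the left of $s_i+1$. This is the genuine difficulty: with $s_i+2$ and $s_i+1$ both to the left (in that order) and $s_i$ inside the interval, no single prefix transposition can place $s_i$ immediately before $s_i+1$ (the target would lie inside the moved prefix), so I cannot simply imitate the easy branch. I would handle it exactly as Case~2 of Lemma~\ref{l3.2}: introduce the successor $s_i-l+1$ of $t_{i-1}$ and branch on whether it lies to the right of $t_i$, to the left of $s_i-l$, or is itself a skipped symbol of the interval. In each sub-branch I run the regular greedy moves until the relevant left symbol reaches the front, then append one, two, or three alternate moves forming $(s_i+1,s_i+2)$, $(s_i,s_i+1)$, and where possible an adjacency at $s_i-l$. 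The counts should land so that the worst sub-branch relocates about $i+2$ symbols of which three are skipped, giving base $\frac{i+2}{3}$, maximized at $\frac{10}{3}$ for $i=8$.

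The main obstacle is the bookkeeping in the hard branch, where two things must be checked simultaneously and are easy to get wrong. First, because $s_i+1$ sits to the left, the obvious ``place $s_i$ before $s_i+1$'' move is illegal, so each alternate move must be arranged so that the symbol carried to the front by an earlier move lands to the \emph{left} of its successor before the adjacency-forming move is applied; I must verify that every constructed move actually creates an adjacency and destroys none of those created earlier. Second, with only two skipped symbols available inside the interval, a third skipped symbol must be manufactured by reclassification---by carrying several originally visited symbols in a single bulk move---and I must confirm that at least three symbols are skipped in every sub-branch, since this is precisely what pins the base at $\frac{10}{3}$ rather than something larger. Verifying these two points across all sub-branches, while keeping the move sequences valid in the degenerate situations (such as $t_i=s_i-l-1$ or $s_i-l=s_i-1$), is where the real work lies.
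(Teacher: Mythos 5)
Your overall architecture is the same as the paper's: reduce to at most two skipped symbols in the $i^{\text{th}}$ interval via Observation \ref{obs3.1}, then produce explicit alternate-move sequences and read off the base $\frac{x+y}{y}$ from Lemma \ref{rf}. Your ``easy branch'' (when $s_i+2$ lies to the right of $s_i$) is a correct and self-contained addition: the two bulk moves you write down are legal singles, they relocate $i+1$ symbols while visiting only two of them, and the resulting base $\frac{i+1}{i-1}\le\frac{5}{3}$ is comfortably below $\frac{10}{3}$. (Two small points there: if the interval happens to contain only the one skipped symbol $s_i$ and no $c$, the count drops to $i$ moved and $i-2$ skipped, base $2$, still fine; you should say so rather than assume $c$ exists.)

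The genuine gap is in your ``hard branch,'' which is where essentially the entire content of the paper's proof of this lemma lives. You assert it can be handled ``exactly as Case~2 of Lemma~\ref{l3.2},'' but that case's move sequences rely on $s_i+1$ sitting \emph{inside} the $i^{\text{th}}$ interval immediately before $s_i$ (so that, e.g., the double of Case~2.3 is obtained by dropping the block $(s_i+2),\dots,(s_i-1)$ in front of the pair $(s_i+1),s_i$); in the present lemma $s_i+1$ is a visited symbol far to the left, so none of those sequences transfer literally. The paper's actual proof does not aim at the adjacency $(s_i+1,s_i+2)$ at all in most sub-cases: it branches directly on $(s_i-l+1)$, then on $(s_i-l+2)$ when $(s_i-l+1)$ is visited (forming adjacencies at $(s_i-l+1,s_i-l+2)$, $(s_i,s_i+1)$ and $(s_i-l-1,s_i-l)$ with bases $2$ and $\frac{i}{3}\le\frac{8}{3}$), and, in the sub-case where $(s_i-l+1)$ is skipped --- which forces $l=1$ --- on the position of $c+1$, yielding either base $\frac{5}{3}$ or an outright double worth $\frac{5n}{7}$. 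Your sketch names the three-way split on $(s_i-l+1)$ and the target count ``$i+2$ moved, three skipped,'' but it constructs no moves, does not address the $l=1$/$c+1$ sub-analysis at all, and the two verification obligations you correctly flag (no adjacency-forming move is blocked by the left position of $s_i+1$; at least three symbols end up skipped in every sub-branch) are precisely the things that must be discharged case by case. As written, the hard branch is a plan rather than a proof.
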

\begin{proof}
From the assumptions in the Lemma, the permutation is of the form $\pi=(t_1,t_2,\dots,(s_i+1),\dots,(s_i-l),\dots,s_i,t_i,\dots,x,y)$. Here we shall consider the position of the symbol $(s_i-l+1)$ in $\pi$.

\emph{Case 1}: If $(s_i-l+1)$ lies to the right of $t_i$ then by case 2.1 in Lemma \ref{l3.2}, the upper bound is $n-\log_{(\frac{5}{2})}n$.\\

\emph{Case 2}: If $(s_i-l+1)$ lies to the left of $s_i-l$, then $\pi=(t_1,t_2,\dots,(s_i+1),\dots,(s_i-l+1),(s_i-l),\dots,s_i,t_i,\dots,x,y)$. Here we consider the position of $(s_i-l+2)$ in $\pi$.

\emph{Case 2.1}: If $(s_i-l+2)$ lies to the right of $s_i-l$, then it is a skipped symbol. In the following two moves

\begin{equation*}
\begin{gathered}
([t_1,\dots,(s_i-l+1)],(s_i-l),\dots,s_i,t_i,\dots,*(s_i-l+2),\dots)\\
\rightarrow ((s_i-l),\dots,s_i,t_i,\dots,t_1,\dots,(s_i-l+1),(s_i-l+2),\dots)\\
([(s_i-l),\dots,s_i],t_i,\dots,(s_i-l-1)*,\dots)\\
\rightarrow(t_i,\dots,(s_i-l-1),(s_i-l),\dots,s_i,\dots)\\[3ex]
\end{gathered}
\end{equation*}
at least $i$ symbols are moved of which $i-2$ are skipped. By Lemma \ref{rf}, this produces an upper bound of $n-\log_{(\frac{i}{i-2})}n$ which maximises the base of the logarithm to $2$ when $i=4$. Thus $n-\log_{2}n$ is an upper bound. Note that these moves can be executed even if $t_i=(s_i-l-1)$.

\emph{Case 2.2}: If $(s_i-l+2)$ lies to the left of $s_i-l$, then it is a visited symbol and lies to the left of $(s_i-l+1)$ in $\pi$ (otherwise $(s_i-l+2)$ would be a skipped symbol in the $ (i-1)^\text {th}$ interval, which is a contradiction to our assumption that the first skipped symbol lies in the $i^\text {th}$ interval). Hence, the permutation takes the form $\pi=(t_1,t_2,\dots,(s_i+1),\dots,(s_i-l+2),(s_i-l+1),(s_i-l),\dots,s_i,t_i,\dots,x,y)$. Here we do the regular greedy moves until $s_i+1$ becomes the first symbol. The next 2 moves are given by 
\begin{equation*}
\begin{gathered}
([(s_i+1),\dots,(s_i-l+2),(s_i-l+1)],(s_i-l),\dots,s_i,*t_i,\dots)\\
\rightarrow ((s_i-l),\dots,s_i,(s_i+1),\dots,(s_i-l+2),(s_i-l+1),t_i,\dots)\\
([(s_i-l),\dots,(s_i-l+1)],t_i,\dots,(s_i-l-1)*,\dots)\\
\rightarrow (t_i,\dots,(s_i-l-1),(s_i-l),\dots,(s_i-l+1),\dots)\\[3ex]
\end{gathered}
\end{equation*}
These sequences of moves move at least $i$ symbols of which at least three are skipped ($s_i$ and at least two visited symbols are skipped). By Lemma \ref{rf}, the upper bound is $n-\log_{(\frac{i}{3})}n$ which maximises the base of logarithm to $n-\log_{(\frac{8}{3})}n$ when $i=8$. Note that these moves can be executed even if $t_i=(s_i-l-1)$.\\

\emph{Case 3}: If $(s_i-l+1)$ is a skipped symbol in the $i^\text {th}$ interval, then it is not the first skipped symbol in the interval, (if so, then $s_i-l$ and $(s_i-l+1)$ would form an adjacency in $\pi$). If $(s_i-l+1)$ is a skipped symbol other than $c$ and $s_i$, then we have three skipped symbols in the $i^\text {th}$ interval, and by Observation \ref{obs3.1}, the upper bound is $n-\log_{(\frac{10}{3})}n$. Now we shall suppose that $(s_i-l+1)= s_i \Rightarrow s_i-l=s_i-1$. Then $\pi=(t_1,t_2,\dots,(s_i+1),\dots,(s_i-1),c,s_i,t_i,\dots,x,y)$. Here we shall consider the position of the symbol $c+1$ in $\pi$.

\emph{Case 3.1}: If $c+1$ lies to the right of $t_1$, then in the following two moves
\begin{equation*}
\begin{gathered}
([t_1,t_2,\dots,(s_i+1),\dots,(s_i-1),c],s_i,t_i,\dots,*(c+1))\\
\rightarrow (s_i,t_i,\dots,t_1,t_2,\dots,(s_i+1),\dots,(s_i-1),c,(c+1))\\
\end{gathered}
\end{equation*}
\begin{equation*}
\begin{gathered}
([s_i],t_i,\dots,t_1,t_2,\dots,*(s_i+1),\dots,(s_i-1),c,(c+1))\\
\rightarrow (t_i,\dots,t_1,t_2,\dots,s_i,(s_i+1),\dots,(s_i-1),c,(c+1))\\[3ex]
\end{gathered}
\end{equation*}
$i+1$ symbols are moved of which $i-1$ of them are skipped. Hence by Lemma \ref{rf}, the upper bound is $n-\log_{(\frac{i+1}{i-1})}n$ which maximises the base of the logarithm to $\frac{5}{3}$ when $i=4$. Thus $n-\log_{(\frac{5}{3})}n$ is an upper bound.

\emph{Case 3.2}: If $c+1$ lies to the left of $t_i$, then it is a visited symbol and lies to the left of $s_i-l$. (Note that $c+1=s_i$ would form an adjacency in $\pi$). Further, $c$ is skipped $\Rightarrow c >s_i-1$ and so $c >s_i \Rightarrow (c +1)>(s_i+1)$. If $c +1$ lies between $s_i+1$ and $s_i-1$, it would be a skipped symbol in an interval before $i$, hence the permutation is given by $\pi=(t_1,\dots,(c+1),\dots,(s_i+1),\dots,(s_i-1), c,s_i,t_i,\dots)$. Here we do the regular greedy moves until $c+1$ becomes the first symbol. The next move 

\begin{equation*}
\begin{gathered}
([(c+1),\dots,(s_i+1),\dots,(s_i-1)],c,*s_i,t_i,\dots)\\
\rightarrow(c,(c+1),\dots,(s_i+1),\dots,(s_i-l),s_i,t_i,\dots)\\[3ex]
\end{gathered}
\end{equation*}
forms a double in at most $i-3$ moves. Hence, we form 7 adjacencies in at most 5 moves which gives an upper bound of $\frac{5n}{7}$.

Hence from the cases discussed, the upper bound for sorting permutations when $s_i+1$ is a visited symbol that lies to the left of $s_i-l$ is $n-\log_{(\frac{10}{3})}n$.
\end{proof}
In the previous lemmas, we have considered all the positions at which the symbol $s_i+1$ can exist in permutation $\pi$. Hence we have the following theorem on the upper bound to sort a permutation using prefix transpositions.
\begin{theorem}
An upper bound for sorting permutation $\pi \in S_n$ using prefix transpositions is $n-\log_{(\frac{10}{3})}n$
\end{theorem}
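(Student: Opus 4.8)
The plan is to read the theorem as a pure assembly of the three lemmas of this section together with Observation~\ref{obs3.1}: every reduced permutation is sorted, by the generalised sequence length algorithm, within a bound of the shape $n-\log_\beta n$ whose base $\beta$ depends only on which structural case $\pi$ falls into, and the claim is simply that the worst such base is $\frac{10}{3}$. The organising observation is that $n-\log_\beta n=n-\frac{\ln n}{\ln\beta}$ is \emph{increasing} in $\beta$ for $\beta>1$ and $n>1$, so a larger base yields a weaker (larger) upper bound; consequently the single bound $n-\log_{(\frac{10}{3})}n$ dominates a per-case bound $n-\log_\beta n$ exactly when $\beta\le\frac{10}{3}$. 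The work of the proof is therefore not a new estimate but a check that the cases are exhaustive and that each contributes a base at most $\frac{10}{3}$.

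First I would clear away the boundary regimes, all of which are at least as good as the target. If $R_8'$ is a prefix of $\pi$, the algorithm of Dias et al. produces a double immediately and beats the bound; if no symbol is skipped before a double occurs, the plain sequence length algorithm already yields a stronger base. Otherwise, taking the $i^{\text{th}}$ interval to be the first carrying a skipped symbol, the Chitturi--Sudborough estimate forces $i\le 8$. When $i\le 3$ the recurrence of Section~\ref{Recursive formula} gives $n-\log_3 n$, and since $3<\frac{10}{3}$ this is stronger than claimed; when the $i^{\text{th}}$ interval holds three or more skipped symbols, Observation~\ref{obs3.1} gives exactly base $\frac{10}{3}$. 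This leaves the essential regime $4\le i\le 8$ with at most two skipped symbols in the $i^{\text{th}}$ interval.

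In that regime I would localise everything at the single symbol $s_i+1\pmod n$ and split on its position in the template $(t_1,\dots,(s_i-l),\dots,s_i,t_i,\dots)$. The split is purely positional and exhaustive: either $s_i+1$ lies to the right of $s_i$, handled by Lemma~\ref{l3.1} with base $\frac{4}{3}$; or it lies strictly between $t_{i-1}=s_i-l$ and $s_i$, where the only skipped symbols are $c$ and $s_i$, so $s_i+1\ne s_i$ forces $s_i+1=c$ and Lemma~\ref{l3.2} applies with base $\frac{10}{3}$; or it lies to the left of $t_{i-1}$, where, because the $i^{\text{th}}$ interval is the \emph{first} with a skipped symbol, every symbol is visited, so $s_i+1$ is a visited symbol left of $s_i-l$ and Lemma~\ref{l3.3} applies with base $\frac{10}{3}$. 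Here one uses $s_i+1\ne s_i-l$ (otherwise $s_i$ would be visited) and $s_i+1\ne t_i$ (otherwise $s_i,t_i$ is an adjacency) to rule out the degenerate placements.

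The step I expect to be the genuine sticking point is exactly this exhaustiveness bookkeeping---confirming that no placement of $s_i+1$ escapes the three lemmas, in particular that the band strictly between $t_{i-1}$ and $s_i$ contains no third skipped symbol (else Observation~\ref{obs3.1} is invoked instead) and that the region left of $t_{i-1}$ really admits only visited symbols---since the analytic difficulty already lives inside the lemmas, which I am free to assume. With exhaustiveness in hand, the bases arising across all cases are $3$, $\frac{10}{3}$, $\frac{4}{3}$, $\frac{10}{3}$, $\frac{10}{3}$; their maximum is $\frac{10}{3}$, and by the monotonicity of $n-\log_\beta n$ noted at the outset this single bound $n-\log_{(\frac{10}{3})}n$ is valid for every $\pi\in S_n$, completing the proof.
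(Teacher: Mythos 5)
Your proposal is correct and follows essentially the same route as the paper: the theorem is proved there exactly by combining Lemmas \ref{l3.1}, \ref{l3.2}, \ref{l3.3} (together with Observation \ref{obs3.1} and the preliminary reductions for $R_8'$ and $i\le 3$, which the paper disposes of in the overview section) over the exhaustive positional split of $s_i+1$. Your added remarks on the monotonicity of $n-\log_\beta n$ in $\beta$ and the explicit exhaustiveness bookkeeping are implicit in the paper but not a different argument.
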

\begin{proof}
The result follows from Lemmas \ref{l3.1}, \ref{l3.2}, \ref{l3.3}, where we have proved that the upper bound is less than or equal to $n-\log_{(\frac{10}{3})}n$ at all possible positions for the symbol $s_i+1$ in permutation $\pi$.
\end{proof}

\chapter{Tighter Upper Bound Using Sequence Length Algorithm}
\section{Introduction}
In this chapter, we improve the upper bound to $n-\log_3n$ \cite{nair2020new} using a different set of additional alternate moves on the sequence length algorithm. A strategy similar to that in the previous chapter assumes that the first interval in $\pi$ with skipped symbols has less than four symbols. In Chapter 3, from Observation \ref{obs3.1} we saw that if the number of unvisited or skipped symbols in the $i^\text {th}$ interval is more than two, the upper bound to sort the permutation with prefix transpositions is $n-\log_{(\frac{10}{3})}n$. Here we shall first prove that if the number of skipped symbols in the $i^\text {th}$ interval is more than three, the upper bound to sort the permutation with prefix transpositions is $n-\log_{(\frac{11}{4})}n$. 
\begin{observation}\label{obs4.1}
If more than three elements are skipped in at most seven greedy moves, we move at least eleven symbols, of which at least four are skipped. This gives an upper bound of $n-\log_{(\frac{11}{4})}n< n-\log_3n$.
\end{observation}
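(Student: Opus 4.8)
The plan is to reduce the statement to a single application of the recursive formula in Lemma \ref{rf}, so that the real content is just bookkeeping: counting how many symbols are visited versus skipped during the greedy phase that clears $\pi$ down to the $i^\text{th}$ interval. First I would reuse the opening of Observation \ref{obs3.1}. Since the $i^\text{th}$ interval is, by assumption, the first interval of $\pi$ to contain a skipped symbol, the structural bound of Chitturi \cite{Chitturi2008} gives $i \le 8$, so $\pi$ is processed up to this interval in $i-1 \le 7$ regular greedy moves, each of which is a single. This is exactly the setting intended by the phrase ``at most seven greedy moves.''

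The key counting step is to show that these $i-1$ moves relocate at least $i+3$ symbols, of which at least four are skipped. As in Observation \ref{obs3.1}, each of the first $i-1$ greedy moves locks one new visited symbol into position, so the $i-1$ visited symbols $t_1,\dots,t_{i-1}$ are accounted for. By hypothesis the $i^\text{th}$ interval carries more than three skipped symbols, hence at least four, and all of these sit in the prefix that has been moved by the end of the phase. Adding up gives at least $(i-1)+4 = i+3$ symbols moved, with the skipped total being at least four. Specialising to the extremal case $i=8$ reproduces the figures in the statement: at least $11$ symbols moved, at least $4$ of them skipped.

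Next I would feed these counts into Lemma \ref{rf}, taking the number of visited symbols to be $i-1$ and the number of skipped symbols to be (at least) four. The base of the logarithm returned by the recursive formula is then $\frac{(i-1)+4}{4} = \frac{i+3}{4}$, giving the upper bound $n-\log_{(\frac{i+3}{4})}n$. Since $\frac{i+3}{4}$ is increasing in $i$ and $i\le 8$, the base is maximised at $i=8$ to the value $\frac{11}{4}$; and because $n-\log_{\beta}n$ is increasing in $\beta$ for $\beta>1$, the worst case over all admissible $i$ is precisely $n-\log_{(\frac{11}{4})}n$. Finally, $\frac{11}{4}<3$ forces $\log_{(\frac{11}{4})}n>\log_3 n$, so $n-\log_{(\frac{11}{4})}n < n-\log_3 n$, which is the claimed conclusion.

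The only delicate point, and the one I would check most carefully, is confirming that the four (or more) skipped symbols of the $i^\text{th}$ interval are genuinely carried along with the visited symbols within these $i-1$ moves, so that Lemma \ref{rf}'s tally of ``symbols moved'' equals $i+3$ and its tally of ``symbols skipped'' is at least four; this is the same mechanism that underlies the analogous count in Observation \ref{obs3.1}, where the extra skip simply shifts the ratio from $\frac{i+2}{3}$ to $\frac{i+3}{4}$. The remaining ingredients, namely the numerical comparison $\frac{i+3}{4}\le\frac{11}{4}<3$ and the monotonicity of $n-\log_\beta n$ in $\beta$, are routine and require no further argument.
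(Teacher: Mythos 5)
Your proposal is correct and is essentially the paper's own argument: the paper gives no separate proof for this observation, but the intended reasoning is the direct analogue of the proof of Observation \ref{obs3.1}, namely counting $i-1\le 7$ visited symbols plus at least four skipped symbols for a total of at least $i+3\le 11$ symbols moved, then applying Lemma \ref{rf} with the base $\frac{i+3}{4}$ maximised to $\frac{11}{4}$ at $i=8$. Your final comparison $\frac{11}{4}<3$ giving $n-\log_{(\frac{11}{4})}n<n-\log_3 n$ is also the step the paper leaves implicit, and it is handled correctly.
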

As in the previous chapter, the following assumptions are made on permutation $\pi$ and its symbols:
\begin{enumerate}
\item The first skipped symbol in $\pi$ lies in the $i^\text {th}$ interval. Note that $i \le 8$ by the construction of sequence length algorithm.
\item $s_i$ is the last skipped symbol in $i^\text {th}$ interval.
\item $t_{i-1}=s_i-l$ for some $l\ge 1$ (if $t_{i-1}=s_i+l$ for some $l\ge 1$, then $s_i$ would be a visited symbol)
\item The maximum number of symbols in the $i^\text {th}$ interval is three. For otherwise by Observation \ref{obs4.1}, $n-\log_{(\frac{11}{4})}n$ is an upper bound.
\end{enumerate}

As in Chapter 3, here also we consider the position of the symbol $s_i+1$ in the permutation $\pi=(t_1,t_2,\dots,s_i,t_i,\dots,s_k,t_k,\dots,x,y)$ and find alternate moves to get the double faster.

\section{Proposed Algorithm}
\begin{lemma}\label{l4.1}
If $i\le 3$ then the upper bound for sorting by prefix transpositions is $n-\log_{3}n$.
\end{lemma}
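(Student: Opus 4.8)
The plan is to obtain the bound as an immediate consequence of the recursive formula in Lemma~\ref{rf}, once the visited/skipped count forced by the hypothesis $i\le 3$ is made explicit. By the standing assumption of the chapter, the first skipped symbol of $\pi$ lies in the $i^\text{th}$ interval, so every earlier interval is skip-free and the regular greedy moves merely relocate the visited symbols $t_1,t_2,\dots,t_{i-1}$, one interval per move. First I would pin down the bookkeeping: across the first $i-1$ greedy moves the algorithm relocates exactly the $i-1$ visited symbols $t_1,\dots,t_{i-1}$ together with the skipped symbols of the $i^\text{th}$ interval, and because this interval is by hypothesis the first to contain a skip, at least one skipped symbol (namely $s_i$) is relocated among them.

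Next I would substitute these counts into Lemma~\ref{rf}. Taking $x=i-1$ visited symbols and $y\ge 1$ skipped symbols, the base of the logarithm in the resulting upper bound is
\begin{equation*}
\frac{x+y}{y}=\frac{(i-1)+y}{y}=1+\frac{i-1}{y}.
\end{equation*}
Since $y\ge 1$ and $i\le 3$, this base is at most $1+(i-1)\le 3$, the extreme value $3$ occurring exactly when $i=3$ and a single symbol is skipped; for $i=2$ the base is at most $2$. Thus the worst case over all admissible $i\le 3$ has base $3$.

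Finally I would convert the bound on the base into the stated bound on the number of moves. Because the base exceeds $1$ and $n-\log_{\beta}n$ is increasing in $\beta$ for $\beta>1$ (a smaller base yields a larger logarithm and hence a smaller, i.e. better, bound), the inequality $\beta\le 3$ gives $n-\log_{\beta}n\le n-\log_{3}n$, so $n-\log_{3}n$ is a valid upper bound in every instance with $i\le 3$. The argument carries no genuine obstacle; the only points demanding care are the exact tally of visited versus skipped symbols and the direction of the monotonicity of $n-\log_{\beta}n$ in $\beta$. The single fact that must be checked is that the first $i-1$ moves really do skip at least one symbol, and this is immediate from the placement of the first skip in the $i^\text{th}$ interval.
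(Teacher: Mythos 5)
Your proof is correct and follows essentially the same route as the paper: count the at most $i-1\le 2$ visited symbols and at least one skipped symbol (forced because the $i^{\text{th}}$ interval is the first to contain a skip) relocated in the first $i-1$ greedy moves, and feed these counts into Lemma~\ref{rf} to get a logarithm base of at most $3$. The paper states this more tersely (``we skip at least one symbol in 2 regular greedy moves,'' giving $\alpha=\tfrac{2}{3}$ and base $\tfrac{1}{1-\alpha}=3$), while you additionally make explicit the monotonicity of $n-\log_{\beta}n$ in $\beta$, which the paper leaves implicit; there is no substantive difference.
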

\begin{proof}
Consider a permutation $\pi$ which has a skipped symbol in the second or third interval ($i\le 3$). Here we skip at least one symbol in 2 regular greedy moves. By using the same procedure as in Observation \ref{obs4.1}, we obtain an upper bound of $n-\log_{(\frac{1}{(1-\frac{2}{3})})}n=n-\log_{3}n$.
\end{proof}
\begin{lemma}\label{l4.2}
If $i\ge 4$ and $s_i+ 1$ lies to the right of $t_i$ then $n-\log_{(\frac{4}{3})}n$ is an upper bound.
\end{lemma}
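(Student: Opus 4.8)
The plan is to handle this exactly as the ``$s_i+1$ to the right'' case of Lemma \ref{l3.1}, now under the explicit restriction $i\ge 4$. First I would record the structural facts I need: since the $i^\text{th}$ interval is the first interval of $\pi$ carrying a skipped symbol, the visited symbols $t_1,t_2,\dots,t_{i-1}$ sit consecutively at the front of $\pi$ with no skipped symbol among them; and since $\pi$ is reduced, $s_i+1\neq t_i$ (otherwise $s_i,t_i$ would already be an adjacency). With $s_i+1$ lying to the right of $t_i$ by hypothesis, $\pi$ takes the form $(t_1,t_2,\dots,(s_i-l),\dots,s_i,t_i,\dots,(s_i+1),\dots)$.

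Then I would apply the single prefix transposition that relocates the entire prefix up to and including $s_i$ to the slot just before $s_i+1$:
\begin{equation*}
\begin{gathered}
([t_1,t_2,\dots,(s_i-l),\dots,s_i],t_i,\dots,*(s_i+1),\dots)\\
\rightarrow (t_i,\dots,t_1,t_2,\dots,(s_i-l),\dots,s_i,(s_i+1),\dots),
\end{gathered}
\end{equation*}
which creates the adjacency $(s_i,s_i+1)$ and hence is a single.

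The decisive accounting step is to count visited versus skipped symbols for this move. Because the move carries $t_1,\dots,t_{i-1}$ as one block, none of these formerly visited symbols is individually promoted to the front in the generalised algorithm, so all $i-1$ of them are charged as skipped; together with $s_i$ the move processes at least $i$ symbols, of which at least $i-1$ are skipped (a second skipped symbol such as $c$ in the $i^\text{th}$ interval only raises the skipped count and lowers the base). Substituting $x=1$ and $y=i-1$ into Lemma \ref{rf} yields base $\frac{x+y}{y}=\frac{i}{i-1}$.

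Finally I would optimise over $i$. The quantity $\frac{i}{i-1}$ decreases in $i$, so on the admissible range $i\ge 4$ it is largest at $i=4$, giving base $\frac{4}{3}$ and the claimed worst-case bound $n-\log_{(\frac{4}{3})}n$. I expect the only subtle point to be the bookkeeping in the third step --- justifying that the $i-1$ block-moved symbols count as skipped and that additional skipped symbols can only improve (lower) the base --- but this is essentially word-for-word the reasoning of Lemma \ref{l3.1}, so it transfers directly.
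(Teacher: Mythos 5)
Your proposal is correct and is essentially identical to the paper's own proof: the same single prefix transposition moving the prefix ending at $s_i$ to just before $s_i+1$, the same count of at least $i$ symbols moved with $i-1$ skipped, and the same optimisation of the base $\frac{i}{i-1}$ at $i=4$. The extra remarks you add (why the block-moved visited symbols are charged as skipped, and that additional skipped symbols only lower the base) are sound and merely make explicit what the paper leaves implicit.
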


\begin{proof}
Consider the move 
\begin{equation*}
\begin{gathered}
([t_1,t_2,\dots,(s_i-l),\dots,s_i],t_,\dots,*(s_i+1),\dots)
\rightarrow (t_i,\dots,t_1,\dots,s_i,(s_i+1),\dots)
\end{gathered}
\end{equation*}
that moves at least $i$ symbols of which $i-1$ are skipped. Hence by Lemma \ref{rf}, an upper bound is $n-\log_{(\frac{i}{i-1})}n$ which maximises the base of the logarithm to $\frac{4}{3}$ when $i=4$. Thus, $n-\log_{(\frac{4}{3})}n$ is an upper bound.
\end{proof}

\begin{lemma}\label{l4.3}
If $i\ge 4$ and $s_i+ 1$ is a skipped symbol that lies to the left of $t_i$ then $n-\log_{3}n$ is an upper bound.
\end{lemma}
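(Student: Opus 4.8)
The plan is to fix the shape of $\pi$ forced by the hypotheses and then branch, as in Lemma \ref{l3.2}, on the positions of the symbols consecutive in value to $s_i$ and to $t_{i-1}=s_i-l$. Since $s_i+1$ is skipped and lies to the left of $t_i$, and the $i^\text{th}$ interval is the first interval carrying a skipped symbol, $s_i+1$ must itself sit in the $i^\text{th}$ interval, to the left of $s_i$; hence $\pi=(t_1,t_2,\dots,(s_i-l),\dots,(s_i+1),\dots,s_i,t_i,\dots,x,y)$. By Observation \ref{obs4.1} I may assume at most three symbols are skipped, so apart from $s_i$ and $s_i+1$ there is at most one further skipped symbol $c$ in the interval. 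The whole argument is then driven by Lemma \ref{rf}: it suffices to exhibit, in each branch, a short sequence of prefix transpositions moving $m$ symbols of which at least $s$ are skipped with $m/s\le 3$, since the base of the logarithm equals $m/s$ and any base at most $3$ yields the asserted bound $n-\log_3 n$.

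First I would split on the location of $s_i+2$. If $s_i+2$ lies to the right of $t_i$ it is skipped, and I move the prefix ending at $s_i+1$ past $t_i$ so that $s_i+1$ meets $s_i+2$, then bring $s_i$ in with a second move; this chains $s_i,(s_i+1),(s_i+2)$ into a run while the $i-1$ visited symbols $t_1,\dots,t_{i-1}$ are displaced and counted as skipped, so that $m/s=(i+1)/(i-1)\le\frac{5}{3}$ for $i\ge4$. If instead $s_i+2$ is visited it lies to the left of $t_{i-1}$, and I branch further on $s_i-l+1$ exactly as in Cases 2.1--2.3 of Lemma \ref{l3.2}: when it is skipped I chain $s_i-l,\,s_i-l+1,\dots$ with two or three alternate moves, and when it too is visited I run the regular greedy moves up to the relevant symbol and finish with one or two alternate moves realising a double. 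Where a third skipped symbol $c$ is present I would additionally branch on the position of $c+1$, as in Cases 3.1--3.2 of Lemma \ref{l3.3}, either displacing enough symbols or producing an early double. The design goal throughout is that every branch either displaces enough visited symbols that $m/s\le3$, or reaches a double after moving at most three times as many symbols as it skips; the branches that terminate in a double within at most $i-2$ moves are in fact far stronger, giving bounds of the form $\frac{3n}{4}$.

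The essential difference from Chapter 3 — and the main obstacle — is precisely the branch in which exactly three symbols are skipped. There the greedy-only count gives base $\frac{10}{3}$ (Observation \ref{obs3.1}), so I must instead design an alternate move that either forces a fourth displaced symbol or attains the double early enough to bring the base down to $3$. Carrying this out means checking, under the weaker hypothesis $i\ge4$ rather than the sharp $i=8$ of the observations, that each prefix transposition I write is legal, that it creates exactly the adjacencies claimed, and that the symbols I count as skipped are never visited again in a later move — the last point being where the fourth property of the sequence length algorithm established by Chitturi is used. The extremal branch is the one in which $s_i+2$ and $s_i-l+1$ are both visited and far to the left, so that three symbols are skipped against nine moved; this is exactly where the base equals $3$ and pins the bound, every other branch being strictly better. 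Assembling the branches, the worst base is $3$, which gives the upper bound $n-\log_3 n$ and proves the lemma.
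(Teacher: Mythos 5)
Your overall strategy --- fix the forced shape of $\pi$, branch on the positions of $s_i+2$ and $(s_i-l+1)$, and convert each branch's moved/skipped count into a logarithm base via Lemma \ref{rf} --- is the paper's strategy, and your first branch ($s_i+2$ to the right of $t_i$, two moves, base $\frac{i+1}{i-1}\le\frac{5}{3}$) as well as your identification of the extremal $\frac{i+1}{3}$ branch at $i=8$ are both correct. The genuine gap is in your case decomposition: you split on $s_i+2$ only into ``skipped to the right of $t_i$'' versus ``visited'', which omits the case where $s_i+2$ is itself the third skipped symbol of the $i^\text{th}$ interval, sitting between $s_i-l$ and $s_i+1$. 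Your remark that you would then ``branch on the position of $c+1$ as in Cases 3.1--3.2 of Lemma \ref{l3.3}'' does not close this: those cases of Lemma \ref{l3.3} are built for a configuration in which $s_i+1$ is a \emph{visited} symbol to the left of $s_i-l$ with $l=1$, and their move sequences --- in particular the double-forming move $([(c+1),\dots,(s_i+1),\dots,(s_i-1)],c,*s_i,t_i,\dots)$ --- rely on $s_i+1$ lying in the visited prefix, which is false under the hypothesis of Lemma \ref{l4.3}. The paper's treatment of this omitted case requires a fresh three-level branch: on $s_i+3$ (a fourth skipped symbol triggers Observation \ref{obs4.1}; $s_i+3$ to the right of $t_i$ admits three new alternate moves with base $\frac{i+2}{i-1}$), then, when $s_i+3$ is visited, on $(s_i-l+1)$, and finally, when $l=1$, on $s_i+4$, with new four-move sequences attaining base $\frac{i+2}{i-2}=3$ at $i=4$. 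None of this appears in your proposal.

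A second, related deficiency is that no alternate move is actually exhibited anywhere: in this genre of proof the explicit prefix transpositions, the check that each creates the claimed adjacency, and the check that displaced symbols never again become the head of the permutation (property 4 of the sequence length algorithm) constitute essentially the entire content, and your text defers all of it (``Carrying this out means checking \dots''). Two smaller inaccuracies: the branch where $s_i+2$ is visited and $(s_i-l+1)$ is visited does not terminate in a double (it skips three symbols against $i+1$ moved), and the base-$3$ worst case is not attained only there --- several branches give $\frac{i+2}{i-2}=3$ at $i=4$ --- so ``every other branch being strictly better'' is false, though neither point affects the final bound.
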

\begin{proof}
If $s_i+ 1$ lies in the $j^\text {th}$ interval, where $j<i$, then by assumption (3), we would have considered the
symbol $s_j$ rather than $s_i$ . So, we shall assume that both $s_i+ 1$ and $s_i$ are skipped and lie in the $i^\text {th}$ interval.
Let $\pi=(t_1,t_2,\ldots,(s_i-l),\ldots,(s_i+1),s_i,t_i,\ldots,s_k,t_k,\ldots,x,y)$. Here we shall consider the position of $s_i+2$.\\

\emph{Case1}: Suppose $s_i+2$ lies to the right of $t_i$, then in the following two moves

\begin{equation*}
\begin{gathered}
([t_1,t_2,\ldots,(s_i-l),\ldots,(s_i+1)],\ldots,s_i,t_i,\ldots,*(s_i+2),\ldots)\\
\rightarrow (\ldots,s_i,t_i,\ldots,t_1,t_2,\ldots,(s_i-l),\ldots,(s_i+1),(s_i+2),\ldots)\\
([\ldots,s_i],t_i,\ldots,t_1,t_2,\ldots,(s_i-l),\ldots,*(s_i+1),(s_i+2),\ldots)\\
\rightarrow (t_i,\ldots,t_1,t_2,\ldots,(s_i-l),\ldots,s_i,(s_i+1),(s_i+2),\ldots)\\[3ex]
\end{gathered}
\end{equation*}
at least $i+1$ symbols are moved of which$i-1$ are skipped. Hence by Lemma \ref{rf}, an upper bound is
given by $n-\log_{(\frac{i+1}{i-1})}n$ which maximises the base of the logarithm to $\frac{5}{3}$ when $i=4$. Thus, an upper bound is $n-\log_{(\frac{5}{3})}n$.\\

\emph{Case 2}: if $s_i+2$ is visited symbol that lies to the left of $t_i$, then by the sequence length algorithm by Chitturi \cite{Chitturi2012}, $distance((s_i-l),x)<distance((s_i+2),x)$ and hence $s_i+2$ lies to the left of $s_i-l$. Here we consider the position of the symbol $(s_i-l+1)$ in $\pi$.

\emph{Case 2.1}: $(s_i-l+1)$ lies to the right of $t_i$ . Here we need not consider the case where $l=1$, since then $(s_i-l+1)=s_i$. Then in the following three moves
\begin{equation*}
\begin{gathered}
([t_1,t_2,\ldots,(s_i+2),\ldots,(s_i-l)],\ldots,(s_i+1),\ldots,s_i,t_i,\ldots,*(s_i-l+1),\ldots)\\
\rightarrow (\ldots,(s_i+1),\ldots,s_i,t_i,\ldots,t_1,t_2,\ldots,(s_i+2),\ldots,(s_i-l),(s_i-l+1),\ldots)\\
([\ldots,(s_i+1)],\ldots,s_i,t_i,\ldots,t_1,t_2,\ldots,*(s_i+2),\ldots,(s_i-l),(s_i-l+1),\ldots)\\
\rightarrow (\ldots,s_i,t_i,\ldots,t_1,t_2,\ldots,(s_i+1),(s_i+2),\ldots,(s_i-l),(s_i-l+1),\ldots)\\
([\ldots,s_i],t_i,\ldots,t_1,t_2,\ldots,*(s_i+1),(s_i+2),\ldots,(s_i-l),(s_i-l+1),\ldots)\\
\rightarrow (t_i,\ldots,t_1,t_2,\ldots,s_i,(s_i+1),(s_i+2),\ldots,(s_i-l),(s_i-l+1),\ldots)\\[3ex]
\end{gathered}
\end{equation*}
at least $i+1$ symbols are moved of which $i-2$ are skipped. Hence by Lemma \ref{rf}, an upper bound is
given by $n-\log_{(\frac{i+1}{i-2})}n$ which maximises the base of the logarithm to $\frac{5}{2}$ when $i=4$. Thus, $n-\log_{(\frac{5}{2})}n$ is an upper bound.

\emph{Case 2.2}: $(s_i-l+1)$ is skipped and lies to the left of $t_i$ .

(a) Suppose $l\ne1$, if $(s_i-l+1)$ is the first skipped symbol, we would have an adjacency between $s_i-l$ and $(s_i-l+1)$. To avoid the adjacency, there is at least one more skipped symbol in the interval. Hence the number of skipped symbols would be four. So $\pi=(t_1,t_2,\ldots,(s_i+2),\ldots,(s_i-l),(s_i+1),(s_i-l+1),s_i,t_i,\ldots,x,y)$. Then in the following three moves 
\begin{equation*}
\begin{gathered}
([t_1,t_2,\ldots,(s_i+2),\ldots,(s_i-l)],(s_i+1),*(s_i-l+1),s_i,t_i,\ldots)\\
\rightarrow ((s_i+1),t_1,t_2,\ldots,(s_i+2),\ldots,(s_i-l),(s_i-l+1),s_i,t_i,\ldots)\\
([(s_i+1),t_1,t_2,\ldots,(s_i+2),\ldots],(s_i-l),(s_i-l+1),s_i,*t_i,\ldots)\\
\rightarrow ((s_i-l),(s_i-l+1),s_i,(s_i+1),t_1,t_2,\ldots,(s_i+2),\ldots,t_i,\ldots)\\
([(s_i-l),(s_i-l+1),\ldots,(s_i+2),\ldots],t_i,\ldots,(s_i-l-1),*\ldots)\\
\rightarrow (t_i,\ldots,(s_i-l-1),(s_i-l),(s_i-l+1),\ldots,(s_i+2),\ldots)\\[3ex]
\end{gathered}
\end{equation*}
at least $i+2$ symbols are moved of which $i-1$ are skipped. Hence by Lemma \ref{rf}, an upper bound is given by $n-\log_{(\frac{i+2}{i-1})}n$ which maximises the base of the logarithm to $\frac{6}{3}$ when $i=4$. Thus, $n-\log_{2}n$ is an upper bound. Note that this sequence of moves can be executed even if $(s_i-l-1)=t_i$.

(b) If $l=1$, Then in the following four moves 
\begin{equation*}
\begin{gathered}
([t_1,t_2,\ldots,(s_i+2),\ldots,(s_i-1)],c,(s_i+1),*s_i,t_i,\ldots)\\
\rightarrow (c,(s_i+1),t_1,t_2,\ldots,(s_i+2),\ldots,(s_i-1),s_i,t_i,\ldots)\\
([c],(s_i+1),t_1,t_2,\ldots,(s_i+2),\ldots,(s_i-1),s_i,t_i,\ldots,*(c+1),\ldots)\\
\rightarrow ((s_i+1),t_1,t_2,\ldots,(s_i+2),\ldots,(s_i-1),s_i,t_i,\ldots,c,(c+1),\ldots)\\[3ex]
\end{gathered}
\end{equation*}
(we make the above move irrespective of the position of $c+1$, even though here $c+1$ is assumed to lie to the right of $t_i$).

\begin{equation*}
\begin{gathered}
([(s_i+1),t_1,t_2,\ldots,(s_i+2),\ldots],(s_i-1),s_i,*t_i,\ldots)\\
\rightarrow ((s_i-1),s_i,(s_i+1),t_1,t_2,\ldots,(s_i+2),\ldots,t_i,\ldots)\\
([(s_i-1),s_i,(s_i+1),t_1,t_2,\ldots,(s_i+2),\ldots],t_i,\ldots,(s_i-2),*\ldots)\\
\rightarrow (t_i,\ldots,(s_i-2),(s_i-1),s_i,(s_i+1),t_1,t_2,\ldots,(s_i+2),\ldots,)\\[3ex]
\end{gathered}
\end{equation*}
at least $i+2$ symbols are moved of which $i-2$ are skipped. Hence by Lemma \ref{rf}, an upper bound is given by $n-\log_{(\frac{i+2}{i-2})}n$ which maximises the base of the logarithm to $\frac{6}{2}$ when $i=4$. Thus, $n-\log_{3}n$ is an upper bound. Note that if $c$ does not exist or lies between $s_i+1$ and $s_i$, then the second move can be omitted, and hence we obtain a better upper bound. Further this sequence of moves can be executed even if $s_i-2=t_i$.

\emph{Case 2.3}: $(s_i-l+1)$ is visited and lies to the left of $t_i$. Further $(s_i-l+1)$ lies immediately to the left of $s_i-l$, otherwise there is another visited symbol between them, then $s_i-l$ would be skipped by the construction of sequence length
algorithm. Here $\pi=(t_1,t_2,\ldots,(s_i+2),\ldots,(s_i-l+1),(s_i-l),\ldots,(s_i+1),\ldots,s_i,t_i,\ldots,x,y)$. Then we follow the usual greedy moves for all visited symbols until $s_i+2$ becomes the first symbol. Then we do the following moves 
\begin{equation*}
\begin{gathered}
[(s_i+2),\ldots,(s_i-l+1)],(s_i-l),\ldots,(s_i+1)*, \ldots,s_i,t_i,\ldots)\\
\rightarrow (s_i-l),\ldots,(s_i+1), (s_i+2),\ldots,(s_i-l+1), \ldots,s_i,t_i,\ldots)\\
[(s_i-l),\ldots,(s_i+1), (s_i+2),\ldots,(s_i-l+1), \ldots,s_i],t_i,\ldots,(s_i-l-1)*,\ldots)\\
\rightarrow t_i,\ldots,(s_i-l-1),(s_i-l),\ldots,(s_i+1), (s_i+2),\ldots,(s_i-l+1), \ldots,s_i,\ldots)\\[3ex]
\end{gathered}
\end{equation*}
Here in at most $i-2$ moves at least $i+1$ symbols are moved of which three are skipped. Hence by Lemma \ref{rf}, an upper bound is given by $n-\log_{(\frac{i+1}{3})}n$ which maximises the base of the logarithm to $\frac{9}{3}$ when $i=8$. Thus, $n-\log_{3}n$ is an upper bound.\\

\emph{Case 3}: if $s_i+2$ is skipped and lies to the left of $t_i$, we consider the position of the symbol $s_i+3$ in $\pi=(t_1,t_2,\ldots,(s_i-l),(s_i+2),(s_i+1),s_i,t_i,\ldots,x,y)$. Note that if $s_i+3$ is a skipped symbol in the $i^\text {th}$ interval, then the $i^\text {th}$ interval has four skipped symbols and by Observation \ref{obs4.1}, $n-\log_{(\frac{11}{4})}n$ is an upper bound.

\emph{Case 3.1}: $s_i+3$ lies to the right of $t_i$. Then in the following three moves
\begin{equation*}
\begin{gathered}
([t_1,t_2,\ldots,(s_i-l),(s_i+2)],(s_i+1),s_i,t_i,\ldots,*(s_i+3),\ldots)\\
\rightarrow ((s_i+1),s_i,t_i,\ldots,t_1,t_2,\ldots,(s_i-l),(s_i+2),(s_i+3),\ldots)\\
([(s_i+1)],s_i,t_i,\ldots,t_1,t_2,\ldots,(s_i-l),*(s_i+2),(s_i+3),\ldots)\\
\rightarrow (s_i,t_i,\ldots,t_1,t_2,\ldots,(s_i-l),(s_i+1),(s_i+2),(s_i+3),\ldots)\\
([s_i],t_i,\ldots,t_1,t_2,\ldots,(s_i-l),*(s_i+1),(s_i+2),(s_i+3),\ldots)\\
\rightarrow (t_i,\ldots,t_1,t_2,\ldots,(s_i-l),s_i,(s_i+1),(s_i+2),(s_i+3),\ldots)\\[3ex]
\end{gathered}
\end{equation*}
at least $i+2$ symbols are moved of which $i-1$ are skipped. Hence by Lemma \ref{rf}, an upper bound is given by $n-\log_{(\frac{i+2}{i-1})}n$ which maximises the base of the logarithm to $\frac{6}{3}$ when $i=4$. Thus, $n-\log_{2}n$ is an upper bound.

\emph{Case 3.2}: $s_i+3$ is visited and lies to the left of $t_i$. Then we consider the position of the symbol $(s_i-l+1)$ in $\pi$.

(a) $(s_i-l+1)$ lies to the right of $t_i$. Here we need not consider the case where $l=1$, since then $(s_i-l+1)=s_i$. Then in the following four moves
\begin{equation*}
\begin{gathered}
([t_1,t_2,\ldots,(s_i+3),\ldots,(s_i-l)],(s_i+2),(s_i+1),s_i,t_i,\ldots,*(s_i-l+1),\ldots)\\
\rightarrow ((s_i+2),(s_i+1),s_i,t_i,\ldots,t_1,t_2,\ldots,(s_i+3),\ldots,(s_i-l),(s_i-l+1),\ldots)\\
([(s_i+2)],(s_i+1),s_i,t_i,\ldots,t_1,t_2,\ldots,*(s_i+3),\ldots,(s_i-l),(s_i-l+1),\ldots)\\
\rightarrow ((s_i+1),s_i,t_i,\ldots,t_1,t_2,\ldots,(s_i+2),(s_i+3),\ldots,(s_i-l),(s_i-l+1),\ldots)\\
\end{gathered}
\end{equation*}
\begin{equation*}
\begin{gathered}
([(s_i+1)],s_i,t_i,\ldots,t_1,t_2,\ldots,*(s_i+2),(s_i+3),\ldots,(s_i-l),(s_i-l+1),\ldots)\\
\rightarrow (s_i,t_i,\ldots,t_1,t_2,\ldots,(s_i+1),(s_i+2),(s_i+3),\ldots,(s_i-l),(s_i-l+1),\ldots)\\
([s_i],t_i,\ldots,t_1,t_2,\ldots,*(s_i+1),(s_i+2),(s_i+3),\ldots,(s_i-l),(s_i-l+1),\ldots)\\
\rightarrow (t_i,\ldots,t_1,t_2,\ldots,s_i,(s_i+1),(s_i+2),(s_i+3),\ldots,(s_i-l),(s_i-l+1),\ldots)\\[3ex]
\end{gathered}
\end{equation*}
at least $i+2$ symbols are moved of which $i-2$ are skipped. Hence by Lemma \ref{rf}, an upper bound is given by $n-\log_{(\frac{i+2}{i-2})}n$ which maximises the base of the logarithm to $\frac{6}{2}$ when $i=4$. Thus, $n-\log_{3}n$ is an upper bound.

(b) $(s_i-l+1)$ lies to the right of $t_i$. Then we follow the usual greedy moves for all visited symbols until $s_i+3$ becomes the first symbol. Then we execute the following couple of moves
\begin{equation*}
\begin{gathered}
([(s_i+3),\ldots,(s_i-l+1)],(s_i-l),(s_i+2),*(s_i+1),s_i,t_i,\ldots)\\
\rightarrow ((s_i-l),(s_i+2),(s_i+3),\ldots,(s_i-l+1),(s_i+1),s_i,t_i,\ldots)\\
([(s_i-l),(s_i+2),(s_i+3),\ldots,(s_i-l+1),(s_i+1),s_i],t_i,\ldots,(s_i-l-1)*,\ldots)\\
\rightarrow (t_i,\ldots,(s_i-l-1),(s_i-l),(s_i+2),(s_i+3),\ldots,(s_i-l+1),(s_i+1),s_i,\ldots)\\[3ex]
\end{gathered}
\end{equation*}
Here in at most $i-2$ moves at least $i+2$ symbols are moved of which four are skipped. Hence by Lemma \ref{rf}, an upper bound is given by $n-\log_{(\frac{i+2}{4})}n$ which maximises the base of the logarithm to $\frac{5}{2}$ when $i=8$. Thus, $n-\log_{(\frac{5}{2})}n$ is an upper bound.

(c) $(s_i-l+1)$ is skipped and lies to the left of $t_i$. Here we explore only the case when $l=1$ otherwise $l\ne1$ the number of skipped symbols exceed three. Then by Observation \ref{obs4.1}, $n-\log_{(\frac{11}{4})}n$ is an upper bound. Now $\pi=(t_1,t_2,\ldots,(s_i+3),(s_i-1),(s_i+2),(s_i+1),s_i,t_i,\ldots,x,y)$. Here we shall consider two subcases:

(i) When $s_i+4$ lies to the right of $t_i$, in the following two moves, $i+2$ symbols are moved of which $i$ are skipped.
\begin{equation*}
\begin{gathered}
([t_1,t_2,\ldots,(s_i+3)],(s_i-1),(s_i+2),(s_i+1),s_i,t_i,\ldots,*(s_i+4),\ldots)\\
\rightarrow ((s_i-1),(s_i+2),(s_i+1),s_i,t_i,\ldots,t_1,t_2,\ldots,(s_i+3),(s_i+4),\ldots)\\
([(s_i-1),(s_i+2),(s_i+1),s_i],t_i,\ldots,t_1,t_2,\ldots,(s_i+3),(s_i+4),\ldots,(s_i-2)*,\ldots)\\
\rightarrow (t_i,\ldots,t_1,t_2,\ldots,(s_i+3),(s_i+4),\ldots,(s_i-2),(s_i-1),(s_i+2),(s_i+1),s_i,\ldots)\\[3ex]
\end{gathered}
\end{equation*}
Hence by Lemma \ref{rf}, an upper bound is given by $n-\log_{(\frac{i+2}{i})}n$ which maximises the base of the logarithm to $\frac{6}{4}$ when $i=8$. Thus, $n-\log_{(\frac{3}{2})}n$ is an upper bound. Further, note that this sequence of moves can be executed even if $s_i-2=t_i$.

(ii) When $s_i+4$ is visited and lies to the left of $t_i$, the permutation takes the form $\pi=(t_1,t_2,\ldots,(s_i+4),(s_i+3),(s_i-1),(s_i+2),(s_i+1),s_i,t_i,\ldots,x,y)$. Then in the following four moves
\begin{equation*}
\begin{gathered}
([t_1,t_2,\ldots,(s_i+4),(s_i+3),(s_i-1)],(s_i+2),(s_i+1),*s_i,t_i,\ldots)\\
\rightarrow ((s_i+2),(s_i+1),t_1,t_2,\ldots,(s_i+4),(s_i+3),(s_i-1),s_i,t_i,\ldots)\\
\end{gathered}
\end{equation*}
\begin{equation*}
\begin{gathered}
([(s_i+2)],(s_i+1),t_1,t_2,\ldots,(s_i+4),*(s_i+3),(s_i-1),s_i,t_i,\ldots)\\
\rightarrow ((s_i+1),t_1,t_2,\ldots,(s_i+4),(s_i+2),(s_i+3),(s_i-1),s_i,t_i,\ldots)\\
([(s_i+1),t_1,t_2,\ldots,(s_i+4),(s_i+2),(s_i+3)],(s_i-1),s_i,*t_i,\ldots)\\
\rightarrow ((s_i-1),s_i,(s_i+1),t_1,t_2,\ldots,(s_i+4),(s_i+2),(s_i+3),t_i,\ldots)\\
([(s_i-1),s_i,(s_i+1),t_1,t_2,\ldots,(s_i+4),(s_i+2),(s_i+3)],t_i,\ldots,(s_i-2)*,\ldots)\\
\rightarrow (t_i,\ldots,(s_i-2),(s_i-1),s_i,(s_i+1),t_1,t_2,\ldots,(s_i+4),(s_i+2),(s_i+3),\ldots)\\[3ex]
\end{gathered}
\end{equation*}
at least $i+2$ symbols are moved of which $i-2$ are skipped. Hence by Lemma \ref{rf}, an upper bound is given by $n-\log_{(\frac{i+2}{i-2})}n$ which maximises the base of the logarithm to $\frac{6}{2}$ when $i=4$. Thus, $n-\log_{3}n$ is an upper bound.

Hence from each of the cases in the lemma, an upper bound for $i\ge 4$ when $s_i+ 1$ is a skipped symbol that lies to the left of $t_i$ is $n-\log_{3}n$
\end{proof}
\begin{lemma}\label{l4.4}
If $i\ge 4$ and $s_i+ 1$ is a visited symbol that lies to the left of $t_i$ then $n-\log_{3}n$ is an upper bound.
\end{lemma}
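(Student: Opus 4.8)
The plan is to finish the case analysis begun in Lemmas \ref{l4.2} and \ref{l4.3} by treating the last remaining location of $s_i+1$, following the same recursive pattern that produced the weaker bound in Lemma \ref{l3.3}. First I would pin down the shape of $\pi$. Since $s_i+1$ is visited and $s_i+1>s_i>s_i-l=t_{i-1}$, and distances to $x$ strictly decrease along the visited symbols, the symbol $s_i+1$ is forced to lie to the left of $s_i-l$. Thus $\pi=(t_1,\dots,(s_i+1),\dots,(s_i-l),\dots,s_i,t_i,\dots,x,y)$, and the two relevant pairs of consecutive values are $\{s_i-l,\,s_i-l+1\}$ and $\{s_i,\,s_i+1\}$.

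Next I would branch on the position of $s_i-l+1=t_{i-1}+1$, exactly as in Lemma \ref{l4.3}. If $s_i-l+1$ is skipped and lies to the right of $t_i$, a short block of two or three alternate moves that first sends the prefix through $s_i-l$ to just before $s_i-l+1$ (creating the adjacency $(s_i-l)(s_i-l+1)$) and then assembles the second adjacency moves about $i+1$ symbols while skipping at most $i-2$, giving base at most $\frac{5}{2}$. If $s_i-l+1$ is visited, it must sit immediately to the left of $s_i-l$ (otherwise $s_i-l$ would itself be skipped), and I would recurse once more on $s_i-l+2$, splitting again into ``right of $t_i$'', ``visited immediately to the left'', and ``skipped inside the interval''. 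If $s_i-l+1$ is skipped inside the $i^\text{th}$ interval it must coincide with $s_i$ (forcing $l=1$) or with $c$, and I would then examine $c+1$ and the next consecutive value as in cases 3.1--3.2 of Lemma \ref{l3.3}.

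For each terminal branch I would either exhibit a double reached in few enough moves (yielding an explicit fractional bound such as $\frac{5n}{7}$ or $\frac{3n}{4}$), or run the regular greedy moves until the appropriate large symbol becomes the first symbol and then apply one or two alternate moves that reposition the entire consecutive cluster at once. In the latter situation I expect to move $i+1$ or $i+2$ symbols while skipping exactly three, so that Lemma \ref{rf} gives base $\frac{i+1}{3}$ or $\frac{i+2}{i-2}$, whose maximum over $4\le i\le 8$ is exactly $3$. Combined with the shallower subcases above (all of which give base at most $3$), this yields the upper bound $n-\log_3 n$ for the lemma.

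The hard part will be the bookkeeping that keeps the skip count pinned at three rather than four. Because Chapter 4 permits up to three skipped symbols in the $i^\text{th}$ interval, every time the recursion on $s_i-l+1,\,s_i-l+2,\dots$ would force a fourth skipped symbol I must stop constructing moves and instead invoke Observation \ref{obs4.1}, which already delivers $n-\log_{11/4}n<n-\log_3 n$. Correctly diagnosing exactly when a fourth skip is unavoidable, and verifying in every remaining branch that the repositioning moves leave precisely three symbols skipped, is the delicate step. Once all branches are checked, this lemma together with Lemmas \ref{l4.1}--\ref{l4.3} exhausts all positions of $s_i+1$ and establishes the $n-\log_3 n$ upper bound.
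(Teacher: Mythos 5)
Your overall strategy --- fixing the shape $\pi=(t_1,\dots,(s_i+1),\dots,(s_i-l),\dots,s_i,t_i,\dots,x,y)$, branching on the position of $s_i-l+1$ and then on deeper consecutive values, closing each terminal branch with Lemma \ref{rf}, and falling back on Observation \ref{obs4.1} whenever a fourth skip is forced --- is exactly the route the paper takes, and your identification of the binding worst case (moving $i+1$ symbols while skipping three, base $\tfrac{9}{3}=3$ at $i=8$) matches the paper's. However, there is a concrete hole in your enumeration of the branch where $s_i-l+1$ is skipped inside the $i^{\text{th}}$ interval. You assert it must coincide with $s_i$ (forcing $l=1$) or with $c$; but coinciding with $c$, the first skipped symbol of the interval, is impossible, since $c$ immediately follows $s_i-l$ and this would create the adjacency $(s_i-l)(s_i-l+1)$ in a reduced permutation, so your dichotomy collapses to $l=1$. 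That is too strong in Chapter 4, where the interval may carry three skipped symbols: $s_i-l+1$ can be the \emph{middle} skipped symbol with $l>1$, i.e.\ $\pi=(\dots,(s_i-l),c,(s_i-l+1),s_i,t_i,\dots)$. The paper's Case 3 with $(s_i-l+1)\ne s_i$ is devoted to exactly this configuration and requires its own sub-branching on $s_i-l+2$ and $s_i-l+3$; it is not subsumed by the moves you describe.

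A second, related under-specification sits in the $l=1$ branch, where you propose to examine $c+1$ ``as in cases 3.1--3.2 of Lemma \ref{l3.3}.'' Case 3.2 of Lemma \ref{l3.3} reaches a double in at most $i-3$ moves precisely because the interval there has only the two skipped symbols $c$ and $s_i$, so the sublist $(c+1),\dots,(s_i-1)$ can be dropped directly in front of $c$. With a possible third skipped symbol between $c$ and $s_i$ that double is no longer available; the paper instead branches on the positions of both $c+1$ and $c-1$, and the hardest sub-branch (both to the left of $t_i$, with $c-1$ visited) is where the base-$3$ bound is actually attained. Your plan lands on the right answer, but these two branches must be worked out explicitly before the case analysis is exhaustive.
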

\begin{proof}
$\pi=(t_1,t_2,\ldots,(s_i+1),\ldots,(s_i-l),\ldots,s_i,t_i,\ldots,x,y)$. Here we shall consider the position of the symbol $(s_i-l+1)$ in $\pi$.\\

\emph{Case 1}: $(s_i-l+1)$ lies to the right of $t_i$. Here we need not consider the case where $l=1$, since then $(s_i-l+1)=s_i$. Then in the following two moves
\begin{equation*}
\begin{gathered}
([t_1,t_2,\ldots,(s_i+1),\ldots,(s_i-l)],\ldots,s_i,t_i,\ldots,*(s_i-l+1),\ldots)\\
\rightarrow (\ldots,s_i,t_i,\ldots,t_1,t_2,\ldots,(s_i+1),\ldots,(s_i-l),(s_i-l+1),\ldots)\\
\end{gathered}
\end{equation*}
\begin{equation*}
\begin{gathered}
([\ldots,s_i],t_i,\ldots,t_1,t_2,\ldots,*(s_i+1),\ldots,(s_i-l),(s_i-l+1),\ldots)\\
\rightarrow (t_i,\ldots,t_1,t_2,\ldots,s_i,(s_i+1),\ldots,(s_i-l),(s_i-l+1),\ldots)\\[3ex]
\end{gathered}
\end{equation*}
at least $i$ symbols are moved of which $i-2$ are skipped. Hence by Lemma \ref{rf}, an upper bound is given by $n-\log_{(\frac{i}{i-2})}n$ which maximises the base of the logarithm to $\frac{4}{2}$ when $i=4$. Thus, $n-\log_{2}n$ is an upper bound.\\

\emph{Case 2}: $(s_i-l+1)$ is visited and lies to the left of $t_i$. Here we need not consider the case where $l=1$, since then $(s_i-l+1)=s_i$. Consider the position of the symbol $(s_i-l+2)$ in $\pi$.

\emph{Case 2.1}: $(s_i-l+2)$ lies to the right of $(s_i-l+1)$. Here $(s_i-l+2)$ may either be a skipped symbol in the $i^\text {th}$ interval or lie to the right of the visited symbol $t_i$. Then in the following two moves
\begin{equation*}
\begin{gathered}
([t_1,t_2,\ldots,(s_i+1),\ldots,(s_i-l+1)],(s_i-l),\ldots,s_i,t_i,\ldots,*(s_i-l+2),\ldots)\\
\rightarrow ((s_i-l),\ldots,s_i,t_i,\ldots,t_1,t_2,\ldots,(s_i+1),\ldots,(s_i-l+1),(s_i-l+2),\ldots)\\
([(s_i-l),\ldots,s_i],t_i,\ldots,(s_i-l-1)*,\ldots)
\rightarrow (t_i,\ldots,(s_i-l-1),(s_i-l),\ldots,s_i,\ldots)\\[3ex]
\end{gathered}
\end{equation*}
at least $i$ symbols are moved of which $i-2$ are skipped. Hence by Lemma \ref{rf}, an upper bound is given by $n-\log_{(\frac{i}{i-2})}n$ which maximises the base of the logarithm to $\frac{4}{2}$ when $i=4$. Thus, $n-\log_{2}n$ is an upper bound. Further note that this sequence of moves can be executed even if $(s_i-l-1)=t_i$.

\emph{Case 2.2}: $(s_i-l+2)$ is a visited symbol and lies to the left of $(s_i-l+1)$. Follow the usual greedy moves for all visited symbols until $s_i+1$ becomes the first symbol. Then we make the following move.

\begin{equation*}
\begin{gathered}
([(s_i+1),\ldots,(s_i-l+2),(s_i-l+1)],(s_i-l),\ldots,s_i,*t_i,\ldots)\\
\rightarrow ((s_i-l),\ldots,s_i,(s_i+1),\ldots,(s_i-l+2),(s_i-l+1),t_i,\ldots)\\[3ex]
\end{gathered}
\end{equation*}
where at least 2 visited symbols are skipped. So here in at most $i-3$ at least $i$ symbols are moved of which three are skipped. Hence by Lemma \ref{rf}, an upper bound is given by $n-\log_{(\frac{i}{3})}n$ which maximises the base of the logarithm to $\frac{8}{3}$ when $i=8$. Thus, $n-\log_{(\frac{8}{3})}n$ is an upper bound.\\

\emph{Case 3}: $(s_i-l+1)$ is skipped and lies to the left of $t_i$. Then $(s_i-l+1)$ cannot be the first skipped element, for then $s_i-l$ and $(s_i-l+1)$ would be consecutive symbols in the permutation and hence form an adjacency. In this case we shall find an upper bound for two different subcases - (1) $(s_i-l+1)\ne s_i$ (when $l \ne 1$) and (2) $(s_i-l+1)= s_i$ (when $l = 1$).

Consider $(s_i-l+1)\ne s_i$. Then the permutation $\pi$ equals $(t_1,t_2,\ldots,(s_i+1),\ldots,(s_i-l),c,(s_i-l+1),s_i,t_i,\ldots,s_k,t_k,\ldots,x,y)$. Consider the symbol $(s_i-l+2)$ in $\pi$.

\emph{Case 3.1}: $(s_i-l+2)$ lies to the right of $t_i$. Consider the following two moves
\begin{equation*}
\begin{gathered}
([t_1,t_2,\ldots,(s_i+1),\ldots,(s_i-l),c,(s_i-l+1)],s_i,t_i,\ldots,*(s_i-l+2),\ldots)\\
\rightarrow (s_i,t_i,\ldots,t_1,t_2,\ldots,(s_i+1),\ldots,(s_i-l),c,(s_i-l+1),(s_i-l+2),\ldots)\\
([s_i],t_i,\ldots,t_1,t_2,\ldots,*(s_i+1),\ldots,(s_i-l),c,(s_i-l+1),(s_i-l+2),\ldots)\\
\rightarrow (t_i,\ldots,t_1,t_2,\ldots,s_i,(s_i+1),\ldots,(s_i-l),c,(s_i-l+1),(s_i-l+2),\ldots)\\
\end{gathered}
\end{equation*}
in which $i+2$ symbols are moved of which $i$ are skipped. Hence by Lemma \ref{rf}, an upper bound is given by $n-\log_{(\frac{i+2}{i})}n$ which maximises the base of the logarithm to $\frac{6}{4}$ when $i=4$. Thus, $n-\log_{(\frac{3}{2})}n$ is an upper bound.

\emph{Case 3.2}: $(s_i-l+2)$ is a visited symbol and lies to the left of $t_i$. Follow the usual greedy moves for all visited symbols
until $s_i+1$ becomes the first symbol. Then we do the following move where at least one visited symbol is skipped.
\begin{equation*}
\begin{gathered}
([(s_i+1),\ldots,(s_i-l+2)](s_i-l),c,(s_i-l+1)],s_i,*t_i,\ldots)\\
\rightarrow ((s_i-l),c,(s_i-l+1)],s_i,(s_i+1),\ldots,(s_i-l+2),t_i,\ldots)\\
([(s_i-l),c,(s_i-l+1)],s_i,(s_i+1),\ldots,(s_i-l+2)]t_i,\ldots,(s_i-l-1)*,\ldots)\\
\rightarrow (t_i,\ldots,(s_i-l-1),(s_i-l),c,(s_i-l+1)],s_i,(s_i+1),\ldots,(s_i-l+2),\ldots)\\[3ex]
\end{gathered}
\end{equation*}
Note that we can execute these moves even when $(s_i-l-1)=t_i$. So here in at most $i-2$ moves at least $i+2$ symbols are moved of which four are skipped. Hence by Lemma \ref{rf}, an upper bound is given by $n-\log_{(\frac{i+2}{4})}n$ which maximises the base of the logarithm to $\frac{10}{4}$ when $i=8$. Thus, $n-\log_{(\frac{5}{2})}n$ is an upper bound.

\emph{Case 3.3}: $(s_i-l+2)$ is skipped and lies to the left of $t_i$. Then permutation $\pi$ equals $(t_1,t_2,\ldots,(s_i+1),\ldots,(s_i-l),(s_i-l+2),(s_i-l+1),s_i,t_i,\ldots,s_k,t_k,\ldots,x,y)$. The symbol $(s_i-l+3)$ lies either to the right of $t_i$ or shall be a visited symbol after $s_i+1$. Note: if $(s_i-l+3)=s_i+1$ then $(s_i-l+2)=s_i$ and hence $(s_i-l+1)(s_i-l+2)$ would form an adjacency.

(a) If $(s_i-l+3)$ lies to the right of $t_i$, then in the following three moves
\begin{equation*}
\begin{gathered}
([t_1,t_2,\ldots,(s_i+1),\ldots,(s_i-l),(s_i-l+2)],(s_i-l+1),s_i,t_i,\ldots,*(s_i-l+3),\ldots)\\
\rightarrow ((s_i-l+1),s_i,t_i,\ldots,t_1,t_2,\ldots,(s_i+1),\ldots,(s_i-l),(s_i-l+2),(s_i-l+3),\ldots)\\
([(s_i-l+1)],s_i,t_i,\ldots,t_1,t_2,\ldots,(s_i+1),\ldots,(s_i-l),*(s_i-l+2),(s_i-l+3),\ldots)\\
\rightarrow (s_i,t_i,\ldots,t_1,t_2,\ldots,(s_i+1),\ldots,(s_i-l),(s_i-l+1),(s_i-l+2),(s_i-l+3),\ldots)\\
([s_i],t_i,\ldots,t_1,t_2,\ldots,*(s_i+1),\ldots,(s_i-l),(s_i-l+1),(s_i-l+2),(s_i-l+3),\ldots)\\
\rightarrow (t_i,\ldots,t_1,t_2,\ldots,s_i,(s_i+1),\ldots,(s_i-l),(s_i-l+1),(s_i-l+2),(s_i-l+3),\ldots)\\[3ex]
\end{gathered}
\end{equation*}
$i+2$ symbols are moved of which $i-1$ are skipped. Hence by Lemma \ref{rf}, an upper bound is given by $n-\log_{(\frac{i+2}{i-1})}n$ which maximises the base of the logarithm to $\frac{6}{3}$ when $i=4$. Thus, $n-\log_{2}n$ is an upper bound.

(b) If $(s_i-l+3)$ is a visited symbol after $s_i+1$. Then $\pi=(t_1,t_2,\ldots,(s_i+1),\ldots,(s_i-l+3),(s_i-l),(s_i-l+2),(s_i-l+1),s_i,t_i,\ldots,s_k,t_k,\ldots,x,y)$. Here we follow the usual greedy moves for all visited symbols until $s_i+1$ becomes the first symbol. Then we do the following move where at least one visited symbol is skipped.
\begin{equation*}
\begin{gathered}
([(s_i+1),\ldots,(s_i-l+3)](s_i-l),(s_i-l+2),(s_i-l+1)],s_i,*t_i,\ldots)\\
\rightarrow ((s_i-l),(s_i-l+2),(s_i-l+1),s_i,(s_i+1),\ldots,(s_i-l+3),t_i,\ldots)\\
([(s_i-l),(s_i-l+2),\ldots,(s_i-l+3)],t_i,\ldots,(s_i-l-1)*,\ldots)\\
\rightarrow (t_i,\ldots,(s_i-l-1),(s_i-l),(s_i-l+2),\ldots,(s_i-l+3),\ldots)\\[3ex]
\end{gathered}
\end{equation*}
Note that we can execute these moves even when $(s_i-l-1)=t_i$. So here in at most $i-2$ moves at least $i+2$ symbols are moved of which four are skipped. Hence by Lemma \ref{rf}, an upper bound is given by $n-\log_{(\frac{i+2}{4})}n$ which maximises the base of the logarithm to $\frac{10}{4}$ when $i=8$. Thus, $n-\log_{(\frac{5}{2})}n$ is an upper bound.

Now we shall consider the second part of the proof where $(s_i-l+1)= s_i$ (when $l = 1$). Clearly $s_i$ is not the only skipped symbol in the $i^\text{th}$ interval because then there would be an adjacency between $s_i-1$ and $s_i$ . Let $c$ be the first skipped symbol in the $i^\text{th}$ interval. Then $\pi=(t_1,t_2,\ldots,(s_i+1),\ldots,(s_i-1),c,\ldots,s_i,t_i,\ldots,s_k,t_k,\ldots,x,y)$.

(a) Suppose that $c+1$ lies to the right of $t_i$. Consider the following two moves:

\begin{equation*}
\begin{gathered}
([t_1,t_2,\ldots,(s_i+1),\ldots,(s_i-1),c],\ldots,s_i,t_i,\ldots,*(c+1),\ldots)\\
\rightarrow (\ldots,s_i,t_i,\ldots,t_1,t_2,\ldots,(s_i+1),\ldots,(s_i-1),c,(c+1),\ldots)\\
([\ldots,s_i],t_i,\ldots,t_1,t_2,\ldots,*(s_i+1),\ldots,(s_i-1),c,(c+1),\ldots)\\
\rightarrow (t_i,\ldots,t_1,t_2,\ldots,s_i,(s_i+1),\ldots,(s_i-1),c,(c+1),\ldots)\\[3ex]
\end{gathered}
\end{equation*}
They move $i+1$ symbols of which $i-1$ are skipped. Hence by Lemma \ref{rf}, an upper bound is given by $n-\log_{(\frac{i+1}{i-1})}n$ which maximises the base of the logarithm to $\frac{5}{3}$ when $i=4$. Thus, $n-\log_{(\frac{5}{3})}n$ is an upper bound.

(b) Suppose that $c-1$ lies to the right of $t_i$. Consider the following two moves:

\begin{equation*}
\begin{gathered}
([t_1,t_2,\ldots,(s_i+1),\ldots,(s_i-1)],c,\ldots,*s_i,t_i,\ldots,(c-1),\ldots)\\
\rightarrow (c,\ldots,t_1,t_2,\ldots,(s_i+1),\ldots,(s_i-1),s_i,t_i,\ldots,(c-1),\ldots)\\
([c,\ldots,t_1,t_2,\ldots,(s_i+1),\ldots,(s_i-1),s_i],t_i,\ldots,(c-1)*,\ldots)\\
\rightarrow (t_i,\ldots,(c-1),c,\ldots,t_1,t_2,\ldots,(s_i+1),\ldots,(s_i-1),s_i,\ldots)\\[3ex]
\end{gathered}
\end{equation*}
They move $i+1$ symbols of which $i-1$ are skipped. Hence by Lemma \ref{rf}, an upper bound is given by $n-\log_{(\frac{i+1}{i-1})}n$ which maximises the base of the logarithm to $\frac{5}{3}$ when $i=4$. Thus, $n-\log_{(\frac{5}{3})}n$ is an upper bound.

(c) $c+1$ and $c-1$ lie to the left of $t_i$. If $c+1$ is a skipped element, then there should be at least one more skipped symbol between $c$ and $c+1$, else we get an adjacency between $c$ and $c+1$ in $\pi$. So, then the number of skipped elements becomes four. Hence, we shall assume that $c+1$ is a visited symbol. Now we shall consider two subcases according to the position of $c-1$ in $\pi$. 

(i) When $c-1$ is a visited symbol. Then the permutation $\pi$ equals $(t_1,t_2,\ldots,(c+1),(c-1),\ldots,(s_i-1),c,\ldots,s_i,t_i,\ldots,s_k,t_k,\ldots,x,y)$. The first move that we execute here is 
\begin{equation*}
\begin{gathered}
([t_1,t_2,\ldots,(c+1),(c-1)],t_j,\ldots,(s_i-1),*c,\ldots,s_i,t_i,\ldots)\\
\rightarrow (t_j,\ldots,(s_i-1),t_1,t_2,\ldots,(c+1),(c-1),c,\ldots,s_i,t_i,\ldots)\\[3ex]
\end{gathered}
\end{equation*}
This move skips at least one visited symbol (namely $c-1$). Later we execute the regular greedy moves from the visited symbol $t_j$. In this sequence of at most $i-2$ moves, we move at least $i+1$ symbols and skip three. Hence by Lemma \ref{rf}, an upper bound is given by $n-\log_{(\frac{i+1}{3})}n$ which maximises the base of the logarithm to $\frac{9}{3}$ when $i=8$. Thus, $n-\log_{3}n$ is an upper bound.

(ii) When $c-1$ is skipped, $\pi$ becomes $(t_1,t_2,\ldots,(c+1),\ldots,(s_i+1),\ldots,(s_i-1),c,(c-1),s_i,t_i,\ldots,s_k,t_k,\ldots,x,y)$. Here we follow the regular greedy moves until $c+1$ becomes the first symbol. The next move is 
\begin{equation*}
\begin{gathered}
([(c+1),\ldots,(s_i+1)],t_j,\ldots,(s_i-1),c,*(c-1),s_i,t_i,\ldots)\\
\rightarrow (t_j,\ldots,(s_i-1),c,(c+1),\ldots,(s_i+1),(c-1),s_i,t_i,\ldots)\\[3ex]
\end{gathered}
\end{equation*}
which skips at least one visited symbol (namely $s_i+1$). Later we execute the regular greedy moves from the visited symbol $t_j$. In this sequence of at most $i-2$ moves, we move at least $i+2$ symbols and skip four. Hence by Lemma \ref{rf}, an upper bound is given by $n-\log_{(\frac{i+2}{4})}n$ which maximises the base of the logarithm to $\frac{10}{4}$ when $i=8$. Thus, $n-\log_{(\frac{5}{2})}n$ is an upper bound.

By considering all the cases discussed in the lemma, an upper bound for $i\ge4$ where $s_i+1$ is a visited symbol and lies to the left of $t_i$ is $n-\log_{3}n$.
\end{proof}

\begin{theorem}
$n-\log_{3}n$ is an upper bound to sort permutations with prefix transpositions.
\end{theorem}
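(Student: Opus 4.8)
The plan is to assemble the theorem directly from the case analysis of Lemmas \ref{l4.1}--\ref{l4.4}, whose only remaining task is to check that these cases exhaust every admissible permutation and then to read off the largest (weakest) of the bounds they produce. First I would invoke Christie's reduction to assume that $\pi \in S_n$ is reduced, and run the generalised sequence length algorithm to relabel it as $\pi = (t_1, t_2, \ldots, s_i, t_i, \ldots, s_k, t_k, \ldots, x, y)$, where the $i$th interval is the first to contain a skipped symbol. By Chitturi's bound on the length of a run of visited symbols with no skip, $i \le 8$.

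The argument then splits on $i$. If $i \le 3$, Lemma \ref{l4.1} already gives the bound $n - \log_3 n$. If $i \ge 4$, everything turns on the location of the single symbol $s_i + 1$. The structural fact I would verify is that $s_i + 1 \ne t_i$: since $s_i$ is the last skipped symbol of the $i$th interval, $t_i$ immediately follows $s_i$ in $\pi$, so $s_i + 1 = t_i$ would be an adjacency and violate reducedness. Consequently $s_i + 1$ lies strictly to the right of $t_i$, or strictly to its left, and in the latter situation it is either skipped or visited. These three possibilities are exactly the hypotheses of Lemmas \ref{l4.2}, \ref{l4.3}, and \ref{l4.4} respectively, so the case list is complete.

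It remains to choose the uniform bound. Each case yields, via Lemma \ref{rf}, a bound $n - \log_b n$ with $b$ the ratio of symbols moved to symbols skipped before a double is forced; since $n - \log_b n$ increases with $b$, the binding case is the one with the largest base. Lemma \ref{l4.2} gives base $\tfrac{4}{3}$, the auxiliary situation of four or more skips in the $i$th interval gives base $\tfrac{11}{4}$ by Observation \ref{obs4.1}, and the worst subcases of Lemmas \ref{l4.1}, \ref{l4.3}, and \ref{l4.4} attain base exactly $3$ (some maximised at $i = 4$, others at $i = 8$). As every base is therefore at most $3$, the maximum over all cases is $n - \log_3 n$. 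Finally, after each double the permutation is reduced and the algorithm recurses; applying the recurrence of Lemma \ref{rf} with the worst-case base $3$ at every level accumulates these rounds into the global bound $n - \log_3 n$.

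I expect the main difficulty to lie in establishing exhaustiveness rather than in any fresh computation: one must be certain that no configuration of $\pi$ slips between the three positions allotted to $s_i + 1$, which depends on the reducedness constraint forbidding $s_i + 1 = t_i$ and, inside Lemma \ref{l4.4}, on the monotonicity invariant $dist(t_{j+1}, x) < dist(t_j, x)$ that pins a visited $s_i + 1$ to the left of $s_i - l$. A secondary check is that the base is genuinely capped at $3$ and not exceeded at the extreme value $i = 8$, where several subcases of Lemmas \ref{l4.3} and \ref{l4.4} are simultaneously tight; verifying that each such ratio evaluates to at most $3$ over $4 \le i \le 8$ is routine but must be done for every subcase to be sure the stated bound is not weakened.
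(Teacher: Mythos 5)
Your plan follows the paper's own route almost exactly: the paper's proof of this theorem is precisely the assembly of Observation \ref{obs4.1} with Lemmas \ref{l4.1}--\ref{l4.4}, split on $i\le 3$ versus $i\ge 4$ and, in the latter case, on the three possible locations of $s_i+1$. Your check that $s_i+1\ne t_i$ by reducedness, and your identification of base $3$ as the worst ratio over all subcases, are both sound and match what the paper does.

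However, your case list is not quite exhaustive, and the omission is exactly of the kind you flag as the main risk. The bound $i\le 8$ from Chitturi holds only when $R_8'$ is \emph{not} a prefix of $\pi$: the cited result says that at most seven consecutive visited symbols occur with no skipped symbol between them \emph{unless} the prefix has the form $R_8'$, in which case there may be no skipped symbol among the first eight intervals at all, and your relabelling never produces an $s_i$ with $i\le 8$. The paper's proof opens by disposing of this branch separately: if $R_8'$ is a prefix, the Dias--Meidanis sequence creates seven adjacencies in six moves, yielding the (stronger) bound $\frac{3n}{4}$, and only then does it assume that a skipped symbol exists in some interval $i\le 8$. You need to add this case for the exhaustiveness argument, and hence the theorem, to go through. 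Everything else in your outline --- the branch with four or more skips handled by Observation \ref{obs4.1} with base $\frac{11}{4}<3$, and the recursion through Lemma \ref{rf} after each double --- agrees with the paper.
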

\begin{proof}
If $R'_8$ is a prefix of $\pi$ then the corresponding upper bound is $\frac{3n}{4}$. If $R'_8$ is not a prefix, we encounter at least one skipped symbol $s_i$, where $1\le i\le8$. If more than 3 elements are skipped in at most 7 greedy moves, then by Observation \ref{obs4.1} we obtain an upper bound of $n-\log_{(\frac{11}{4})}n$. If the number of skipped symbols is at most three, Lemmas \ref{l4.1}, \ref{l4.2}, \ref{l4.3} and \ref{l4.4} prove an upper bound of $n-\log_{3}n$. Hence the theorem.
\end{proof}

\chapter{$\mathbf{\emph{n}-log_2\emph{n}}$ Upper Bound to Sort Permutations with Prefix Transpositions using Blocks}
\section{Introduction}
A natural way to improve the upper bound to sort permutations with prefix transpositions further is to increase the number of skipped or unvisited symbols of the permutation and use the sequence length algorithm to attain a double in the smallest number of regular or alternate greedy moves. In this chapter we shall improve the upper bound from $n-\log_{3}n$ to $n-\log_{2}n$ \cite{nair2021}. This is the best upper bound to date to sort permutations with prefix transpositions. In the previous chapter, we assumed that the $i^\text {th}$ interval has at most three skipped symbols. On increasing the number of skipped symbols in the $i^\text {th}$ interval, the number of alternate moves that we need to find would be very high; thus, the proof would be very lengthy and complicated. In this chapter, we shall follow a different approach to improve the upper bound using the concept of a \emph{block} along with the sequence length algorithm. 

\section{Block}
\begin{definition}
A \textbf{block} of a permutation $\pi$ with $n$ symbols is a sublist of $\pi$ with at least two elements and the additional property that when this sublist is sorted, then it becomes a substring of the sorted permutation $I_n$.
\end{definition}
For example, consider the permutation $\pi=(6,5,7,0,4,2,1,3,8)$ with nine symbols. Here $(6,5)$ is a block of $\pi$ with 2 symbols, $(6,5,7)$ and $(2,1,3)$ are blocks with three symbols, $(0,4,2,1,3)$ is a block containing five symbols and $(6,5,7,0,4,2,1,3)$ is a block with eight symbols. Note that a permutation $\pi$ with $n$ symbols is always a block with $n$ symbols.

\begin{lemma}
\label{lemmanew}
Let $C$ be a block of a permutation $\pi$ and $C \ne \pi$. If $C$ has some visited symbols then the last visited symbol in $C$ will be the smallest element in $C$, say $C_{min}$. $C$ has all the elements in the closed interval $[ C_{min}, C_{max}]$ where $C_{max}$ is the greatest element in $C$.
\end{lemma}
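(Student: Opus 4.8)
The plan is to prove the two assertions in turn, deferring the easy one and concentrating effort on the claim that the last visited symbol of $C$ is $C_{min}$.

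First I would dispose of the final sentence, which is essentially a restatement of the definition. Since sorting $C$ produces a substring of $I_n=(0,1,\dots,n-1)$, and every substring of $I_n$ is a run of consecutive integers, the elements of $C$ are exactly $\{C_{min},C_{min}+1,\dots,C_{max}\}$; in particular $C$ contains every integer in the closed interval $[C_{min},C_{max}]$. I record this at the outset because the rest of the argument repeatedly exploits that $C$ is a block of consecutive \emph{values} occupying a \emph{contiguous} block of positions, say positions $P$ through $Q$ of $\pi$.

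For the first assertion I would argue positionally, using Chitturi's third condition. Because the relabelling lists $t_1,t_2,\dots,t_k$ in left-to-right order, the \emph{last} visited symbol of $C$ is its rightmost visited symbol; call it $t_j$. Suppose for contradiction that $t_j\neq C_{min}$, so $t_j>C_{min}$. Then $t_j-1$ lies in $[C_{min},C_{max}]$, so by the paragraph above it is a symbol of $C$ and occupies a position in $[P,Q]$. On the other hand the next visited symbol $t_{j+1}$ cannot belong to $C$, for otherwise $t_j$ would not be the rightmost visited symbol of $C$; since $C$ fills the contiguous range $[P,Q]$ and $t_{j+1}$ sits to the right of $t_j$, it must occupy a position strictly greater than $Q$, and hence the skipped symbol $s_{j+1}$ immediately preceding it sits at a position at least $Q$. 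But the third condition of the sequence length algorithm forces $t_j-1$ to lie to the right of $s_{j+1}$, whereas $t_j-1$ is at a position $\le Q\le \mathrm{pos}(s_{j+1})$. This contradiction yields $t_j=C_{min}$, which simultaneously shows that $C_{min}$ is visited and that it is the last visited symbol of $C$.

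I expect the genuine difficulty to be not this positional core but the boundary bookkeeping around the three distinguished symbols $x$, $y$, and $t_k=x+1$, none of which is a skipped symbol. The contradiction above invokes both a successor $t_{j+1}$ and the predecessor value $t_j-1$, so I must confirm they stay inside the framework. If $t_j$ happens to be the terminal visited symbol $t_k=x+1$, there is no $t_{j+1}$ and the argument must be run differently: here $t_j-1=x$, and since $x$ is neither a visited nor a skipped symbol of $C$, I would argue that $x\notin[C_{min},C_{max}]$ and conclude directly that $t_k=C_{min}$. In the generic case I would likewise verify $t_j-1\notin\{x,y\}$, so that $t_j-1$ is a bona fide visited or skipped symbol of $C$ and the appeal to consecutiveness is legitimate. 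As a cross-check I may note the alternative route through the second condition: the last visited symbol of $C$ minimises $dist(\cdot,x)$ among the visited symbols of $C$, and on a consecutive range that does not wrap past $x$ smaller distance means smaller value, again pointing to $C_{min}$; but this route still requires the same care about the location of $x$ and, unlike the positional argument, does not by itself guarantee that $C_{min}$ is visited, which is why I would take the condition-(3) argument as primary.
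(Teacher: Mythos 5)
Your route is genuinely different from the paper's, and it has one concrete hole. The paper does not argue by contradiction through condition (3) at all: it writes $t_{j-1}$ for the last visited symbol of $C$ and observes directly, from the distance-monotonicity property (condition (2)), that every symbol to the left of $t_{j-1}$ in $\pi$ exceeds $t_{j-1}$, while every symbol of $C$ to the right of $t_{j-1}$ is a skipped symbol of the $j$\textsuperscript{th} interval and therefore also exceeds $t_{j-1}$; since $C$ is contiguous these two sets exhaust $C\setminus\{t_{j-1}\}$, so $t_{j-1}=C_{min}$. Your positional argument via condition (3) is a legitimate alternative in the generic case, but it silently assumes that $s_{j+1}$ exists, and the paper explicitly warns that it need not: the $(j+1)$\textsuperscript{th} interval may contain no skipped symbol, i.e.\ $t_{j+1}$ may immediately follow $t_j$. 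In that situation your chain ``$\mathrm{pos}(s_{j+1})\ge Q$, hence $\mathrm{pos}(t_j-1)>Q$'' never gets started, and the configuration you must exclude ($t_j$ occupying the last position $Q$ of $C$ with $t_j-1$ inside $C$ to its left) is not ruled out by anything you have said. The repair is exactly the route you relegate to a ``cross-check'': by condition (2) every symbol strictly to the left of $t_j$ has $dist(\cdot,x)$ strictly larger than $dist(t_j,x)$, whereas $dist(t_j-1,x)=dist(t_j,x)-1$, so $t_j-1$ cannot lie to the left of $t_j$ at all. Promote that to the main argument (or at least to the missing case) and you have, in substance, the paper's proof.

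Your instinct about the boundary symbols is sound, but your proposed resolution of the $t_j=t_k$ case cannot be carried out: you cannot ``argue that $x\notin[C_{min},C_{max}]$'' from the stated hypotheses, because nothing in the definition of a block prevents a proper block from containing $x$ and $y$. For instance, in the reduced permutation $\pi=(5,2,0,4,1,3)$ one has $x=1$, $y=3$, visited symbols $t_1=5$, $t_2=2=x+1$, and $C=(2,0,4,1,3)$ is a block with $C\ne\pi$ whose last visited symbol is $2$ while $C_{min}=0$; so the lemma itself tacitly assumes that $C$ avoids $x$ and $y$ (equivalently, that some interval after $C$ still contains a skipped symbol, which is how Lemmas \ref{lemma3} and \ref{lemma4} invoke it). The paper's own proof makes the same tacit assumption when it asserts that everything in $C$ to the right of the last visited symbol is a skipped symbol of the $j$\textsuperscript{th} interval, so you are no worse off here, but you should state the assumption rather than claim it can be derived.
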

\begin{proof}
Let $\pi=(t_1,t_2,\dots,t_{j-1},\dots,s_j,t_j,\dots)$ and let $t_{j-1}$ be the last visited symbol in $C$, say $C=(\dots,t_{j-1},\dots)$. By the construction of sequence length algorithm, $t_{j-1}< c$ for every skipped symbol $c$ in the $j^\text {th}$ interval, and all symbols that lie to the left of $t_{j-1}$ in $\pi$ are greater than $t_{j-1}$. Since $t_{j-1}$ is the last visited symbol in $C$, all the symbols in $C$ that lie to the right of $t_{j-1}$ are skipped symbols in the $j^\text {th}$ interval. Hence $t_{j-1}$ is the minimum element in $C$. The second part of the lemma follows from the definition of a block.
\end{proof}

\section{Algorithm}
The basic principle used in the sequence length algorithm is to obtain a double preceded by singles in at most $\frac{7n}{8}$ moves and thus sort the permutation faster. If a sequence of greedy moves of the sequence length algorithm moves $n$ symbols, of which $\alpha n$ are visited and $(1-\alpha) n$ are skipped until a double is encountered ($0<\alpha \le 1$), then Chitturi \cite{Chitturi2015} has shown that the upper bound for sorting such a permutation is $n-\log_{(\frac{1}{1-\alpha })}n$. Further, by the recursive formula in Section \ref{Recursive formula}, the base of the logarithm is the ratio of the number of symbols moved by the number of symbols skipped. In our algorithm, which is a modified version of the sequence length algorithm, we shall use the concept of \emph{block} defined in the previous section along with some alternate and regular greedy moves to get a double faster. The $i^\text {th}$ interval is assumed to be the first interval in the permutation $\pi$ to have a skipped symbol. Then by Chitturi \cite{Chitturi2008}, $i\le 8$. 

\begin{observation}
\label{ob1}
Let $\pi=(t_1,t_2,\dots,s_j,t_j,\dots,s_k,t_k,\dots,x,y)$ be a permutation with at least $j-1$ skipped symbols in the $j^\text {th}$ interval. Then an upper bound to sort the permutation by prefix transpositions is $n-\log_{2}n$. 
\end{observation}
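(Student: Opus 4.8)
The plan is to apply the recursive formula of Lemma \ref{rf} to the single round of regular greedy moves that carries the first symbol through the first $j$ intervals, exactly as was done for Observation \ref{obs3.1}. By Lemma \ref{rf} the base of the logarithm in the resulting bound is $\frac{x+y}{y}=1+\frac{x}{y}$, where $x$ is the number of visited symbols relocated and $y$ the number of skipped symbols relocated during that round. Since the exponent of $n$ is $-\log_{(x+y)/y}n$ and $\log$ is decreasing in its base, showing that this base is at most $2$ immediately yields the claimed bound $n-\log_2 n$. Thus the whole argument reduces to the single inequality $x\le y$.

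First I would record the structural consequence of the standing convention that the $j^\text{th}$ interval is the first interval of $\pi$ to contain a skipped symbol: intervals $1$ through $j-1$ contain no skipped symbols, so each contributes only its visited symbol. Hence the $j-1$ regular greedy moves that successively bring $t_2,t_3,\dots,t_j$ to the front relocate exactly the visited symbols $t_1,\dots,t_{j-1}$ from the prefix, giving $x=j-1$; this is the same accounting used in Observation \ref{obs3.1}, where $i$ intervals were traversed in $i-1$ moves with $i-1$ relocated visited symbols. Next I would count the skipped symbols: by hypothesis the $j^\text{th}$ interval contains at least $j-1$ skipped symbols, and these lie in the prefix moved when $t_j$ is installed, so they too are relocated, giving $y\ge j-1$. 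Combining, $x=j-1\le y$, so $\frac{x}{y}\le 1$ and the base $1+\frac{x}{y}\le 2$. By Lemma \ref{rf} the bound is $n-\log_{(x+y)/y}n\le n-\log_2 n$, with equality in the base ($=2$) occurring precisely when the $j^\text{th}$ interval has exactly $j-1$ skipped symbols.

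The step I would watch most carefully is the bookkeeping that fixes $x=j-1$ rather than $j$: one must be sure that $t_j$ itself is not counted among the relocated visited symbols of this round, since it becomes the new first symbol and is relocated only in the following round or consumed by the subsequent double. One must likewise confirm that no skipped symbol from an earlier interval contributes to $y$, which is guaranteed by the choice of $j$ as the first interval with a skipped symbol; were this not the case we would have been obliged to work with a smaller index. Once these two counts are justified exactly as in Observation \ref{obs3.1}, the monotonicity of $\log$ in its base does the rest, and the upper bound $n-\log_2 n$ follows.
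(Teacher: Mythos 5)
Your proof is correct and follows essentially the same route as the paper's: the paper likewise notes that the $j-1$ regular greedy moves relocate at least $2(j-1)$ symbols of which at least $j-1$ are skipped, and applies Lemma \ref{rf} to get base $\frac{2(j-1)}{j-1}=2$. Your one superfluous worry, that skipped symbols from earlier intervals might contribute to $y$, is harmless either way, since extra skipped symbols only decrease the base $1+\frac{x}{y}$ and hence strengthen the bound.
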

\begin{proof}
Suppose $\pi$ has at least $j-1$ skipped symbols in the $j^\text {th}$ interval. Then in $j-1$ regular greedy moves of the sequence length algorithm, we move at least $2(j-1)$ symbols among which $j-1$ are skipped. Hence by Lemma \ref{rf}, the upper bound is given by $n-\log_{(\frac{2(j-1)}{j-1})}n=n-\log_{2}n$. 
\end{proof}

\begin{lemma}
\label{lemma1}
Let $\pi=(t_1,\dots,s_2,t_2,\dots,x,y)$ be a permutation that contains a skipped symbol in the second interval. Then an upper bound for sorting $\pi$ by prefix transposition is $n-\log_{2}n$.
\end{lemma}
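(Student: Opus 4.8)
The goal is to reach a double by a sequence of prefix transpositions in which the number of visited symbols is at most the number of skipped symbols; by Lemma~\ref{rf} the base of the logarithm is then at most $2$, so the bound is $n-\log_2 n$. The plan is to exploit the strong structural restrictions that the hypothesis $i=2$ imposes. Since the second interval is the first to carry a skipped symbol, $t_1$ is the unique visited symbol appearing before $s_2$, every skipped symbol of the interval exceeds $t_1$, so $t_1<s_2$, and --- crucially --- $t_1$ is the very first symbol of $\pi$, so nothing lies to its left. This removes at a stroke all the ``to the left of $t_1$'' branches that complicate the analyses of the previous two chapters, and leaves only the position of $s_2+1$ to be located.

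First I would handle the case in which $s_2+1$ lies to the right of $s_2$: then a single prefix transposition that relocates the prefix ending at $s_2$ to the slot immediately before $s_2+1$ creates the adjacency $(s_2,s_2+1)$ while moving just two symbols, of which $s_2$ is skipped, so that Lemma~\ref{rf} already returns base $2$. If the interval happens to contain a second skipped symbol the ratio only improves, and Observation~\ref{ob1} applies directly. The remaining, and genuinely harder, case is when $s_2+1$ lies to the left of $s_2$: because $s_2+1>t_1$ and $t_1$ is first, this forces $s_2+1$ to be a second skipped symbol of the second interval, sitting inside the very prefix one would like to move, so the adjacency $(s_2,s_2+1)$ cannot be manufactured by a single prefix transposition.

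This is exactly where the block machinery enters. My plan is to invoke Lemma~\ref{lemmanew} for the block $C$ having $t_1$ as its last visited symbol: then $t_1=C_{min}$ and $C$ contains the whole integer range $[C_{min},C_{max}]$, so both $s_2$ and $s_2+1$ lie in $C$ and the relative order of $t_1-1$, $s_2$ and $s_2+1$ is pinned down. Using this, I would split the prefix at $s_2+1$ and complete the double in two moves, arranging that the second move skips $s_2$ (and, where present, a further interval symbol) rather than visiting it, so that across the two moves visited $\le$ skipped and Lemma~\ref{rf} again yields base at most $2$. The step I expect to be most delicate is verifying, from Lemma~\ref{lemmanew}, that the contiguous-range property really does place $t_1-1$ and $s_2+1$ so that the two-move sequence terminates in a genuine double while keeping the skipped count at least the visited count; once that placement is confirmed, every case closes with base $\le 2$ and the bound $n-\log_2 n$ follows.
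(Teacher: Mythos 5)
Your Case 1 is fine, but Case 2 contains a genuine gap, and the whole case split on the position of $s_2+1$ is unnecessary. In Case 2 you invoke Lemma~\ref{lemmanew} for ``the block $C$ having $t_1$ as its last visited symbol,'' but you never establish that such a proper block exists: a block is a sublist whose symbols, once sorted, form a contiguous range, and nothing in the hypothesis ``the second interval contains a skipped symbol'' guarantees that any proper sublist containing $t_1$ has this property. Lemma~\ref{lemmanew} moreover explicitly requires $C \ne \pi$, so you cannot fall back on the trivial block. The promised two-move sequence that ``splits the prefix at $s_2+1$'' and terminates in a double is never exhibited, and it is precisely the step that would need the contiguous-range placement of $t_1-1$ and $s_2+1$ that you have not secured. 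As written, Case 2 does not close.

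The paper's proof avoids all of this because it does not try to manufacture the adjacency $(s_2,s_2+1)$ at all. It simply performs the first \emph{regular} greedy move of the sequence length algorithm,
\begin{equation*}
([t_1,\dots,s_2],t_2,\dots,(t_1-1)*,\dots)\rightarrow (t_2,\dots,(t_1-1),t_1,\dots,s_2,\dots),
\end{equation*}
which creates the adjacency $(t_1-1,t_1)$ and relocates the entire second interval in one move. That interval contains the single visited symbol $t_1$ together with at least one skipped symbol $s_2$, so at least two symbols are moved of which at least one is skipped, and Lemma~\ref{rf} immediately gives a base of $\frac{x+y}{y}\le 2$, hence the bound $n-\log_2 n$. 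Note that your ``genuinely harder'' Case 2 is in fact the easier one under this move: if $s_2+1$ is a second skipped symbol of the second interval, the same single move carries at least three symbols of which at least two are skipped, improving the ratio to $\frac{3}{2}$. The idea you are missing is that the adjacency $(t_1-1,t_1)$ is always available here and already sweeps every skipped symbol of the second interval along with the lone visited symbol $t_1$.
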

\begin{proof}
Consider the move first regular move
\begin{equation*}
\begin{gathered}
([t_1,\dots,s_2],t_2,\dots,t_1-1*,\dots)
\rightarrow (t_2,\dots,t_1-1,t_1,\dots,s_2,\dots)
\end{gathered}
\end{equation*}
which moves at least two symbols in a single move. Hence by Lemma \ref{rf}, we obtain an upper bound of $n-\log_{2}n$.
\end{proof}

To establish an upper bound of $n-\log_{2}n$ to sort permutations by prefix transpositions, due to Observation \ref{ob1}, Lemma \ref{lemma1} and the algorithm to sort $R_n$ \cite{Dias2002}, it is enough to prove that this bound is achieved when the number of skipped symbols in the $i^\text {th}$ interval is at most $i-2$, where $3\le i\le 8$. Hence for the rest of the discussions in this section, we shall assume that the $i^\text {th}$ interval contains at most $i-2$ skipped symbols and $3\le i\le 8$.

\begin{observation}
\label{ob2}
Let $\pi=(t_1,t_2,\dots,s_j,t_j,\dots,s_k,t_k,\dots,x,y)$ be a permutation with exactly $j-2$ skipped symbols in the $j^\text {th}$ interval. If the $(j+1)^\text {th}$ interval has more than one skipped symbol, then an upper bound to sort the permutation by prefix transpositions is $n-\log_{2}n$. 
\end{observation}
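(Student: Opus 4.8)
The plan is to mimic the counting argument already used in Observation \ref{ob1}: execute a fixed prefix of the regular greedy moves of the sequence length algorithm, tally the visited and skipped symbols that they relocate, and invoke Lemma \ref{rf}. Recall that Lemma \ref{rf} gives the base of the logarithm as the ratio of the total number of symbols moved to the number of skipped symbols among them, so the target bound $n-\log_2 n$ is obtained precisely when this ratio is at most $2$, i.e. when the number of skipped symbols moved is at least the number of visited symbols moved.

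First I would record the interval structure forced by the hypotheses. Since the $j^\text{th}$ interval is the first one to contain a skipped symbol, the intervals $2,3,\dots,j-1$ each consist of a single visited symbol, namely $t_1,\dots,t_{j-2}$. The $j^\text{th}$ interval contributes the visited symbol $t_{j-1}$ together with exactly $j-2$ skipped symbols, and the $(j+1)^\text{th}$ interval contributes the visited symbol $t_j$ together with at least two skipped symbols. Note that $t_{j+1}$ must exist in order to delimit the $(j+1)^\text{th}$ interval, so the last visited symbol $t_k=x+1$ satisfies $k\ge j+1$; consequently the terminating double cannot occur before move $j$, and the first $j$ moves used below are all legitimate singles.

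Next I would carry out the first $j$ regular greedy moves. Moves $1$ through $j-2$ relocate the single visited symbols $t_1,\dots,t_{j-2}$, moving no skipped symbol yet. Move $j-1$ relocates the $j^\text{th}$ interval, moving $t_{j-1}$ together with its $j-2$ skipped symbols, and move $j$ relocates the $(j+1)^\text{th}$ interval, moving $t_j$ together with its (at least two) skipped symbols. Tallying, these $j$ moves relocate the $j$ visited symbols $t_1,\dots,t_j$ and at least $(j-2)+2=j$ skipped symbols, a total of at least $2j$ symbols. Applying Lemma \ref{rf} with at least $j$ skipped symbols out of at least $2j$ moved, the base of the logarithm is at most $\frac{2j}{j}=2$, any surplus skipped symbols in the $(j+1)^\text{th}$ interval only decreasing it further. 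Hence $n-\log_2 n$ is an upper bound.

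The delicate point is the bookkeeping at the transition from move $j-1$ to move $j$. After move $j-1$ alone the tally is $j-1$ visited against only $j-2$ skipped, giving the running ratio $\frac{2j-3}{j-2}=2+\frac{1}{j-2}>2$; it is exactly the hypothesis that the $(j+1)^\text{th}$ interval carries at least two skipped symbols that lets move $j$ add one visited symbol and at least two skipped symbols, tipping the balance to skipped $\ge$ visited. I would therefore take care to confirm that the two extra skipped symbols genuinely lie in the $(j+1)^\text{th}$ interval (hence are distinct from $s_j$ and from the visited $t_j$), and to verify the count in the boundary cases $j=3$ (where $2,\dots,j-1$ collapses to the single interval $[t_1]$ and the $j^\text{th}$ interval carries one skipped symbol) and $j=8$. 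In every case the inequality skipped $\ge j =$ visited is preserved, so the base never exceeds $2$ and the bound $n-\log_2 n$ follows.
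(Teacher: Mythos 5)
Your proposal is correct and follows essentially the same argument as the paper: run the first $j$ regular greedy moves, count $j$ visited symbols plus $(j-2)+2=j$ skipped symbols for a total of at least $2j$ symbols moved, and apply Lemma \ref{rf} to get a logarithm base of $\frac{2j}{j}=2$. Your additional checks of the interval bookkeeping and boundary cases are sound but not needed beyond what the paper's one-line tally already establishes.
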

\begin{proof}
Suppose $\pi$ has $j-2$ skipped symbols in the $j^\text {th}$ interval and at least two skipped symbols in the $(j+1)^\text {th}$ interval. Then in $j$ regular greedy moves of the sequence length algorithm, we move at least $j+(j-2)+2=2j$ symbols among which $j$ are skipped. Hence by Lemma \ref{rf}, the upper bound is given by $n-\log_{(\frac{2j}{j})}n=n-\log_{2}n$. 
\end{proof}

\begin{lemma}
\label{lemma2}
Let $s$ be any skipped symbol in the $m^\text {th}$ interval of permutation $\pi$. If $s+1$ lies to the right of $t_m$, then $n-\log_{2}n$ is an upper bound for sorting $\pi$.
\end{lemma}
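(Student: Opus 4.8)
The plan is to dispose of this case with a single alternate prefix transposition, in exactly the spirit of Lemma \ref{l3.1} and Lemma \ref{l4.2}. Since $s$ is a skipped symbol in the $m^{\text{th}}$ interval it sits strictly to the left of $t_m$, whereas by hypothesis $s+1$ sits strictly to the right of $t_m$; in particular $s+1$ lies to the right of $s$. The move I would perform is to cut the prefix of $\pi$ ending at $s$ and reinsert it immediately before $s+1$:
\begin{equation*}
([t_1,\dots,s],\dots,t_m,\dots,*(s+1),\dots)\rightarrow(\dots,t_m,\dots,t_1,\dots,s,(s+1),\dots).
\end{equation*}
Because $s+1$ lies to the right of $s$ this is a legitimate (rightward) prefix transposition, and since the moved block ends with $s$ and is reinserted just before $s+1$, it creates the adjacency between $s$ and $s+1$. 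Hence the move is at least a \emph{single}, no matter where $s$ sits inside the interval or whether $s+1$ is itself visited or skipped.

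The heart of the argument is then the bookkeeping behind the recursive formula (Lemma \ref{rf}). Writing $P$ for the length of the moved prefix $(t_1,\dots,s)$, I would note that this prefix always contains at least the two distinct symbols $t_1$ and $s$, so $P\ge 2$. Only $s$ participates in the newly created adjacency; every other symbol of the prefix — in particular the previously visited symbols $t_1,\dots,t_{m-1}$, which the generalised sequence length algorithm is permitted to demote to skipped symbols once they are carried along en bloc — is merely relocated. Thus this single move transports $P$ symbols of which exactly $P-1$ are skipped, and by Lemma \ref{rf} the base of the logarithm is
\begin{equation*}
\frac{P}{P-1}\le 2,
\end{equation*}
with equality only in the extreme case $P=2$. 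Consequently $n-\log_2 n$ is an upper bound for sorting $\pi$, and whenever $P>2$ (which holds throughout the relevant range $m\ge 3$) the bound is in fact strictly better.

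The step I expect to need the most care is not the geometry of the move but the accounting, namely the justification for counting $t_1,\dots,t_{m-1}$ as skipped. This is precisely the licence granted by the definition of the generalised sequence length algorithm — a symbol that is visited in the original algorithm may become a skipped symbol once several symbols are moved together — and it is the same reinterpretation that drives the estimates in Lemma \ref{l3.1} and Lemma \ref{l4.2}. A secondary point worth one sentence is the degenerate possibility that the relocation incidentally produces a second adjacency (a \emph{double}); this only reduces the permutation faster and can therefore only improve the bound, so the single-adjacency count $P-1$ is the binding worst case. With these two observations in place, the inequality $\tfrac{P}{P-1}\le 2$ closes the lemma.
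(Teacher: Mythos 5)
There is a genuine gap. Your opening move is exactly the paper's first move, and the observation that it creates the adjacency $(s,s+1)$ while relocating $P\ge m$ symbols of which $P-1$ may be re-labelled as skipped is correct. But the lemma allows $s$ to be \emph{any} skipped symbol of the $m^{\text{th}}$ interval, so in general there are further skipped symbols of that interval sitting strictly between $s$ and $t_m$. Your prefix $(t_1,\dots,s)$ ends at $s$, so after the move those leftover symbols become the new front of the permutation: $\pi'=(c_1,\dots,c_r,t_m,\dots,t_1,\dots,s,s+1,\dots)$. The sequence length algorithm cannot simply resume from here with the efficiency your ledger assumes: the new first symbol $c_1$ is not $t_m$, its predecessor $c_1-1$ can lie anywhere, and each stranded $c$ will eventually cost a move of its own (a ``visited'' entry) without contributing a new ``skipped'' entry. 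Your ratio $P/(P-1)$ is therefore not the final accounting, and your closing claim that equality $=2$ occurs only at $P=2$ is not the binding case.

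The paper closes exactly this hole: after the first move it performs up to $j-1$ further alternate singles (where $j\le m-2$ is the number of skipped symbols in the $m^{\text{th}}$ interval, the bound $j\le m-2$ coming from Observation \ref{ob1}) to flush every symbol left of $t_m$ past $t_m$, so that the algorithm resumes cleanly. The resulting ledger is $m+(j-1)$ symbols moved in $j$ moves with $m-1$ skipped, giving base $\frac{m+j-1}{m-1}\le\frac{2m-3}{m-1}=2-\frac{1}{m-1}<2$. To repair your argument you must either restrict to the case where $s$ immediately precedes $t_m$ (in which case your one-move accounting is complete), or add the cleanup moves and redo the count as the paper does, invoking Observation \ref{ob1} to cap the number of stranded symbols.
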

\begin{proof}
Consider a permutation of the form $\pi=(t_1,t_2,\dots,t_{m-1},\dots,s,\dots,t_m,\dots,s+1,\dots)$. The alternate move

\begin{equation*}
\begin{gathered}
([t_1,t_2,\dots,t_{m-1},\dots,s],\dots,t_m,\dots,*s+1,\dots)\\
\rightarrow (\dots,t_m,\dots,t_1,\dots,t_{m-1},\dots,s,s+1,\dots)\\[3ex]
\end{gathered}
\end{equation*}
moves at least $m$ symbols of which $m-1$ are skipped. Let $m^\text {th}$ interval have $j$ skipped symbols, note that $j \le m-2$ (Otherwise Observation \ref{ob1} gives a bound of $n -\log_2 n$) and each of these skipped symbols is greater than $t_{m-1}$. Let $s_k$ be the first symbol in the $m^\text {th}$ interval after the first move. We repeat applicable alternate moves of the forms shown below depending on whether $s_k-1$ is to the left or right of $t_m$. Thus, we move all the symbols that are to the left of $t_m$ to its right in at most $j-1$ moves.\\
1: $([s_k,\dots],(s_k-1),*\dots,s_m,t_m,\dots)
\rightarrow ((s_k-1),s_k,\dots,s_m,t_m,\dots)$\\
2: $([s_k,\dots,s_m],t_m,\dots,(s_k-1),* \ldots)$
$\rightarrow (t_m,\dots,(s_k-1),s_k,\dots,s_m,\dots)$\\
Thus, in $j$ moves, $m+(j-1)$ symbols are moved of which $m-1$ are skipped. This produces an upper bound of $n-\log_{(\frac{m+j-1}{m-1})}n$ which maximises to $n-\log_{(\frac{2m-3}{m-1})}n =n-\log_{(2-(\frac{1}{m-1}))}n$ when $j = m-2$. Hence the upper bound in this case is less than $n-\log_{2}n$.
\end{proof}

\begin{lemma}
\label{lemma3}
Let $\pi$ be a permutation with $n$ symbols, $C$ be a block in $\pi$ with $k$ symbols, where $k<n$. If $t_1$ belongs to $C$, a skipped symbol $c_0$ succeeds $C$ in $\pi$ and there is at least one interval in $\pi$ after $C$ that contains a skipped symbol, then an upper bound for sorting the permutation with prefix transpositions is $n-\log_{2}n$.
\end{lemma}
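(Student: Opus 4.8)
The plan is to exploit the density of the block $C$ together with the recursive formula of Lemma \ref{rf}: to obtain the bound $n-\log_2 n$ it suffices to exhibit a sequence of regular and alternate moves whose ratio of symbols moved to symbols skipped is at least $2$. First I would record the structure of $C$. Since $t_1\in C$ and $C$ is a sublist, $C$ must be the prefix $(\pi_1,\dots,\pi_k)$ with $\pi_1=t_1$, so Lemma \ref{lemmanew} applies and gives that the last visited symbol of $C$ is its minimum $C_{min}$ and that $C$ consists of exactly the values in $[C_{min},C_{max}]$; in particular $k=C_{max}-C_{min}+1$, and every skipped symbol of $C$ lies to the right of $C_{min}$, inside the interval that begins at $C_{min}$.

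Next I would pin down $c_0$. As $c_0$ succeeds $C$ and is skipped, it belongs to the interval beginning at $C_{min}$, so $c_0>C_{min}$; since $c_0\notin[C_{min},C_{max}]$ this forces $c_0>C_{max}$, i.e. $c_0\ge C_{max}+1$. If $c_0=C_{max}+1$, then $C\cup\{c_0\}$ is again a block (its values are exactly $[C_{min},C_{max}+1]$ and it is contiguous in $\pi$), so I may enlarge $C$; hence, enlarging if necessary, I may assume $c_0>C_{max}+1$, which places $C_{max}+1$ strictly to the right of $c_0$.

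The core of the argument is then a short alternate-move construction, preceded by clearing the easy configurations. If the successor of $c_0$, or the successor of the skipped symbol guaranteed by hypothesis in a later interval, lies to the right of its interval's visited symbol, then Lemma \ref{lemma2} already yields $n-\log_2 n$; likewise Observations \ref{ob1} and \ref{ob2} dispose of the cases with too many skipped symbols concentrated in a single interval. In the remaining case I would move the prefix of $C$ ending at $C_{max}$ to the position immediately before $C_{max}+1$: this is a legal prefix transposition that creates the adjacency $(C_{max},C_{max}+1)$ while relocating the bulk of the block, and I would then use the later skipped symbol promised by the hypothesis to manufacture a second adjacency in the same fashion. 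Because every value $v\ne C_{max}$ of $C$ has $v+1$ also in $C$, rearranging the block produces internal adjacencies that pay for the block's own skipped symbols, so that across these moves the number of symbols moved is at least twice the number skipped; Lemma \ref{rf} then delivers the bound $n-\log_2 n$.

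The main obstacle is precisely the counting when $C$ carries several internal skipped symbols, since naively each such symbol drags the ratio below $2$. The resolution rests entirely on $C$ being a run of consecutive values: this density both guarantees a matching internal adjacency for almost every internal skip and lets the two external skipped symbols ($c_0$ and the later one) be absorbed as the pivots of adjacency-creating moves. The delicate part is therefore to organize the positional subcases for $C_{max}+1$ and for the later skipped symbol so that, in each, the inequality moved $\ge 2\cdot$ skipped holds simultaneously with the reductions supplied by Lemma \ref{lemma2} and Observations \ref{ob1} and \ref{ob2}.
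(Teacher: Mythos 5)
Your structural setup is sound: $C$ is a prefix of $\pi$, Lemma \ref{lemmanew} gives that $C$ consists of the values $[C_{min},C_{max}]$ with $C_{min}=t_j$ the last visited symbol of $C$, and every trailing skipped symbol exceeds $C_{max}$. But the core of your argument has a genuine gap, and it stems from looking at the wrong successor. The paper closes the lemma in one step by a maximality argument on the \emph{entire} run of skipped symbols $c_0,c_1,\dots,c_l$ sitting between $C$ and the next visited symbol $t_{j+1}$: let $c_{m^\star}=\max\{c_0,\dots,c_l\}$. If $c_{m^\star}+1$ lay to the left of $t_{j+1}$, it would have to be one of the other $c_m$ (impossible by maximality) or an element of $C$; but $c_{m^\star}+1\in(C_{min},C_{max}]$ would force $c_{m^\star}\in[C_{min},C_{max}]$ and hence $c_{m^\star}\in C$, while $c_{m^\star}+1=C_{min}=t_j$ would force $c_{m^\star}=t_j-1<t_j$, contradicting that every skipped symbol of the $(j+1)^{\text{th}}$ interval exceeds $t_j$. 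So some $c_m$ has $c_m+1$ to the right of $t_{j+1}$, Lemma \ref{lemma2} applies unconditionally, and there is no remaining case.

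You instead test only $c_0+1$ (and the successor of one later skipped symbol), which leaves a genuinely non-vacuous ``remaining case'' in your framing --- for instance $c_0+1$ may be $c_1$, still left of $t_{j+1}$ --- and your treatment of that case is not a proof. The move ``shift the prefix of $C$ ending at $C_{max}$ to just before $C_{max}+1$'' is never analysed: that prefix may be a single symbol (if $C_{max}=t_1$), the position of $C_{max}+1$ relative to $t_{j+1}$ is not pinned down, and the claim that rearranging the block ``produces internal adjacencies that pay for the block's own skipped symbols'' so that the moved-to-skipped ratio reaches $2$ is asserted without any bookkeeping; Lemma \ref{rf} cannot be invoked on an unverified ratio. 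A further loose end: your enlargement step (absorbing $c_0=C_{max}+1$ into $C$) can consume all of $c_0,\dots,c_l$ and leave a block followed by a \emph{visited} symbol, which exits the hypotheses of Lemma \ref{lemma3} and lands in Lemma \ref{lemma4}'s territory without acknowledgement. Replacing your case split by the maximality argument above repairs the proof and renders the rest of your construction unnecessary.
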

\begin{proof}
Suppose that $C=(t_1,t_2,\dots,t_j,\dots)$ is a proper sublist of $\pi$ that forms a block with $t_j$ being the last visited symbol in $C$, note that by Lemma \ref{lemmanew}, $t_j$ is the smallest number in $C$. Then $\pi=((t_1,t_2,\dots,t_j,\dots),c_0,c_1,\dots,c_l,t_{j+1},\dots,x,y)$, where $c_0,c_1,\dots,c_l$ are skipped symbols in the $(j+1)^\text {th}$ interval. Let $c_{m^\star}=\text{max}\{c_0,c_1,\dots,c_l\}$. If each $c_m+1$ lies to the left of $t_{j+1}$ in $\pi$ for $m=0,1,\dots,l$, then $c_{m^\star}+1$ lies in the block. This is possible only if $c_{m^\star}+1$ is the smallest number $t_j$ in the block by Lemma \ref{lemmanew}. Then by the sequence length algorithm, $c_{m^\star}$ is a visited symbol in $\pi$, a contradiction. So, there is at least one $c_m, 0\le m \le l$, such that $c_m+1$ lies to the right of $t_{j+1}$. Hence, by Lemma \ref{lemma2}, $n-\log_{2}n$ is an upper bound.
\end {proof}

\begin{lemma}
\label{lemma4}
Let $\pi$ be a permutation with $n$ symbols and $C$ be a block in $\pi$ with $k_1$ symbols, where $k_1<n$. If $t_1$ belongs to the block $C$, a visited symbol succeeds $C$ in $\pi$ and there is at least one interval in $\pi$ after $C$ that contains a skipped symbol, then an upper bound for sorting the permutation with prefix transpositions is $n-\log_{2}n$.
\end{lemma}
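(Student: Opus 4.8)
The plan is to reduce the statement to Lemma~\ref{lemma2} by exhibiting a skipped symbol $s$, lying in some $m^{\text{th}}$ interval, whose successor $s+1$ sits to the right of $t_m$; once such an $s$ is found, Lemma~\ref{lemma2} (which internally invokes the recursive formula of Lemma~\ref{rf}) delivers the bound $n-\log_{2}n$. First I would fix notation as in Lemma~\ref{lemma3}: write $C=(t_1,t_2,\dots,t_j,\dots)$ for the prefix block, so that by Lemma~\ref{lemmanew} its last visited symbol $t_j$ equals $C_{\min}$ and $C$ is exactly the set of integers in $[C_{\min},C_{\max}]$. Let $t_{j+1}$ be the visited symbol that, by hypothesis, succeeds $C$. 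Since visited symbols have strictly decreasing distance to $x$, we get $t_{j+1}<t_j=C_{\min}$, and every visited symbol of $\pi$ (those inside $C$ and those after it) is at most $C_{\max}$. Hence $C_{\max}+1$ is larger than every visited symbol and does not belong to $C$, so it cannot equal $t_{j+1}$ and must lie strictly to the right of $t_{j+1}$.

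The argument then splits on the status of $C_{\max}$. If $C_{\max}$ is a \emph{skipped} symbol, then it is one of the skipped symbols of the $(j+1)^{\text{th}}$ interval sitting inside $C$, and $t_{j+1}$ is precisely the next visited symbol of that interval. Taking $s=C_{\max}$ and $m=j+1$, the successor $s+1=C_{\max}+1$ lies to the right of $t_{j+1}=t_m$ by the previous paragraph, so Lemma~\ref{lemma2} applies directly.

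The remaining case is $C_{\max}$ \emph{visited}; since the visited symbols decrease, this forces $C_{\max}=t_1$, and now $C_{\max}+1$ strictly exceeds every visited symbol of $\pi$ and (barring the boundary collisions noted below) is a skipped symbol lying after $C$. Here I would run a maximality argument in the spirit of Lemma~\ref{lemma3}. Suppose, for contradiction, that every skipped symbol $s$ occurring after $C$ had its successor $s+1$ to the \emph{left} of the visited symbol of its interval. Beginning at $C_{\max}+1$ this manufactures a strictly increasing chain $C_{\max}+1<C_{\max}+2<\cdots$ whose terms are all skipped and all lie after $C$ (each term exceeds $C_{\max}$, hence is neither visited nor inside $C$, and since $C$ is a prefix nothing precedes $C$). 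The chain cannot continue forever, and at its top the successor is forced to be one of the terminal symbols $x$ or $y$, which occupy the last two positions of $\pi$ and so lie to the right of any $t_m$; this contradicts the standing assumption. Therefore some skipped symbol after $C$ has its successor to the right of its interval's visited symbol, and Lemma~\ref{lemma2} again yields $n-\log_{2}n$. The hypothesis that at least one interval after $C$ carries a skipped symbol is exactly what keeps this chain non-empty, and in the degenerate subcase $t_{j+1}=C_{\min}-1$ one may instead enlarge $C$ to the block $[\,t_{j+1},C_{\max}\,]$ and recurse, landing in Lemma~\ref{lemma3} as soon as a skipped symbol succeeds the enlarged block.

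The step I expect to be the main obstacle is the boundary bookkeeping modulo $n$: verifying that $C_{\max}+1$, and the top of the increasing chain, genuinely behave as claimed when they collide with the distinguished symbols $x$, $y$, or when the successor wraps around (the case $s=n-1$, $s+1\equiv 0 \pmod n$). In these situations one must confirm that the relevant successor still lands to the right of $t_m$, using the fixed rightmost positions of $x$ and $y$ together with the block-containment property of $C$ from Lemma~\ref{lemmanew}; this is the only place where the clean value comparisons ``$t_{j+1}<C_{\min}$'' and ``every visited symbol $\le C_{\max}$'' must be re-examined against the cyclic definition of $dist$.
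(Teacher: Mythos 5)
Your strategy is genuinely different from the paper's: you try to reduce everything to Lemma~\ref{lemma2} by locating a skipped symbol $s$ whose successor $s+1$ lies to the right of its interval's visited symbol, splitting on whether $C_{\max}$ is skipped or visited. The first half works: when $C_{\max}$ is skipped, $C_{\max}+1\notin C$ must lie after the prefix block $C$, hence to the right of the relevant $t_m$, and Lemma~\ref{lemma2} applies (this is essentially the maximality argument the paper uses in Lemma~\ref{lemma3}, transplanted inside $C$; your claim that $C_{\max}$ must sit in the $(j+1)^{\text{th}}$ interval is not needed and not always true, but the conclusion survives). The second half, however, has a genuine gap that is not ``boundary bookkeeping.'' When $C_{\max}=t_1$, your chain $C_{\max}+1, C_{\max}+2,\dots$ can be empty or can terminate without producing any usable skipped symbol, because $C_{\max}+1$ may equal $y$ or wrap around to $x$. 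Concretely, take $\pi=(7,6,3,5,4,1,0,2)\in S_8$: it is reduced, $x=0$, $y=2$, the visited symbols are $7,6,3,1$ and the skipped symbols are $5,4$ in the fourth interval; $C=(7,6)$ is a block containing $t_1$ with $C_{\max}=t_1=7$, the visited symbol $3$ succeeds $C$, and an interval after $C$ contains skipped symbols, so all hypotheses of the lemma hold. Yet $5+1=6$ and $4+1=5$ both lie to the \emph{left} of $t_4=1$, so no skipped symbol in $\pi$ satisfies the hypothesis of Lemma~\ref{lemma2}, and $C_{\max}+1\equiv 0=x$ so your chain never starts. Your proof strategy provably cannot establish the bound for this permutation.

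This failure mode is not incidental: it is exactly the hard case the paper spends most of its proof on. When every $c_m+1$ lies to the left of $t_{j+2}$, the paper shows (via Lemma~\ref{lemmanew}) that the skipped symbols after $C$ are forced to be the consecutive values $t_j-k,\dots,t_j-1$, and then constructs bespoke sequences of alternate moves (Cases \ref{lemma4}.2.1 and \ref{lemma4}.2.2, with further subcases on $k<j$ versus $k=j$ and on whether $C$ itself contains skipped symbols, eventually passing to an enlarged block $C_1$) to accumulate enough skipped symbols per move to reach the $n-\log_2 n$ ratio. None of that machinery is replaceable by an appeal to Lemma~\ref{lemma2}, so your proposal is missing the bulk of the argument rather than a routine verification.
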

\begin{proof}
Let $C=(t_1,t_2,\dots,t_j,\dots)$ be a block with $t_j$ being the last visited symbol in $C$. Then $\pi=((t_1,t_2,\dots,t_j,\dots),t_{j+1},\dots,x,y)$. By definition, a block has at least 2 elements. Thus, if $j=1$ then $\pi$ has a skipped symbol in the second interval. So, by Lemma \ref{lemma1}, $n-\log_{2}n$ is an upper bound. We assume that $j\ge2$. 

If $t_{j+1}= t_j-1$, then we consider the new block including $t_{j+1}$ and restart the proof again if the element after the new block is visited. Note that this can happen only up to eight times. Otherwise, ${R_8}$ is present. If the element after the new block is a skipped symbol, then by Lemma \ref{lemma3}, $n-\log_{2}n$ is an upper bound.

Suppose $t_{j+1}\ne t_j-1$, then $ t_{j+1}= t_j-l $, for some $l>1$. Further, the symbols $(t_j-l+1)$ and $t_j-1$ are not in $C$ and hence they lie to the right of $ t_{j+1}$. Let $\beta$ be the symbol next to $t_{j+1}$ in $\pi$. Clearly $\beta \ne (t_j-l+1)$ because this would form an adjacency in $\pi$, which is not possible as the permutation we consider is reduced. Hence $\pi=((t_1,t_2,\dots,t_j,\dots),t_j-l,\beta,\dots,(t_j-l+1),\dots)$.\\

\emph{Case \ref{lemma4}.1}: If $\beta$ is a visited symbol ($\beta=t_{j+2}$), then the move
\begin{equation*}
\begin{gathered}
([(t_1,t_2,\dots,t_j,\dots),t_j-l],t_{j+2},\dots,*(t_j-l+1),\dots)\\
\rightarrow(t_{j+2},\dots,(t_1,t_2,\dots,t_j,\dots),t_j-l,(t_j-l+1),\dots)\\[3ex]
\end{gathered}
\end{equation*}
moves at least $j+1$ symbols of which $j$ are skipped. This produces an upper bound of $n-\log_{(\frac{j+1}{j})}n$ which maximises the base of the logarithm to $\frac{3}{2}$ when $j=2$. Thus $n-\log_{(\frac{3}{2})}n$ is an upper bound.\\

\emph{Case \ref{lemma4}.2}: If $\beta$ is a skipped symbol (say $\beta=c_1$), then the permutation is given by $\pi=((t_1,t_2,\dots,t_j,\dots),t_j-l,c_1,\dots,c_k,t_{j+2},\dots)$ where $c_1,\dots,c_k$ are skipped symbols in the $(j+2)^\text {th}$ interval. If $c_m+1$ lies to the right of $t_{j+2}$ for at least one $m =1,2,\dots,k$, then by Lemma \ref{lemma2}, $n-\log_{2}n$ is an upper bound.

Suppose that $c_m+1$ lies to the left of $t_{j+2}$ for all $m =1,2,\dots,k$. By the sequence length algorithm, every skipped symbol in the $(j+2)^\text {th}$ interval is greater than $t_j-l$. Let $c_M=max\{c_1,\dots,c_k\}$. Then $c_M+1$ is an element in $C$ and by Lemma \ref{lemmanew}, $c_M+1=t_j \Rightarrow c_M=t_j-1$. Further, for all the other skipped symbols $c_m$, except for $m=M$, $c_m+1$ lies in the $(j+2)^\text {th}$ interval by Lemma \ref{lemmanew}. Hence all the skipped symbols are necessarily consecutive symbols from $t_j-k$ to $t_j-1$.\\ 

\noindent\textbf{Statement 1:} $\pi=((t_1,t_2,\dots,t_j,\dots),t_j-l,c_1,\dots,c_k,t_{j+2},\dots)$, where $c_1,\dots,c_k$ are consecutive symbols from $t_j-k$ to $t_j-1$.
Note that $t_j-l=t_{j+1}$ is a visited symbol and $c_i>t_j-l$ for $1\le i \le k$ as they are skipped elements. So $t_j-k > t_j-l$ which gives us $k<l$. Here we shall consider two cases: (1) $k<l-1$ and (2) $k=l-1$\\

\emph{Case \ref{lemma4}.2.1}: Suppose that $k<l-1$, then $(t_j-l+1)$ and $(t_j-k-1)$ lies to the right of $t_{j+2}$. Note that $(t_j-l+1)=(t_j-k-1)$ if $k=l-2$. Here $\pi=((t_1,t_2,\dots,t_j,\dots),t_j-l,c_1,\dots,c_k,t_{j+2},\dots,(t_j-l+1),\dots)$, where the symbols $c_1, c_2,\dots,c_k$ are a rearrangement of $(t_j-k,t_j-k+1, \dots, t_j-1)$. Further by Observation \ref{ob1}, we need to only consider the case when the number of skipped symbols in the $(j+2)^\text {th}$ interval is at most $j$, hence $k \le j$.

(i) Let $k < j$. Consider the following sequence of alternate moves. The first move moves at least $j+1$ symbols of which $j$ are skipped.

\begin{equation*}
\begin{gathered}
([(t_1,t_2,\dots,t_j,\dots),t_j-l],c_1,\dots,c_k,t_{j+2},\dots,*(t_j-l+1),\dots)\\
\rightarrow(c_1,\dots,c_k,t_{j+2},\dots,(t_1,t_2,\dots,t_j,\dots),t_j-l,(t_j-l+1),\dots)\\[3ex]
\end{gathered}
\end{equation*}
Next, we make the following move repeatedly until $t_j-k$ becomes the first symbol of $\pi$. Note that this is possible by Statement 1 in $k-1$ moves.
\begin{equation*}
\begin{gathered}
([c_m,\dots],(c_m-1),*\dots,t_{j+2},\dots) \rightarrow((c_m-1),c_m,\dots,t_{j+2},\dots)
\end{gathered}
\end{equation*}
The last move in the sequence is

\begin{equation*}
\begin{gathered}
([t_j-k,\dots],t_{j+2},\dots,(t_j-k-1),*\dots)\\
\rightarrow(t_{j+2},\dots,(t_j-k-1),t_j-k,\dots)\\[3ex]
\end{gathered}
\end{equation*}
Here at least $(j+1)+k$ symbols are moved in at most $k+1$ moves of which $j$ are skipped. This produces an upper bound of $n-\log_{(\frac{j+k+1}{j})}n \le n-\log_{2}n$ by Lemma \ref{rf}.

(ii) Let $k = j$. By Observation \ref{ob2}, the $(j+3)^\text {th}$ interval has at most one skipped symbol (say $c$). Then $\pi=((t_1,t_2,\dots,t_j,\dots),t_j-l,c_1,\dots,c_k,t_{j+2},c,t_{j+3}\dots)$.
If $(t_{j+2}+1)$ lies to the right of $t_{j+3}$, then in the following two moves we move $(j+1)+j+2=2j+3$ symbols of which $2j+1$ are skipped. So, by Lemma \ref{rf}, an upper bound is given by $n-\log_{(\frac{2j+3}{2j+1})}n=n-\log_{(1+\frac{2}{2j+1})}n$ which is less than $n-\log_{2}n$ since $j\ge2$.
\begin{equation*}
\begin{gathered}
([(t_1,t_2,\dots,t_j,\dots),t_j-l,c_1,\dots,c_k,t_{j+2}],c,t_{j+3},\dots,*(t_{j+2}+1),\dots)\\
\rightarrow(c,t_{j+3},\dots,(t_1,t_2,\dots,t_j,\dots),t_j-l,c_1,\dots,c_k,t_{j+2},(t_{j+2}+1),\dots)\\
([c],t_{j+3},\dots,*(c+1),\dots)
\rightarrow (t_{j+3},\dots,c,(c+1),\dots)\\[3ex]
\end{gathered}
\end{equation*}

If $(t_{j+2}+1)$ lies to the left of $t_{j+3}$, then $(t_{j+2}+1)=t_j-l$. Here the permutation becomes $\pi=((t_1,t_2,\dots,t_j,\dots),t_j-l,c_1,\dots,c_k,(t_j-l-1),c,t_{j+3},\dots)$. Consider the position of $(t_j-l+1)$ in $\pi$. In $\pi$, note that as shown in the beginning of this case $(t_j-l+1)$ should lie to the right of $t_j+2=(t_j-l-1)$.

If $c \ne (t_j-l+1)$, $\pi=((t_1,t_2,\dots,t_j,\dots),t_j-l,c_1,\dots,c_k,(t_j-l-1),c,t_{j+3},\dots,(t_j-l+1),\dots)$. In the following three moves we move $2j+3$ symbols of which $2j$ are skipped. So, by Lemma \ref{rf}, an upper bound is given by $n-\log_{(\frac{2j+3}{2j})}n=n-\log_{(1+\frac{3}{2j})}n$ which is less than $n-\log_{2}n$ since $j\ge2$. Note that if the skipped symbol $c$ does not exist then we will execute the first two alternate moves mentioned above and we would have moved $2j+2$ elements in two moves, giving us a bound of $n-\log_{(\frac{2j+2}{2j})}n=n-\log_{(1+\frac{1}{j})}n$ which is less than $n-\log_{2}n$ since $j\ge2$.
\begin{equation*}
\begin{gathered}
([(t_1,\dots,t_j,\dots),t_j-l],c_1,\dots,c_k,(t_j-l-1),c,t_{j+3},\dots,*(t_j-l+1),\dots)\\
\rightarrow (c_1,\dots,c_k,(t_j-l-1),c,t_{j+3},\dots,(t_1,\dots,t_j,\dots),t_j-l,(t_j-l+1),\dots)\\
([c_1,\dots,c_k,(t_j-l-1)],c,t_{j+3},\dots,(t_1,\dots,t_j,\dots),*t_j-l,(t_j-l+1),\dots)\\
\rightarrow (c,t_{j+3},\dots,(t_1,\dots,t_j,\dots),c_1,\dots,c_k,(t_j-l-1),t_j-l,(t_j-l+1),\dots)\\
([c],t_{j+3},\dots,*(c+1),\dots)
\rightarrow (t_{j+3},\dots,c,(c+1),\dots)\\[3ex]
\end{gathered}
\end{equation*}

If $c = (t_j-l+1)$ and $k<l-2$, then $(t_j-k-1)$ lies to the right of $t_{j+3}$ and $\pi=((t_1,t_2,\dots,t_j,\dots),t_j-l,c_1,\dots,c_k,(t_j-l-1),(t_j-l+1),t_{j+3},\dots,(t_j-k-1),\dots)$. we shall consider the following sequence of alternate moves. The first move 
\begin{equation*}
\begin{gathered}
([(t_1,\dots,t_j,\dots),t_j-l],c_1,\dots,(t_j-l-1),*(t_j-l+1),t_{j+3},\dots,(t_j-k-1),\dots)\\
\rightarrow (c_1,\dots,(t_j-l-1),(t_1,\dots,t_j,\dots),t_j-l,(t_j-l+1),t_{j+3},\dots,(t_j-k-1), \dots)\\
\end{gathered}
\end{equation*}
moves at least $j+1$ symbols of which $j$ are skipped. Next, we make the following move
\begin{equation*}
\begin{gathered}
([c_m,\dots],(c_m-1),*\dots,(t_j-l-1),\dots)\\
\rightarrow((c_m-1),c_m,\dots,(t_j-l-1),\dots)\\[3ex]
\end{gathered}
\end{equation*}
repeatedly until $t_j-k$ becomes the first symbol in $\pi$, noting that this is possible due to Statement 1. This can be attained in $k-1$ moves. The last move is 
\begin{equation*}
\begin{gathered}
([t_j-k,\dots,(t_j-l-1),\dots,(t_j-l+1)],t_{j+3},\dots,(t_j-k-1),*\dots)\\
\rightarrow(t_{j+3},\dots,(t_j-k-1),t_j-k,\dots)\\[3ex]
\end{gathered}
\end{equation*}
Here at least $(j+1)+k+2$ symbols are moved in at most $k+1$ moves of which $j+2$ are skipped. This produces an upper bound of $n-\log_{(\frac{j+k+3}{j+2})}n= n-\log_{(\frac{2j+3}{j+2})}n \le n-\log_{2}n$ by Lemma \ref{rf}. 

If $c = (t_j-l+1)$ and $k=l-2$, then $(t_j-k-1)=(t_j-l+1)$ and $\pi=((t_1,t_2,\dots,t_j,\dots),t_j-l,c_1,\dots,c_k,(t_j-l-1),(t_j-l+1),t_{j+3},\dots)$. Further all the symbols before $t_{j+3}$ in $\pi$ form a block with least element $(t_j-l-1)$, here we use Statement $1$. We will call this block $C_1$. Here we shall consider two cases:

\emph{Case} (a): If the initial block $C=(t_1,\dots,t_j,\dots)$ contains skipped symbols, then $C$ has at least $j+1$ symbols. Here we shall execute the sequence of moves as in the previous case where $c = (t_j-l+1)$ and $k<l-2$, until $t_j-k$ becomes the first symbol in $\pi$. The next couple of moves is 
\begin{equation*}
\begin{gathered}
([t_j-k,\dots],(t_j-l-1),\dots,(t_j-k-1)*,t_{j+3},\dots)\\
\rightarrow((t_j-l-1),\dots,(t_j-k-1),t_j-k,\dots,t_{j+3},\dots)\\
\end{gathered}
\end{equation*}
\begin{equation*}
\begin{gathered}
([(t_j-l-1),\dots,(t_j-k-1),t_j-k,\dots],t_{j+3},\dots,(t_j-l-2)*\dots)\\
\rightarrow (t_{j+3},\dots,(t_j-l-2),(t_j-l-1),\dots,(t_j-k-1),t_j-k,\dots)\\[3ex]
\end{gathered}
\end{equation*}
Here at least $(j+2)+k+2$ symbols are moved in at most $k+2$ moves of which $j+2$ are skipped. This produces an upper bound of $n-\log_{(\frac{j+k+4}{j+2})}n = n-\log_{2}n$ by Lemma \ref{rf}. 

\emph{Case} (b): Suppose that there are no skipped symbols in block $C$. Then we shall consider the block $C_1$, instead of block $C$, mentioned above, noting that the block $C_1$ has skipped symbols $c_1,\dots,c_k$. Further the visited symbol $t_{j+3}$ follows $C_1$ and hence either of the cases \ref{lemma4}.1 or \ref{lemma4}.2.1 - (except case (b)) or \ref{lemma4}.2.2- (except case(b)) applies giving us the required upper bound. We will prove Case \ref{lemma4}.2.2 below.

\emph{Case \ref{lemma4}.2.2}: Suppose $k=l-1$. Then $\pi=((t_1,t_2,\dots,t_j,\dots,t_j-l,c_1,\dots,(t_j-l+1),\dots),t_{j+2},\dots)$, where $(t_j-l+1)$ is a skipped symbol in the $(j+2)^\text {th}$ interval due to Statement 1. Also, by Lemma \ref{lemmanew} and Statement 1, all the symbols to the left of $t_{j+2}$ form a block with least symbol $t_j-l$. 

If $k < j$, consider the following sequence of alternate moves. The first move 

\begin{equation*}
\begin{gathered}
([(t_1,t_2,\dots,t_j,\dots,t_j-l],c_1,\dots,*(t_j-l+1),\dots),t_{j+2},\dots)\\
\rightarrow(c_1,\dots,(t_1,t_2,\dots,t_j,\dots,t_j-l,(t_j-l+1),\dots),t_{j+2},\dots)
\end{gathered}
\end{equation*}
\\moves at least $j+1$ symbols of which $j$ are skipped. Next, we make the move
\begin{equation*}
\begin{gathered}
([c_m,\dots],(c_m-1),*\dots,t_{j+2},\dots)
\rightarrow((c_m-1),c_m,\dots,t_{j+2},\dots)
\end{gathered}
\end{equation*}
repeatedly until $(t_j-l+2)$ becomes the first symbol in $\pi$, noting that these moves are possible by Statement 1. This can be attained in $k-2$ moves as $(t_j-l+1)$ would not be the first symbol in any of these moves. The next two moves in the sequence are 
\begin{equation*}
\begin{gathered}
([(t_j-l+2),\dots],t_j-l,(t_j-l+1),*\dots,t_{j+2},\dots)\\
\rightarrow(t_j-l,(t_j-l+1),(t_j-l+2),\dots,t_{j+2},\dots)\\
([t_j-l,(t_j-l+1),(t_j-l+2),\dots],t_{j+2},\dots,(t_j-l-1),*\dots)\\
\rightarrow t_{j+2},\dots,(t_j-l-1),t_j-l,(t_j-l+1),(t_j-l+2),\dots)\\[3ex]
\end{gathered}
\end{equation*}
Here at least $(j+1)+k$ symbols are moved in at most $k+1$ moves of which $j$ are skipped. This produces an upper bound of $n-\log_{(\frac{j+k+1}{j})}n \le n-\log_{2}n$ by Lemma \ref{rf}. Note that the same moves work even if $ t_j+2=t_j-l-1$. 

If $k = j$, then there are $j$ skipped symbols in the $(j+2)^\text {th}$ interval. Hence by Observation \ref{ob2}, the $(j+3)^\text {th}$ interval has at most one skipped symbol (say $c$). Then $\pi=((t_1,t_2,\dots,t_j,\dots),t_j-l,c_1,\dots,(t_j-l+1),\dots,t_{j+2},c,t_{j+3}\dots)$.

If $(t_{j+2}+1)$ lies to the right of $t_{j+3}$, then in the following two moves 
\begin{equation*}
\begin{gathered}
([(t_1,\dots,t_j,\dots),t_j-l,c_1,\dots,(t_j-l+1),\dots,t_{j+2}],c,t_{j+3},\dots,*(t_{j+2}+1),\dots)\\
\rightarrow(c,t_{j+3},\dots,(t_1,\dots,t_j,\dots),t_j-l,c_1,\dots,(t_j-l+1),\dots,t_{j+2},(t_{j+2}+1),\dots)\\
([c],t_{j+3},\dots,*c+1,\dots)
\rightarrow (t_{j+3},\dots,c,c+1,\dots)\\[3ex]
\end{gathered}
\end{equation*}
we move $(j+1)+j+2=2j+3$ symbols of which $2j+1$ are skipped. So, by Lemma \ref{rf}, an upper bound is given by $n-\log_{(\frac{2j+3}{2j+1})}n=n-\log_{(1+\frac{2}{2j+1})}n$ which is less than $n-\log_{2}n$ since $j\ge2$. Note that if $c$ does not exist, then the first alternate move mentioned above should suffice giving us a bound of $n-\log_{(\frac{2j+2}{2j+1})}n=n-\log_{(1+\frac{1}{2j+1})}n$ which is less than $n-\log_{2}n$ since $j\ge2$.

If $(t_{j+2}+1)$ lies to the left of $t_{j+2}$, then $(t_{j+2}+1)=t_j-l$ and hence $((t_1,t_2,\dots,t_j,\dots),t_j-l,c_1,\dots,(t_j-l+1),\dots,t_{j+2})$ is a block, by Statement 1, using the fact that $k=j=l-1$. We will call this block $C_1$. Here we shall consider two cases:

\emph{Case} (a): If the initial block $C=(t_1,\dots,t_j,\dots)$ contains skipped symbols, then $C$ has at least $j+1$ symbols. Here, we consider the following sequence of alternate moves where the first move
\begin{equation*}
\begin{gathered}
([(t_1,t_2,\dots,t_j,\dots,t_j-l],c_1,\dots,*(t_j-l+1),\dots),t_{j+2},\dots)\\
\rightarrow(c_1,\dots,(t_1,t_2,\dots,t_j,\dots,t_j-l,(t_j-l+1),\dots),t_{j+2},\dots)\\[3ex]
\end{gathered}
\end{equation*}
moves at least $j+2$ symbols of which $j+1$ are skipped. Next, we make the following move
\begin{equation*}
\begin{gathered}
([c_m,\dots],(c_m-1),*\dots,t_{j+2},\dots)
\rightarrow((c_m-1),c_m,\dots,t_{j+2},\dots)
\end{gathered}
\end{equation*}
repeatedly until $(t_j-l+2)$ becomes the first symbol in $\pi$, noting that these moves are possible by Statement 1. This can be attained in at most $j-2$ moves as $(t_j-l+1)$ would not be the first symbol in any of these moves. The next two moves in the sequence are given by
\begin{equation*}
\begin{gathered}
([(t_j-l+2),\dots],t_j-l,(t_j-l+1),*\dots,t_{j+2},\dots)\\
\rightarrow(t_j-l,(t_j-l+1),(t_j-l+2),\dots,t_{j+2},\dots)\\
([t_j-l,(t_j-l+1),(t_j-l+2),\dots],t_{j+2},\dots,(t_j-l-1),*\dots)\\
\rightarrow t_{j+2},\dots,(t_j-l-1),t_j-l,(t_j-l+1),(t_j-l+2),\dots)\\[3ex]
\end{gathered}
\end{equation*}
Here at least $(j+2)+j=2j+2$ symbols are moved in at most $j+1$ moves of which $j+1$ are skipped. This produces an upper bound of $n-\log_{(\frac{2j+2}{j+1})}n=n-\log_{2}n$ by Lemma \ref{rf}.

\emph{Case} (b): Suppose that there are no skipped symbols in block $C$. Then we shall consider the block $C_1$, instead of block $C$, mentioned above, note that the block $C_1$ has skipped symbols $c_1,\dots,c_k$. If a skipped element follows $C_1$ in $\pi$ then we get $n-\log_{2}n$ bound by Lemma \ref{lemma3}. If a visited element follows $C_1$ then either of the cases \ref{lemma4}.1 or \ref{lemma4}.2.1 (except case (b)) or \ref{lemma4}.2.2 (except case (b)) applies giving us the required upper bound.
\end{proof}

\begin{lemma}
\label{lemma5}
Let $\pi$ be a permutation with $n$ symbols, in which $(t_1,t_2)$ is not a block and the only block containing the first symbol is whole of $\pi$ .Then the upper bound for sorting $\pi$ using prefix transpositions is $n-\log_{2}n$.
\end{lemma}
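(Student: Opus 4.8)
The plan is to show that, when the first symbol lies in no proper block, the execution of the generalised sequence length algorithm either triggers one of the reductions already proved in this section or admits a short burst of alternate moves whose ratio of moved symbols to skipped symbols is at least $2$, so that Lemma \ref{rf} delivers the base-$2$ logarithm.

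First I would clear away the boundary cases exactly as in the preceding lemmas: if $R_8'$ is a prefix, the algorithm of Dias et al. gives $\frac{3n}{4}$; if the first skipped symbol already occurs in the second interval, Lemma \ref{lemma1} applies; and if the first interval carrying a skip, say the $i$-th with $3\le i\le 8$, holds at least $i-1$ skipped symbols, Observation \ref{ob1} already yields $n-\log_2 n$. Thus I may assume the prefix $P=(t_1,\dots,t_{i-1},c_1,\dots,c_p)$ preceding the visited symbol $t_i$ that follows the first skipped symbols has $p\le i-2$ skips. Since $P$ is a proper sublist containing the first symbol, the hypothesis forces $P$ \emph{not} to be a block. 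This is the decisive use of the no-block assumption, and it is precisely what prevents us from simply invoking Lemma \ref{lemma4}.

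The core of the argument is a dichotomy on the successors of the skipped symbols of $P$. By Lemma \ref{lemmanew} every $c_m$ exceeds $t_{i-1}$, so the largest skipped symbol $c_M$ has $c_M+1$ either to the right of $t_i$ --- in which case Lemma \ref{lemma2}, taken with $s=c_M$ in the $i$-th interval, closes the argument at once --- or to the left of $t_i$, where it must be one of the visited symbols $t_1,\dots,t_{i-1}$, as it exceeds every skipped symbol. I would then sweep over the remaining $c_m$: if any of them has successor to the right of $t_i$, Lemma \ref{lemma2} again finishes. The delicate situation is when every skipped successor lies to the left. In the block lemmas this configuration produced a consecutive run of skipped symbols capped above by a visited symbol, i.e. a block; here I would exploit the same mechanism to exhibit a proper block $C'$ that need not contain $t_1$ but is followed by the intervals that still carry skips, and hand the tail of the permutation to Lemma \ref{lemma3} or Lemma \ref{lemma4}, whose hypotheses are then met by $C'$. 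When the successors are capped by visited symbols in a way that does not close up into a contiguous block, I would instead carry out an explicit sequence of alternate moves of the two forms used in Lemma \ref{lemma2}, verify that it transports at least twice as many symbols as it skips, and appeal to Lemma \ref{rf}.

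I expect the main obstacle to be exactly this last configuration: consecutiveness of \emph{values}, which the no-block hypothesis controls, is not the same as contiguity of \emph{positions}, which a block requires, so the ``all successors to the left'' case does not automatically present a usable block. Making the gap argument watertight --- guaranteeing that the smallest value missing above $t_{i-1}$ is capped by a skipped rather than a visited symbol, so that Lemma \ref{lemma2} fires, or else packaging the symbols into a block $C'$ eligible for Lemmas \ref{lemma3}--\ref{lemma4} --- is where the real work lies. I anticipate this needs a careful induction on the number of intervals processed, mirroring the ``restart with a new block'' device used at the end of Lemma \ref{lemma4}.
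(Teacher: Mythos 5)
Your proposal sets up the right dichotomy (for each skipped symbol $c$, either $c+1$ lies to the right of the next visited symbol and Lemma \ref{lemma2} fires, or it lies to the left and forces structure), but it does not close the case that the lemma exists to handle, and you say so yourself (``where the real work lies''). Worse, the fallback you propose for that case cannot work: Lemmas \ref{lemma3} and \ref{lemma4} both require that $t_1$ belong to the block $C$, and the hypothesis of the present lemma is precisely that the only block containing $t_1$ is all of $\pi$ --- so no proper block eligible for those lemmas exists, and a block $C'$ not containing $t_1$ does not satisfy their hypotheses (nor can you apply them to a ``tail'' of $\pi$, since prefix transpositions always act on the actual prefix). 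The block lemmas are categorically unavailable here; that is why this lemma needs its own argument.

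What the paper actually does in the hard case is iterate the successor analysis: each time every relevant successor lies to the left, the permutation is forced one step further into the explicit alternating form $(s_3+1,\,s_3-l,\,s_3,\,(s_3-l-1),\,s_3-1,\,(s_3-l-2),\,s_3-2,\dots)$, and after $l$ rounds the no-proper-block hypothesis forces $\pi$ to equal one of two completely explicit permutations $C_1$ or $C_2$ on $2l+2$ (resp.\ $2l+1$) symbols. These are then sorted directly by relabelling the symbols in odd positions as visited and those in even positions as skipped, running the regular greedy moves until $(s_3-l+2)$ reaches the front (one symbol skipped per move), and finishing with a double; this creates $l$ adjacencies in $l-1$ moves while skipping $l$ symbols, which gives the ratio $2$ needed for Lemma \ref{rf}. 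Both the identification of the exceptional permutations and the explicit sorting sequence for them are missing from your proposal, so as written it has a genuine gap rather than an alternative route.
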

\begin{proof}
Suppose that $\pi$ is a permutation in which $t_1$ is not contained in a block with less than $n$ symbols. Then $t_2 \ne t_1-1$. If the second symbol in $\pi$ is a skipped symbol, then by Lemma \ref{lemma1}, $n-\log_{2}n$ is an upper bound. Hence, we shall assume $\pi=(t_1,t_1-k,\dots,x,y)$ where $k>1$. Here we shall consider the third symbol $\alpha$ in $\pi$.

If $\alpha$ is a visited symbol ($\alpha=t_3$) then $(t_1-k+1)$ being a skipped symbol lies to the right of $t_3$. Here $\pi=(t_1,t_1-k,t_3,\dots,(t_1-k+1),\dots,x,y)$ and in one move

\begin{equation*}
\begin{gathered}
([t_1,t_1-k],t_3,\dots,*(t_1-k+1),\dots,x,y)\\
\rightarrow(t_3,\dots,t_1,t_1-k,(t_1-k+1),\dots,x,y)\\[3ex]
\end{gathered}
\end{equation*}
two symbols are moved and one skipped. Hence the upper bound for sorting $\pi$ is $n-\log_{2}n$.

If $\alpha$ is a skipped symbol ($\alpha=s_3$), then by Observation \ref{ob1}, it is the only skipped symbol in the third interval. By the sequence length algorithm, $t_1-k=s_3-l$ for some $l>1$. If $s_3+1$ lies to the right of $t_3$, then by Lemma \ref{lemma2}, the upper bound is less than $n-\log_{2}n$. Hence, we shall only consider the case when $\pi=(s_3+1,s_3-l,s_3,t_3,\dots,x,y)$, where $l>1$. Let $\beta$ be the symbol that follows $t_3$ in $\pi$.

\emph{Case \ref{lemma5}.1}: When $\beta$ is a visited symbol ($\beta=t_4$, then $s_3-1$ lies to the right of $t_4$. In the following two alternate moves
\begin{equation*}
\begin{gathered}
([s_3+1,s_3-l],s_3,*t_3,t_4,\dots,s_3-1,\dots)\\
\rightarrow (s_3,s_3+1,s_3-l,t_3,t_4,\dots,s_3-1,\dots)\\
([s_3,s_3+1,s_3-l,t_3],t_4\dots,s_3-1,*\dots)\\
\rightarrow(t_4,\dots,s_3-1,s_3,s_3+1,s_3-l,t_3,\dots)\\[3ex]
\end{gathered}
\end{equation*}
four symbols are moved of which two are skipped. Hence the upper bound in this case is $n-\log_{2}n$.

\emph{Case \ref{lemma5}.2}: When $\beta$ is a skipped symbol then, by Observation \ref{ob2}, $\beta=s_4$ is the only skipped symbol in the fourth interval. Then permutation $\pi$ equals $(s_3+1,s_3-l,s_3,t_3,s_4,t_4,\dots,x,y)$. If $s_4+1$ lies to the right of $t_4$, by Lemma \ref{lemma2}, the upper bound is less than $n-\log_{2}n$. 

If $t_3+1$ lies to the right of $t_4$, then the following two moves 
\begin{equation*}
\begin{gathered}
([s_3+1,s_3-l,s_3,t_3],s_4,t_4,\dots,*t_3+1,\dots)\\
\rightarrow(s_4,t_4,\dots,s_3+1,s_3-l,s_3,t_3,t_3+1,\dots)\\
([s_4],t_4,\dots,*s_4+1,\dots)
\rightarrow(t_4,\dots,s_4,s_4+1,\dots)\\[3ex]
\end{gathered}
\end{equation*}
will move five symbols of which three are skipped. Hence the upper bound in this case is $n-\log_{(\frac{5}{3})}n$.

Now we shall consider the case when $s_4+1$ and $t_3+1$ lies to the left of $t_3$. This is possible only when $s_4+1=s_3$ and $t_3+1=s_3-l$ in $\pi$. Then $\pi=(s_3+1,s_3-l,s_3,(s_3-l-1),s_3-1,t_4,\dots,x,y)$. Here we repeat cases \ref{lemma5}.1 and \ref{lemma5}.2 in the lemma by assuming $\beta$ to be the symbol that follows $t_4$ in $\pi$. Using similar alternate moves we shall obtain an upper bound of $n-\log_{2}n$ for all cases except when $\pi=(s_3+1,s_3-l,s_3,(s_3-l-1),s_3-1,(s_3-l-2),s_3-2,t_5,\dots,x,y)$. Repeating the same argument $l$ times for increasing values of $i$, the subscript of $t_i$, we get the following.\\

\noindent\textbf{Statement 2} If $\pi$ is a permutation in which $(t_1,t_2)$ is not a block, then we obtain an upper bound of $n-\log_{2}n$ unless the permutation in any one of the forms given below:
\begin{equation*}
\begin{gathered}
\pi=(s_3+1,s_3-l,s_3,\dots,(s_3-2l+1),(s_3-l+1),s_3-2l,\dots,x,y).\\
\pi=(s_3+1,s_3-l,s_3,\dots,(s_3-2l+1),(s_3-l+1),\dots,x,y).\\
\end{gathered}
\end{equation*}
Note that the elements from $s_3+1$ up to $s_3-2l$ form a block, say $C_1$, in the first case, with $s_3-2l$ the smallest element and $s_3+1$ the largest element. In the second case, elements from $s_3+1$ up to $s_3-l+1$ form a block, say $C_2$, with $s_3-2l+1$ the smallest element and $s_3+1$ the largest element. Since the smallest block containing $t_1=s_3+1$ is the entire permutation $\pi$, we get $\pi=C_1$ or $\pi=C_2$. We shall consider these two cases below:

\emph{Case} (i): If $\pi=(s_3+1,s_3-l,s_3,(s_3-l-1),s_3-1,(s_3-l-2),\dots,(s_3-l+1),(s_3-2l))$. Then $\pi$ contains $2l+2$ symbols. We relabel the symbols in odd positions as visited and the symbols in even position as skipped. Then we do the regular greedy moves of the sequence length algorithm until $(s_3-l+2)$ becomes the first symbol. This would be accomplished in $l-2$ moves since every alternate symbol starting from first symbol $s_3+1$ are in consecutive decreasing order and we skip exactly one symbol in each move. The first two moves are given below
\begin{equation*}
\begin{gathered}
([s_3+1,s_3-l],s_3,*(s_3-l-1),s_3-1,(s_3-l-2),\dots,(s_3-l+1),s_3-2l)\\
\rightarrow(s_3,s_3+1,s_3-l,(s_3-l-1),s_3-1,(s_3-l-2),\dots,(s_3-l+1),s_3-2l)\\
([s_3,s_3+1,s_3-l,(s_3-l-1)],s_3-1,*(s_3-l-2),\dots,(s_3-l+1),s_3-2l)\\
\rightarrow(s_3-1,s_3,s_3+1,s_3-l,(s_3-l-1),(s_3-l-2),\dots,(s_3-l+1),s_3-2l)\\[3ex]
\end{gathered}
\end{equation*}
When $(s_3-l+2)$ becomes the first symbol in $\pi$, we execute the move
\begin{equation*}
\begin{gathered}
([(s_3-l+2),\dots,(s_3-2l-1)],\dots,(s_3-l+1),*s_3-2l)\\
\rightarrow(\dots,(s_3-l+1),(s_3-l+2),\dots,(s_3-2l-1),s_3-2l)\\[3ex]
\end{gathered}
\end{equation*}
which is a double. In the following sequence of moves 
we create $l$ adjacencies in $l-1$ moves and skip $l$ symbols. Hence, an upper bound in this case is less than $n-\log_{2}n$.\\

\emph{Case} (ii): If $\pi=(s_3+1,s_3-l,s_3,(s_3-l-1),s_3-1,(s_3-l-2),\dots,(s_3-2l+1),(s_3-l+1))$, we proceed as in case (i) until $(s_3-l+2)$ becomes the first symbol in $\pi$, the next move is given by 

\begin{equation*}
\begin{gathered}
([(s_3-l+2),..s_3+1],\dots,(s_3-l+1),*)\\
\rightarrow(\dots,(s_3-l+1),(s_3-l+2),\dots,s_3+1)\\[3ex]
\end{gathered}
\end{equation*}
which is a double as it creates an adjacency and places $s_3+1$, the largest element in $\pi$ in the last position (creating another adjacency).
So, the sequence of moves shown above forms $l$ adjacencies in $l-1$ moves and skips $l$ symbols yielding an upper bound of less than $n-\log_{2}n$.
\end{proof}

\begin{theorem}
\label{theorem1}
An upper bound for sorting permutations with $n$ symbols using prefix transposition is $n-\log_{2}n$.
\end{theorem}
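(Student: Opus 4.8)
The plan is to reduce the theorem entirely to the lemmas already established, by a case analysis driven by the block structure of the leading symbol $t_1$. First I would dispose of the trivial reductions: by Christie's reduction it suffices to treat reduced permutations, and if $R_8'$ is a prefix of $\pi$ then the algorithm of Dias et al.\ sorts $\pi$ in $\frac{3n}{4}$ moves, which does not exceed $n-\log_2 n$ for large $n$. So I would henceforth assume $\pi$ is reduced and does not contain $R_8'$ as a prefix; then the first interval carrying a skipped symbol is the $i$-th with $i\le 8$, the index satisfies $i\ge 2$ since no symbol precedes $t_1$, and the case $i=2$ is already settled by Lemma \ref{lemma1}.

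For the remaining permutations I would split on whether $t_1$ lies in a proper block. If the only block containing $t_1$ is all of $\pi$ (equivalently, $(t_1,t_2)$ is not a block), Lemma \ref{lemma5} directly gives $n-\log_2 n$. Otherwise $t_1$ lies in a proper block $C$ with $k<n$ symbols; here I would inspect the symbol immediately following $C$ together with whether some interval to the right of $C$ still carries a skipped symbol. When such a later skipped symbol is present, the successor of $C$ is either skipped, whence Lemma \ref{lemma3} applies, or visited, whence Lemma \ref{lemma4} applies, and both return $n-\log_2 n$. The residual possibility, that no interval after $C$ contains a skipped symbol, I would handle by enlarging $C$ greedily, exactly as in the recursive step internal to Lemma \ref{lemma4}: if the successor of $C$ is the visited symbol $C_{\min}-1$ it extends the block (the value range stays an interval of consecutive integers, so the enlarged set is still a block), and this can repeat at most eight times before $R_8'$ appears; the moment a skipped symbol or a visited symbol different from $C_{\min}-1$ appears after the enlarged block, we fall into the hypotheses of Lemma \ref{lemma3} or Lemma \ref{lemma4}.

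The bookkeeping that ties everything together is that a smaller logarithmic base yields a smaller (better) upper bound, so it is enough to verify that every branch produces a base at most $2$: Observations \ref{ob1} and \ref{ob2} furnish the two tight branches achieving base exactly $2$, while Lemma \ref{lemma2} and the block-based moves of Lemma \ref{lemma4} give bases strictly below $2$. The main obstacle I anticipate is exhaustiveness rather than any individual estimate: I must argue that every reduced, non-$R_8'$ permutation falls under Lemma \ref{lemma5}, or, after at most eight block enlargements, under Lemma \ref{lemma3} or Lemma \ref{lemma4}, and in particular that the hypothesis "some later interval contains a skipped symbol" can always be arranged or else collapses to Observation \ref{ob1}, Observation \ref{ob2}, or the $R_8'$ case. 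The structural lemma I would lean on throughout is Lemma \ref{lemmanew}, which identifies the least element of a block with its last visited symbol and forces the block to carry every value in $[C_{\min},C_{\max}]$; this is what guarantees that the enlargement process terminates and that the top-level dichotomy is complete.
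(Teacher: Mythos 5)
Your overall skeleton matches the paper's: split on whether $t_1$ lies in a proper block, apply Lemma \ref{lemma5} when the only block containing $t_1$ is all of $\pi$, and otherwise dispatch to Lemma \ref{lemma3} or Lemma \ref{lemma4} according to whether the symbol succeeding the block is skipped or visited. The branch you handle incorrectly is the terminal one: a proper block $C$ containing $t_1$ after which \emph{no} interval contains a skipped symbol. You propose to escape it by greedily enlarging $C$ until $R_8'$ appears or until Lemma \ref{lemma3} or \ref{lemma4} applies, but both of those lemmas require a skipped symbol in some interval to the right of the block, and in this branch there is none by hypothesis --- the suffix after $C$ is a run of visited symbols decreasing by one down to $x,y$, so every enlargement just absorbs more of that run and never creates the needed skipped symbol. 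Nor does $R_8'$ reliably appear as a prefix: $R_8'=(7+i,\alpha,6+i,\dots,1+i,i,\beta)$ requires the consecutive decreasing run to start at $t_1-1$, whereas your run starts at $C_{\min}-1=t_j-1$, which need not equal $t_1-1$. So the enlargement loop neither terminates into a handled case nor covers the permutations of the form $\pi=(C,R_{n_2})$.

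The paper closes exactly this gap with an additive, two-phase argument that is absent from your proposal: it observes that $C$ is itself a sub-permutation satisfying the hypotheses of Lemma \ref{lemma5} (so it can be sorted in at most $n_1-\log_2 n_1$ moves, $n_1=|C|$), that after sorting $C$ the whole permutation reduces to the reverse permutation $R_{n_2+1}$ with $n_2=n-n_1$, which the Dias--Meidanis algorithm sorts in at most $\tfrac{3(n_2+1)}{4}+O(1)$ moves, and then verifies the arithmetic
\begin{equation*}
n_1-\log_{2}n_1+\tfrac{3n_2+3}{4}\;<\;n_1+n_2-\log_{2}(n_1+n_2)\;=\;n-\log_{2}n .
\end{equation*}
Without this composite bound (or some substitute for it), your case analysis is not exhaustive and the proof does not go through. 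Everything else in your outline --- the reduction to reduced permutations, the $R_8'$ prefix case, the use of Lemma \ref{lemmanew} to control the block's value range, and the dispatch to Lemmas \ref{lemma1}--\ref{lemma5} --- is consistent with the paper.
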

\begin{proof}
Let $\pi=(t_1,t_2,\dots,s_i,t_i,\dots,x,y)$ be a permutation with $n$ symbols. If $(t_1,t_2)$ forms a block in $\pi$, then by Lemma \ref{lemma3} and \ref{lemma4}, an upper bound of $n-\log_{2}n$ holds. Consider the case when $(t_1,t_2)$ is not a block, then by Statement 2, we get an upper bound of $n-\log_{2}n$ unless $\pi=(s_3+1,s_3-l,s_3,\dots,(s_3-2l+1),(s_3-l+1),s_3-2l,\dots,x,y)$ or $\pi=(s_3+1,s_3-l,s_3\dots,(s_3-2l+1),(s_3-l+1),\dots,x,y)$. We know that $C_1=(s_3+1,s_3-l,s_3,\dots,(s_3-2l+1),(s_3-l+1),s_3-2l)$ is a block in the first case and $C_2=(s_3+1,s_3-l,s_3,\dots,(s_3-2l+1),(s_3-l+1))$ is a block in the second case. Let $C$ be $C_1$ or $C_2$ depending on whether $\pi$ is of the first or the second case. If there is at least one interval in $\pi$ after $C$, that contains a skipped symbol, then by Lemmas \ref{lemma3} and \ref{lemma4}, $n-\log_{2}n$ is an upper bound to sort $\pi$ using prefix transpositions. Now we shall consider the remaining case where there is no interval after $C$ that contains a skipped symbol. This case can be partitioned into two sub-cases.

\emph{Sub-case} 1: If $\pi$ forms a single block, that is $\pi$ is either $C_1$ or $C_2$ then Lemma \ref{lemma5} establishes the upper bound.

\emph{Sub-case} 2: Either $\pi=(s_3+1,s_3-l,s_3,\dots,(s_3-2l+1),(s_3-l+1),s_3-2l,\dots,x,y)$ or $\pi=(s_3+1,s_3-l,s_3\dots,(s_3-2l+1),(s_3-l+1),\ldots x, y)$.
That is, $\pi= (C_1, \ldots x, y)$ or $\pi= (C_2, \ldots x, y)$. $C$ denotes either $C_1$ or $C_2$ depending on the context. Here all the symbols between $C$ and $y$ are visited symbols, with no skipped symbols in between them. Thus, the elements between $C$ and y will be decreasing elements of the form $(j, j-1, j-2, \dots)$. 
Let $C$ contain $n_1$ symbols and let $n_2=n-n_1$, then $\pi=(C, R_{n_2})$, where $R_{n_2}$ is the reverse permutation with $n_2$ elements.
$C$ is a permutation, a sub-permutation of $\pi$ with $n_1$ symbols which satisfies the conditions in Lemma \ref{lemma5} and hence can be sorted in at most $n_1-\log_{2}n_1$ moves.
After sorting $C$, $\pi$ reduces to $R_{n_2+1}$. 
This resultant permutation can be sorted with Dias and Meidanis sequence \cite{Dias2002} in at most $\frac{3(n_2+1)}{4} +O(1)$ moves. Thus, an upper bound to sort $\pi$ is $(n_1-\log_{2}n_1+\frac{3n_2+3}{4}) =n_1+n_2-(\log_{2}n_1+\frac{n_2}{4}-\frac{3}{4})<n_1+n_2-\log_{2}(n_1+n_2)=n-\log_{2}n$. Thus, the theorem follows.
\end{proof}

\chapter{Summary, Conclusions, and Scope for Further Research}

Prefix transpositions were introduced and studied by Dias, and Meidanis \cite{Dias2002} in 2002. In this introductory article, they provided an algorithm to sort the reverse permutation $R_n$ in $n-\lfloor{\frac{n}{4}\rfloor}$ prefix transpositions. As $R_n$ is considered to be the hardest permutation to sort, it is conjectured that $\frac{3n}{4}$ is an upper bound to sort a permutation with $n$ symbols. Chitturi et al. \cite{Chitturi2008} in 2008 defined the sequence length algorithm that improved the upper bound from $n-1$ to $n-\log_8n$. This thesis uses the sequence length algorithm to improve the upper bound to sort permutations with prefix transposition to $n-\log_3n$ and then to $n-\log_2n$. 

In chapters 3 and 4, we considered the first interval to contain unvisited symbols ($s_i$ is the last unvisited symbol in the interval) and restricted the number of unvisited symbols in it. Then we introduced some alternate moves in the sequence length algorithm depending on the position of the symbol $s_i+1$ to improve the upper bound. In Chapter 5, we defined a block in a permutation and defined alternate moves to get an upper bound of $n-\log_2n$ by considering the following cases:
\begin{itemize}
\item if a skipped symbol succeeds a block C containing the first symbol, and there are skipped symbols in $\pi$ after C.
\item if a visited symbol succeeds a block C containing the first symbol and there are skipped symbols in $\pi$ after C.
\item when $t_1$ and $t_2$ does not form a block.
\item if the only block in C with the first symbol $t_1$ is the whole of $\pi$.
\end{itemize}

Due to Lemma \ref{rf}, the recursive formula, it may seem that by adding additional alternate moves, one can achieve an upper bound of $n-\log_{(1+\epsilon)}n;(\epsilon >0)$ using the sequence length algorithm. But this is not the case. Consider a permutation $\pi$ such that every interval in the permutation has exactly one skipped symbol, and for each skipped symbol $s$, $s+1$ lies to the left of $s$ in $\pi$. Then the permutation can be written as either $\pi=(s_3+1,s_3-l,s_3,\dots,(s_3-2l+1),(s_3-l+1),s_3-2l)$ or $\pi=(s_3+1,s_3-l,s_3,\dots,(s_3-2l+1),(s_3-l+1))$. From cases (i) and (ii) in Lemma \ref{lemma5}, we proved that $n-\log_2n$ is an upper bound to sort $\pi$ by prefix transpositions using the sequence length algorithm and alternate moves. The only three prefix transpositions on $\pi$ that guarantee a single are given by moving $s_3+1$ after $s_3$ or moving $[s_3+1,s_3-l]$ after $s_3$ or moving $[s_3+1,s_3-l]$ in front of $(s_3-l+1)$. In all these cases, we are moving a maximum of two symbols in one move, giving us an upper bound of $n-\log_2n$. As there are no other moves possible, this is the best upper bound we can get using the sequence length algorithm and the alternate moves. Hence, we see that this algorithm cannot be used to improve the upper bound further. So even though there is a large gap between the present upper bound of $n-\log_2n$ and the conjectured upper bound of $\frac{3n}{4}$, we presume that a completely new technique similar to $R_n$ needs to be introduced for further improvement of this upper bound.

\newpage
\addcontentsline{toc}{chapter}{\hspace*{.53cm}References}
\bibliography{prefix}
\bibliographystyle{plain}

\chapter*{List of publications based on the research work} \addcontentsline{toc}{chapter}{\hspace*{.53cm}List of publications based on the research work}
{\Large\textbf{International Journal }}
\begin{enumerate}
\item Nair, P.P., Sundaravaradhan, R., Chitturi, B., A new upper bound for sorting permutations with prefix transpositions, \emph{Discrete Mathematics, Algorithms and Applications}, 12(6):2050077, 2020, .
\item Nair, P.P., Sundaravaradhan, R., Chitturi, B., Improved upper bound for sorting permutations with prefix transpositions, \emph{Theoretical Computer Science}, 896:158-167, 2021.
\end{enumerate}
{\Large\textbf{International Conference}}
\begin{enumerate}
\item[3.] Nair, P.P., Sundaravaradhan, R., An improved upper bound for genome rearrangement by prefix transpositions, \emph{Proceedings - 2020 Advanced Computing and Communication Technologies for High Performance Applications, ACCTHPA 2020}, 115–119, 2020.
\end{enumerate}
\end{document}